\newcommand{\dalemb}{\mathop{\mathpalette\dalemb@\relax}}
\newcommand{\dalemb@}[2]{%
	\begingroup
	\sbox\z@{$\m@th#1\square$}%
	\dimen0=\fontdimen8
	\ifx#1\displaystyle\textfont\else
	\ifx#1\textstyle\textfont\else
	\ifx#1\scriptstyle\scriptfont\else
	\scriptscriptfont\fi\fi\fi3
	\makebox[\wd\z@]{%
		\hbox to \ht\z@{%
			\vrule width \dimen0
			\kern-\dimen0
			\vbox to \ht\z@{
				\hrule height \dimen0 width \ht\z@
				\vss
				\hrule height 1.5\dimen0
			}%
			\kern-2\dimen0
			\vrule width 2\dimen0
		}%
	}%
	\endgroup
}
\theoremstyle{plain}
\newtheorem{theorem}{Theorem}[section]
\theoremstyle{plain}
\newtheorem{lemma}[theorem]{Lemma}
\theoremstyle{plain}
\newtheorem{corollary}[theorem]{Corollary}
\theoremstyle{plain}
\newtheorem{proposition}[theorem]{Proposition}
\theoremstyle{definition}
\newtheorem{definition}[theorem]{Definition}
\theoremstyle{remark}
\newtheorem*{remark}{Remark}
\theoremstyle{plain}
\newenvironment{manualtheorem}[1]{%
	\manualtheoreminner
}{\endmanualtheoreminner}
\theoremstyle{definition}
\newenvironment{manualdefinition}[1]{%
	\manualdefinitioninner
}{\endmanualdefinitioninner}
\begin{document}


	\begin{titlepage}

\begin{centering}

{\huge
  \begin{singlespace}
  The Master Ward Identity\\
  for the Complex Scalar Field
	\footnote[1]{Bachelor's Thesis completed at the Institute for Theoretical Physics on May 25, 2020.}
\end{singlespace}
}

\vspace*{2cm}

\large
\begin{onehalfspace}
	Luis Peters \footnote[2]{Email: \texttt{luis.peters@stud.uni-goettingen.de}}\\
\end{onehalfspace}
Institute for Theoretical Physics\\
Georg-August-Universität Göttingen\\
Friedrich-Hund-Platz 1, 37077 Göttingen, Germany \\
\onehalfspacing
March 5, 2021

\vspace{4cm}

\end{centering}

	\begin{center}
		\textbf{Abstract}
	\end{center}
	
	The Master Ward Identity (MWI) gives a universal formulation of the symmetries of a classical field theory. It is a renormalization condition for the time ordered products of the corresponding quantum field theory. We show that the MWI for a complex scalar field with quartic interaction can be satisfied, with the current, the interaction and all their submonomials as allowed arguments. The proof is performed in the framework of deformation quantization combined with causal perturbation theory, which is summarized and introduced. Some examples of Ward Identities following from the proven MWI are given.

\end{titlepage}

	\tableofcontents
	\addtocontents{toc}{\protect\thispagestyle{empty}}
	
	\newpage
	
	\pagenumbering{arabic}
	

	\addchap{Introduction}
	\label{sec:introduction}
	\clearpairofpagestyles
\ohead{Introduction}
\KOMAoptions{
	headsepline = true}
\cfoot{\pagemark}

Quantum field theory is a framework to describe the physics of quantum matter, which was mainly developed for the purposes of particle physics, that is phenomena taking place at small scales and high energies. Therefore QFT amounts to combining elements of quantum mechanics with the principle of relativity. From an experimental point of view, the accuracy of the standard model formulated in terms of perturbative QFT gives a clear demonstration of the descriptive power of quantum field theory -- especially in quantum electrodynamics. However, from a theoretical perspective fundamental questions remain open: Establishing the Standard Model -- or even one of its subtheories -- as a mathematically complete and consistent theory remains an unsettled challenge \cite{fredrehr}. These theoretical shortcomings lead to other approaches to QFT than the perturbative, namely axiomatic ones, that do not start from the phenomena to be described, but from general principles every QFT should satisfy and then proceed to construct such theories. This in turn may lead to models that do not describe any of the physical phenomena in our world, or rather special cases like restrictions to lower space dimensions. The approach taken in this thesis can be considered as an intermediate one, involving both perturbative an axiomatic elements. \\

One of the main issues of perturbative quantum field theory in its textbook formulation is the appearance of divergent quantities \cite[Chap. 14]{sred07}. They can be classified into three different kinds. First, IR divergences occur due to wrong assumptions on the region where interactions take part. We will address this by localizing interactions in a compact domain of spacetime. The second kind of divergences arises in the UV regime, where local interaction require to take products of functionals at a point, which are not a priori well defined. These quantities will have to be renormalized in an appropriate way, namely in the framework of causal perturbation theory. The last type of divergences concerns the overall sum of the perturbative expansions. We will not fix this issue, that is we deal with formal power series where no convergence is implied. \\

The purpose of our formulation of QFT is to clarify the connection between classical and quantum symmetries, like the conservation of certain currents. Many textbooks give these relations in terms of so called Ward Identities \cite[Chap. 7.4]{peskin95}. The viewpoint we take is to formulate classical symmetries in terms of the Master Ward Identity (MWI), which we impose as an additional condition on the quantum theory, whereby it is not clear whether this can always be satisfied. The aim of this thesis is to show that the MWI can be satisfied for a complex scalar field with quartic interaction, adopting a proof given by Michael Dütsch and Klaus Fredenhagen for the case of QED \cite{duetschfred99}. \\

The guiding principle of our approach is to consider a QFT as a modified (quantized) version of a classical theory, where we want to maintain as many of the structural properties of the classical theory as possible. This general framework is described in chapter \ref{chap:found}, where we mainly follow \cite[chap. 1-3]{duetsch19}. The aim is to give a concise outline of the approach we use, introducing all main objects and stating the results which will be needed for our proof of the MWI. We will shortly point out the physical ideas behind the basic definitions, but won't perform any comprehensive discussions. Starting point is the introduction of a framework of classical scalar field theory in section \ref{sec:clft}  which is well suited for the transition to quantum fields. Fields are described as a certain class of functionals on configuration space, that form a commutative $*$-algebra, on which a Poisson structure is introduced that encodes the free dynamics. Interactions are then described perturbatively in terms of free fields. In section \ref{sec:defquant} the free classical theory is quantized by deforming the commutative classical algebra into a non-commutative quantum algebra of formal power series in $\hbar$. Interactions are reintroduced into the quantum theory in section \ref{sec:interact} by using causal perturbation theory, which amounts to understanding the process of renormalization as the extension of distributions to certain points. This completes the construction of an interacting QFT. \\

Chapter \ref{chap:symms} introduces the framework of the MWI as a universal formulation of symmetries. In section \ref{sec:symms} the MWI is derived from the properties of classical fields and its relation to Noether's theorem is discussed. Then we clarify its status as a renormalization condition for interacting quantum fields and present the anomalous MWI, which gives a characterization of the possible violations of the MWI that can occur in the quantized theory. \\

Chapter \ref{chap:proof} contains our main original investigations. It is shown that for a complex scalar quantum field with quartic interaction the relevant MWI corresponding to the global $U(1)$ symmetry can always be satisfied. Section \ref{sec:prelims} presents four statements that will be required and proves two of them. In section \ref{sec:proofMWI} the actual proof is performed. It proceeds by induction on the number of arguments of the anomalous map and shows that the anomaly can be removed at every order by an admissible finite renormalization. The last section \ref{sec:examples} concludes the thesis by calculating some exemplary Ward Identities that follow from the MWI we have shown.

\clearpairofpagestyles
\automark[section]{section}
\ohead{\headmark}
\KOMAoptions{
	headsepline = true}
\cfoot{\pagemark}

	\chapter{Foundations: From classical to quantum field theory}
	\label{chap:found}
	
	\section{Classical Field Theory}
	\label{sec:clft}
	This section introduces the framework of classical field theory and its perturbative formulation. The space of free fields consists of functionals on configuration space, endowed with the structure of a commutative Poisson $*$-algebra. Interacting fields can be described as formal power series in the coupling constant. Our goal is to give a formulation of classical theory that carries over nicely into the quantum world. 

\subsection{Kinematics} 

We throughout will be describing the case of a single scalar field to keep the expressions as simple as possible. Hence we take the configurations of the field $\phi$ to be described by smooth real-valued functions on $d$-dimensional Minkowski space $\mathbb{M}_d\equiv\mathbb{M}$. The complex case is introduced in section \ref{sec:complex}.

\begin{definition}
	The \textit{configuration space} of real scalar theory is the space 
	\begin{equation*}
	\mathcal{E} := \mathcal{C}^{\infty} (\mathbb{M},\mathbb{R}) \; .
	\end{equation*}
\end{definition}

\begin{definition}
	The \textit{basic field} $\varphi(x)$ is the evaluation functional at $x$ on the confi-guration space
	\begin{equation*}
		\!
		\begin{aligned}
			\varphi: \mathbb{M} &\rightarrow \mathcal{E}' \\
			x &\rightarrow \varphi(x)
		\end{aligned}
		\qquad \qquad
		\text{and}
		\qquad \qquad
		\!
		\begin{aligned}
			\varphi(x): \mathcal{E} &\rightarrow \mathbb{R} \\
			h &\rightarrow h(x) \; .
		\end{aligned}
	\end{equation*}
	The \textit{partial derivatives} of the basic fields are defined as
	\begin{align*}
		\partial^a \varphi(x): \mathcal{E} &\rightarrow \mathbb{R} \\
		h &\rightarrow \partial^a h(x) \; ,
	\end{align*}
	where $a = (a_1, \dots, a_d) \in \mathbb{N}^d$ is a multi-index. A \textit{general field} is a function $F$ of the basic field $\varphi$, where $F(\varphi)$ is a $\mathbb{C}$-valued functional on configuration space: 
	\begin{align*}
		F(\varphi): \mathcal{E} &\rightarrow \mathbb{C} \notag\\
		h &\rightarrow F(\varphi)(h) := F(h) \; .
	\end{align*}
\end{definition}
So evaluating a field $F(\varphi)$ as a functional amounts to replacing $\varphi$ by $h$ everywhere in the expression for $F$. A fundamental question in the construction is to determine the set of allowed fields $\mathscr{F}$ for the theory. In the definition of the Poisson bracket and the star product (chapter \ref{sec:defquant}) we will need expressions involving the pointwise products of distributions, e.g. of propagators and distributions appearing in the expression for a field $F$. Such products are -- if one requires associativity and the product law for derivatives of distributions to hold -- not a priori defined. However, there is a result by Hörmander characterizing the existence of such products in terms of the wave front sets of the involved distributions \cite[Thm. 8.2.10]{horm03}. The wave front set contains information about the singularities of a distribution, roughly speaking it describes the points where singularities are localized and the directions in Fourier space in which the distribution is singular at these points. Now Hörmander's criterion states that pointwise products of distributions whose wave front sets satisfy a certain condition can be defined meaningfully (see theorem \ref{thm:hoerm} in the appendix), and such a restriction (condition (ii) in the definition below)  is what we require from the allowed fields. Wave front sets and the corresponding theory of microlocal analysis are crucial to ensure that our construction works. However they are not needed for the proof of the MWI, so we will not dwell upon this topic and refer to Appendix A.3 where some of the main definitions and the criterion by Hörmander are given. The set of fields is now defined as follows.

\begin{definition}
	For $n \geq 1$ we define $\mathscr{F}'({\mathbb{M}^n})$ to be the space of all compactly supported distributions $f_n \in \mathscr{D}'(\mathbb{M}^n,\mathbb{C})$ whose 
	\begin{enumerate}[label=(\roman*)]
		\item integral kernels are symmetric in all their arguments: For all permutations $\pi$ in the symmetric group $S_n$, we have
		\begin{equation*}
			f_n(x_{\pi(1)},\dots,x_{\pi(n)}) = f_n(x_1,\dots,x_n) \; .
		\end{equation*}
		\item wave front sets satisfy the following property: 
		\begin{equation*}
			\text{WF}(f_n) \subseteq \left\{(x_1,\dots,x_n;k_1,\dots,k_n) \,| \,
			(k_1,\dots,k_n) \notin \overline{V}_+^{\times n}\cup
			\overline{V}_-^{\times n} \right\} \; .
		\end{equation*}
	\end{enumerate} 
	The \textit{set of fields} $\mathscr{F}$ is the space of all functionals $F\equiv F(\varphi): \mathcal{E} \rightarrow \mathbb{C}$ of the form 
	\begin{equation}
		\label{eq:fields}
		F(\varphi) = f_0 + \sum_{n=1}^{N} \int d x_1 \cdots d x_n \,
		\varphi(x_1) \cdots \varphi(x_n) \,
		f_n(x_1,\dots,x_n)
	\end{equation}
	where $N \in \mathbb{N}^*, f_0 \in \mathbb{C}$ and $f_n \in \mathscr{F}'({\mathbb{M}^n})$ for $n \geq 1$. 
\end{definition}

To require that the $f_n$ are compactly supported ensures that the integral in (\ref{eq:fields}) converges. Our point is that by choosing fields of this particular form we can grasp all physically relevant fields in a rigorous definition that still allows to perform explicit calculations. Note that the $F$ are off-shell fields: They are defined without reference to any field equation. Furthermore, the $F$ themselves are not distributions, since they are not linear in configurations $h$. They only involve objects $f_n$ which are distributions. We introduce two additional algebraic structures on $\mathscr{F}$ as follows. 

\begin{definition}
	The vector space $\mathscr{F}$ is endowed with a \textit{pointwise product}
	\begin{align*}
		\cdot \; : \mathscr{F} \times \mathscr{F} &\rightarrow \mathscr{F} \notag\\
		(F \cdot G) (h) &:= F(h) \cdot G(h)
	\end{align*}
	and a $*$\textit{-operation}
	\begin{align*}
		^*\, : \mathscr{F} &\rightarrow \mathscr{F} \notag\\ 
		F^* (h) &:= \sum_{n} \int d x_1 \cdots d x_n \,
		\varphi(x_1) \cdots \varphi(x_n) \,
		\overline{f_n(x_1,\dots,x_n)} \; .
	\end{align*}
This turns the space $(\mathscr{F}, \cdot, ^*)$ into a commutative, unital $*$-algebra called the algebra of classical fields. 
\end{definition}

To discuss spacetime symmetries we will need the notion of Poincaré covariance of fields, so we introduce the following action:

\begin{definition}
	A linear action $\beta$ of the proper, ortochronous Poincaré group $\mathscr{P}_+^{\uparrow}$ on $\mathscr{F}$ is defined by setting
	\begin{equation*}
		F \rightarrow \beta_{\Lambda,a}F := 
		\sum_{n=0}^{N} \int d x_1 \cdots d x_n \,
		\varphi(\Lambda x_1+a) \cdots \varphi(\Lambda x_n+a) \,
		f_n(x_1,\dots,x_n)
	\end{equation*}
	for $(\Lambda,a)\in\mathscr{P}_+^{\uparrow}$. 
\end{definition}

With our choice of allowed fields we can write down explicit expressions for functional derivatives.

\begin{definition}
	The $k$-th order \textit{functional derivative} of a Field $F\in\mathscr{F}$ with respect to the basic fields $\varphi(y_i)$ is defined by 
	\begin{align*}
		\frac{\delta^k F}{\delta \varphi(y_1) \cdots \delta \varphi(y_k)} := 
		\sum_{n=k}^{N} &\frac{n!}{(n-k)!} 
		\int d x_1 \cdots d x_{n-k} \, \notag\\
		&\varphi(x_1) \cdots \varphi(x_{n-k}) \,
		f_n(y_1,\dots,y_k,x_1,\dots,x_{n-k})\; .
	\end{align*}
\end{definition}
This functional derivative satisfies the Leibniz rule, and when applied to a configuration $h \in \mathcal{E}$ it yields a distribution in $\mathscr{D}'(\mathbb{M}^k,\mathbb{C})$ which is again compactly supported and symmetric in its arguments. We introduce the notion of the support of fields in a way that the term $f_0$ in (\ref{eq:fields}) does not contribute to the support of $F$.

\begin{definition}
	\label{def:supp}
	The support of $F\in\mathscr{F}$ is defined as 
	\begin{equation*}
		\text{supp}\,F := \text{supp}\,\frac{\delta F}{\delta \varphi(\cdot)}
	\end{equation*}
	where on the r.h.s. we mean the support of the distribution in $\mathscr{D}'(\mathbb{M},\mathbb{C})$. 
\end{definition}

A main idea in the construction of a relativistic field theory is that it should respect the principle of locality. One aspect is that the influence of phenomena taking place at a point propagate at a finite speed through spacetime -- so there is no action at a distance. Local fields are objects that can be expected to lead to theories respecting this principle. They are fields that may be written as the integral over a quantity that depends only on one point of spacetime. The definition reads: 

\begin{definition}
	The space $\mathscr{P}$ of field polynomials is the space of all polynomials in the variables $\{ \partial^a \varphi \,|\,a \in \mathbb{N}^d \}$ with real coefficients. The vector space $\mathscr{F}_{\text{loc}}\subset \mathscr{F}$ of \textit{local fields} is defined to be the set of all fields of the form 
	\begin{equation*}
		F = \sum_{i=1}^{K} \int dx\, A_i(x)\,g_i(x) \; , 
	\end{equation*}
	where $A_i \in \mathscr{P}, g_i \in \mathscr{D}(\mathbb{M})$ and $K \in \mathbb{N}^*$. We write $A(g) = \int dx \, A(x) \,g(x)$. 
\end{definition}

\begin{proposition}
	The functional derivative for the integral kernel of a local field monomial $A \in \mathscr{P}$ takes the form 
	\begin{equation}
		\label{eq:funcder}
		\frac{\delta A(x)}{\delta \varphi(y)} = 
		\sum_{a \in \mathbb{N}^d} (\partial^a \delta) (x-y)\frac{\partial A}{\partial (\partial^a \varphi)}(x) \; .
	\end{equation}
\end{proposition}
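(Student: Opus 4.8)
The plan is to prove the formula by structural induction on the field polynomial $A\in\mathscr{P}$, exploiting that $\mathscr{P}$ is generated by the basic variables $\{\partial^a\varphi\}$ under real linear combinations and pointwise products, together with the Leibniz rule for the functional derivative asserted in the text. Both sides of (\ref{eq:funcder}) are linear in $A$, so it suffices to treat monomials and extend by linearity. Note also that for a fixed polynomial only finitely many $\partial^a\varphi$ actually occur, so $\frac{\partial A}{\partial(\partial^a\varphi)}$ vanishes for all but finitely many $a$ and the sum on the right is genuinely finite. I would prefer this inductive route over a direct computation because writing a general monomial in the form (\ref{eq:fields}) requires symmetrizing the kernel $\prod_i(\partial^{a_i}_x\delta)(x-x_i)$, which obscures the calculation.

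First I would settle the base case $A=\partial^a\varphi$. Writing the field $(\partial^a\varphi)(x)$ in the standard form (\ref{eq:fields}) via $\partial^a h(x)=\int dx_1\,(\partial^a_x\delta)(x-x_1)\,h(x_1)$, its only nonvanishing kernel is $f_1(x_1)=(\partial^a_x\delta)(x-x_1)$. Applying the definition of the functional derivative with $n=k=1$ then yields $\frac{\delta(\partial^a\varphi)(x)}{\delta\varphi(y)}=f_1(y)=(\partial^a\delta)(x-y)$, which is exactly the right-hand side of (\ref{eq:funcder}) since $\frac{\partial(\partial^a\varphi)}{\partial(\partial^b\varphi)}=\delta_{ab}$.

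For the inductive step I would take $A=B\cdot C$ with the formula already established for $B$ and $C$. Applying the Leibniz rule for the functional derivative to the pointwise product $(BC)(x)=B(x)\,C(x)$ and inserting the inductive hypotheses gives
\[
\frac{\delta(BC)(x)}{\delta\varphi(y)}=\sum_{a}(\partial^a\delta)(x-y)\left[\frac{\partial B}{\partial(\partial^a\varphi)}(x)\,C(x)+B(x)\,\frac{\partial C}{\partial(\partial^a\varphi)}(x)\right].
\]
The bracket is precisely $\frac{\partial(BC)}{\partial(\partial^a\varphi)}(x)$ by the ordinary Leibniz rule for the formal partial derivative $\partial/\partial(\partial^a\varphi)$, so the two product rules match term by term and the claimed formula for $A=BC$ follows.

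I expect the main obstacle to lie entirely in the base case and in verifying that the two Leibniz rules are compatible: one must check that the analytic functional derivative $\delta/\delta\varphi(y)$ and the purely algebraic partial derivative $\partial/\partial(\partial^a\varphi)$ (which treats the $\partial^a\varphi$ as independent formal variables) obey Leibniz rules with matching combinatorics, and that distributional derivatives are transported onto the delta with the correct sign and variable conventions, i.e. $(\partial^a_x\delta)(x-y)=(\partial^a\delta)(x-y)$. Once these conventions are pinned down, the induction is essentially bookkeeping.
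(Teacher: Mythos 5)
Your proof is correct, but note that the paper states this proposition without any proof at all, so there is nothing to match against; your induction fills a gap the paper leaves to the reader. The argument is sound: the base case computation $f_1(x_1)=(\partial^a_x\delta)(x-x_1)$, hence $\frac{\delta(\partial^a\varphi)(x)}{\delta\varphi(y)}=(\partial^a\delta)(x-y)$, carries the correct sign (the derivative acts on the first slot, so no factor $(-1)^{|a|}$ appears), and the inductive step is exactly the observation that the two Leibniz rules — for $\delta/\delta\varphi(y)$, asserted in the text, and for the formal derivative $\partial/\partial(\partial^a\varphi)$ — regroup term by term. The alternative the paper implicitly has in mind is the direct computation: for a monomial $A=c\prod_{l=1}^{L}\partial^{a_l}\varphi$ one differentiates once and obtains $\sum_l \partial^{a_l}_x\delta(x-y)\prod_{j\neq l}(\partial^{a_j}\varphi)(x)$, which reorganizes into the right-hand side of (\ref{eq:funcder}) upon collecting the $l$ with $a_l=a$; this is a one-line check, whereas your induction buys independence from writing the symmetrized kernel in the form (\ref{eq:fields}), as you say. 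One small caveat you should acknowledge: strictly speaking $(\partial^a\varphi)(x)$ is not an element of $\mathscr{F}$, since the wave front set of $\partial^a\delta$ contains timelike covectors and thus violates condition (ii) in the definition of $\mathscr{F}'(\mathbb{M})$ — pointwise kernels $A(x)$ are $\mathscr{F}$-valued distributions, so the definition of the functional derivative does not literally apply to them. The repair is routine and matches the paper's own level of informality: smear with $g\in\mathscr{D}(\mathbb{M})$, so that $A(g)$ has the smooth compactly supported kernel $(-1)^{|a|}\partial^a g$ after integration by parts, run your computation there, and read off (\ref{eq:funcder}) as an identity of kernels. With that remark added, the induction is complete.
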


\subsection{Dynamics} 

Dynamics is governed by an action $S$, the field equations are obtained by variation of $S$ w.r.t. $\varphi$. We give the standard definition for the free scalar action. 

\begin{definition}
	The free action for scalar theory is the formal expression
	\begin{equation}
		\label{eq:S0}
		S_0 := \int dx\, \frac{1}{2} \left(\partial^{\mu} \varphi \, \partial_{\mu} \varphi - m^2 \varphi^2 \right) \; .
	\end{equation}
\end{definition}

However, $S_0 \notin \mathscr{F}$, since when written in the form of equation (\ref{eq:fields}), the corresponding $f_2$ is not compactly supported. Furthermore $S_0$ in general diverges when evaluated on a $h \in \mathcal{E}$. Restricting to only compactly supported field configurations $h$ has the consequence that no non-trivial solutions of the field equations exist \cite[chap. 1.5]{duetsch19}. Since in the following $S_0$ appears only as an index of other objects, we will stick with this formal definition and turn to the field equations.\footnote{In chapters \ref{sec:symms} and \ref{sec:complex} when discussing Noether's theorem in classical field theory calculations invol-ving $S_0$ are performed. A way to do these calculations rigorously would be to introduce the notion of a generalized Lagrangian and a corresponding action, which makes the notion precise of integrating out (\ref{eq:S0}) with compactly supported test functions \cite[chap. 4.1]{rejzner16}. However, for our proof of the MWI $S_0$ is only needed as an index.}

\begin{definition}
	Formal variation of the free action yields the free Klein-Gordon field equation
	\begin{equation*}
		\frac{\delta S_0}{\delta \varphi(x)} := -(\dalemb + m^2) \varphi(x) \; ,
	\end{equation*}
	which is a well defined expression.
\end{definition}

The interactions we consider will be local fields. Hence they are localized in space time by a test function $g$ and switched off outside the support of $g$. 

\begin{definition}
	The \textit{interactions} of the theory are of the form
	\begin{equation*}
		S = -\kappa \int dx \, g(x)\, L_{\text{int}}(x)
	\end{equation*}
	with coupling constant $\kappa \in \mathbb{R}$, $g \in \mathscr{D}(\mathbb{M})$ and interaction Lagrangian $L_{\text{int}} \in \mathscr{P}$,
	so $S \in \mathscr{F}_{\text{loc}}$. To make the dependence on $\kappa$ explicit we introduce the notation $\kappa \tilde{S} := S$. The \textit{total Lagrangian} is
	\begin{equation*}
		L_{\text{\normalfont{tot}}}(x) =  L_0(x)-\,\kappa\, g(x)L_{\text{\normalfont{int}}}(x) \;.
	\end{equation*}
	The \textit{field equation} is given by 
	\begin{equation*}
		\frac{\delta (S_0 + S)}{\delta \varphi(x)} = 0 \; .
	\end{equation*}
\end{definition}

Now we turn to the notion of on-shell fields. These are obtained by restricting the domain of definition of a given $F$ to solutions of the field equation. 

\begin{definition}
	The \textit{solution} space of the field equation is denoted by 
	\begin{equation*}
		\mathcal{E}_{S_0+S} := \left\{
		h \in \mathcal{E} \; \Big| \; 
		\frac{\delta (S_0 + S)}{\delta \varphi(x)} (h) = 0 \quad 
		\forall x \in \mathbb{M}
		\right\} \subseteq \mathcal{E} \; .
	\end{equation*}
	An \textit{interacting field} $F_S$ corresponding to the field $F \in \mathscr{F}_{\text{loc}}$ and an interaction $S$ is given by
	\begin{equation*}
		F_S = \left. F \right\vert_{\mathcal{E}_{S_0+S}} \; \qquad \text{or} \qquad A_S(x) = \left. A(x) \right\vert_{\mathcal{E}_{S_0+S}} \,. 
	\end{equation*}
	We call $F$ the \textit{off-shell} and $F_S$ the \textit{on-shell field}. 
\end{definition}

To clarify the transition from the classical to a quantum theory, the notion of a Poisson bracket will be important. We introduce such a bracket for the free action. Therefore some propagators are needed. 

\begin{definition}
	The \textit{retarded propagator} $\Delta_m^{\text{ret}} \in \mathscr{D}'(\mathbb{R}^d)$ for the free Klein-Gordon field equation is the (unique) fundamental solution to the equation
	\begin{equation*}
		(\dalemb + m^2) \Delta_m^{\text{ret}} = - \delta
	\end{equation*}
	satisfying $\text{supp}\,\Delta_m^{\text{ret}} \subseteq \overline{V}_+$. The corresponding \textit{commutator function} is 
	\begin{equation*}
		\Delta_m (x) = \Delta_m^{\text{ret}} (x) - \Delta_m^{\text{ret}} (-x) \; .
	\end{equation*}
\end{definition}

\begin{definition}
	The \textit{Poisson bracket} for the free action is a map $\{\cdot,\cdot\}_m: \mathscr{F} \times \mathscr{F} \rightarrow \mathscr{F}$ defined by 
	\begin{equation*}
		\{F,G\}_m = \int dx\,dy\, \frac{\delta F}{\delta \varphi(x)} \Delta_m(x-y) \frac{\delta G}{\delta \varphi(y)} \; . 
	\end{equation*}
\end{definition}

Taking the bracket of two local fields yields in general a non local expression. The Poisson bracket exists due to the wave front properties of the fields we required in (\ref{eq:fields}). It is a bilinear and skew-symmetric map satisfying the Leibniz rule and the Jacobi identity. Hence the space $\mathfrak{A}_{\text{cl}}:= (\mathscr{F}, \{\cdot,\cdot\}_m, \cdot, ^*)$ has the structure of a Poisson $*$-algebra, called the algebra of free classical fields. The subscript $m$ emphasizes that the Poisson bracket contains information about the field equation (or the action) via the commutator function $\Delta_m$ (we will mostly drop the subscript in the following). The field space $\mathscr{F}$ contains only kinematical information, dynamics is encoded in the algebraic structure of the Poisson bracket. 

\begin{proposition}
	\label{prop:retprop}
	The commutator function satisfies the following time zero relations 
	\begin{equation*}
	\Delta_m(0,\vec{x}) = 0\;,  \qquad
	\partial^0 \Delta_m(0,\vec{x}) = -\delta(\vec{x}) \; .
	\end{equation*}
	Using this, one gets the equal time Poisson bracket. 
\end{proposition}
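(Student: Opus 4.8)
The plan is to read the two identities as statements about the Cauchy data of $\Delta_m$ on the hypersurface $\{x^0=0\}$. First note that, since $\dalemb + m^2$ is invariant under $x\to-x$ and $\delta$ is even, the function $x\mapsto\Delta_m^{\text{ret}}(-x)$ again satisfies $(\dalemb+m^2)\Delta_m^{\text{ret}}(-x)=-\delta$; subtracting the two equations gives $(\dalemb+m^2)\Delta_m=0$, so $\Delta_m$ is a distributional solution of the free Klein--Gordon equation, and both claimed relations are constraints on its restriction, and the restriction of its normal derivative $\partial^0\Delta_m$, to $\{x^0=0\}$.

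The main idea is to exploit the support property $\operatorname{supp}\Delta_m^{\text{ret}}\subseteq\overline{V}_+\subseteq\{x^0\ge 0\}$ to write $\Delta_m^{\text{ret}}=\theta(x^0)\,v$, where $v$ solves $(\dalemb+m^2)v=0$ for $x^0>0$. Inserting this into $(\dalemb+m^2)\Delta_m^{\text{ret}}=-\delta$, only the time derivatives in $\dalemb=\partial_0^2-\nabla^2$ hit $\theta(x^0)$, and after using $f(x^0)\delta'(x^0)=f(0)\delta'(x^0)-f'(0)\delta(x^0)$ to collect terms one finds
\begin{equation*}
(\dalemb+m^2)\bigl(\theta(x^0)v\bigr)=\delta'(x^0)\,v(0,\vec{x})+\delta(x^0)\,\partial^0 v(0,\vec{x})\,.
\end{equation*}
Matching this against $-\delta(x^0)\delta(\vec{x})$ separately in the $\delta'$- and $\delta$-channels forces the Cauchy data
\begin{equation*}
v(0,\vec{x})=0\,,\qquad \partial^0 v(0,\vec{x})=-\delta(\vec{x})\,.
\end{equation*}

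It then remains to identify $\Delta_m$ with $v$. Since $-\delta(\vec{x})$ is even in $\vec{x}$ while $x^0\to-x^0$ reverses the sign of the normal derivative, uniqueness for the Cauchy problem of the homogeneous equation forces $v$ to be even in $\vec{x}$ and odd in $x^0$, so $v(-x)=-v(x)$. Hence $\Delta_m^{\text{ret}}(-x)=\theta(-x^0)v(-x)=-\theta(-x^0)v(x)$, and
\begin{equation*}
\Delta_m(x)=\Delta_m^{\text{ret}}(x)-\Delta_m^{\text{ret}}(-x)=\bigl(\theta(x^0)+\theta(-x^0)\bigr)v(x)=v(x)\,,
\end{equation*}
so both time-zero relations follow at once from the Cauchy data of $v$.

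The step I expect to require the most care is the decomposition $\Delta_m^{\text{ret}}=\theta(x^0)v$ and the attendant manipulation of $\theta(x^0)$, $\delta(x^0)$ and $\delta'(x^0)$: one must verify that $\Delta_m^{\text{ret}}$ (and $v$) admit well-defined restrictions and normal derivatives to $\{x^0=0\}$ as distributions in $\vec{x}$, which is exactly where the singularity of the propagator at the tip of the light cone must be controlled. A technically cleaner, though computationally heavier, alternative is to insert the explicit mass-shell Fourier representation of $\Delta_m$, set $x^0=0$, and evaluate the resulting $(d-1)$-dimensional integrals directly: the first relation follows from antisymmetry of the integrand under $\vec{k}\to-\vec{k}$, and the second from $\tfrac{1}{(2\pi)^{d-1}}\int d^{d-1}k\,e^{i\vec{k}\cdot\vec{x}}=\delta(\vec{x})$.
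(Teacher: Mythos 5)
The paper states Proposition \ref{prop:retprop} without any proof, treating the time-zero relations as a standard property of the commutator function, so there is no in-paper argument to compare against; judged on its own merits, your proof is correct, with all signs consistent with the paper's conventions ($(\dalemb+m^2)\Delta_m^{\text{ret}}=-\delta$ and $\partial^0=\partial_0$ in the mostly-minus metric). Your primary route --- the jump formula for $(\dalemb+m^2)\bigl(\theta(x^0)v\bigr)$ together with the parity/uniqueness argument identifying $v=\Delta_m$ --- is sound: the matching in the $\delta'$- and $\delta$-channels is computed correctly, and the reflection argument $v(-x)=-v(x)$ does follow from uniqueness of the Cauchy problem. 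The one delicate point is the one you flag yourself: the decomposition $\Delta_m^{\text{ret}}=\theta(x^0)v$ with $v$ a global homogeneous solution admitting well-defined restrictions and normal derivatives on $\{x^0=0\}$ requires an independent regularity input ($\Delta_m$ is smooth in $x^0$ with values in $\mathscr{D}'(\mathbb{R}^{d-1})$), without which the jump-formula manipulation of $\theta$, $\delta$, $\delta'$ is formal. That input is supplied most directly by your second, Fourier route, which is also the derivation the paper's own setup makes immediate: Definition \ref{def:starprod} already provides the splitting $i\Delta_m(x)=\Delta_m^+(x)-\Delta_m^+(-x)$, from which $\Delta_m(0,\vec x)=0$ follows by antisymmetry of the integrand under $\vec p\to-\vec p$ and $\partial^0\Delta_m(0,\vec x)=-\tfrac{1}{(2\pi)^{d-1}}\int d^{d-1}p\,e^{i\vec p\cdot\vec x}=-\delta(\vec x)$ in one line each. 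In short: your structural argument buys generality (it works verbatim for any normally hyperbolic operator where no explicit mode expansion is available), while the Fourier computation buys brevity, supplies the missing regularity for free, and stays entirely within the paper's stated toolkit; for this paper the latter is the natural proof, and you would do well to promote it from ``alternative'' to the main argument, keeping the jump-formula discussion as the conceptual explanation.
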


\subsection{Perturbation Theory} 

Quantum field theory will be formulated in terms of perturbation theory. Here we introduce this formalism for the classical theory. The idea of perturbation theory is to express solutions of the full field equations as a power series in the coupling constant $\kappa$ which is taken to be small, where the lowest order of the series is the solution of the free field equation. To do so it makes sense to consider fields that propagate freely, then get perturbed in a localized region where interaction takes place and propagate further as perturbed fields. The notion of a retarded wave operator makes this idea precise. 

\begin{definition}
	A \textit{retarded wave operator} for a local interaction $S$ is a map $r_{S_0+S,S_0}: \mathcal{E} \rightarrow \mathcal{E}$ satisfying 
	\begin{enumerate}[label=(\roman*)]
		\item $r_{S_0+S,S_0} (h)(x) = h(x)$ for $x^0$ ``sufficiently early". 
		\item $\frac{\delta (S_0 + S)}{\delta \varphi(x)} \circ 
			r_{S_0+S,S_0} = \frac{\delta (S_0)}{\delta \varphi(x)}$ .
	\end{enumerate} 
	We will assume that a unique such operator is given for every interaction $S$. The \textit{retarded field} $F_S^{\text{ret}}$ corresponding to the local field $F$ and the interaction $S$ is defined by 
	\begin{equation*}
		F_S^{\text{ret}} := F \circ r_{S_0+S,S_0} : \mathcal{E} \rightarrow \mathbb{C} \; ,
	\end{equation*}
	and similarly for the integral kernel $A(x)$. 
\end{definition}

Note that the retarded wave operator and hence the retarded fields are defined on all $\mathscr{E}$ and not only for solutions of the free field equation. We now expand the retarded fields in terms of the coupling $\kappa$. 

\begin{definition}
	We define the \textit{classical retarded product} as a sequence of linear maps 
	\begin{equation*}
		R_{n,1} \equiv R_{\text{cl}} := \mathscr{F}_{\text{loc}}^{\otimes n} \otimes \mathscr{F}_{\text{loc}} \rightarrow \mathscr{F}\;, \qquad n \in \mathbb{N}
	\end{equation*}
	that are symmetric in the first $n$ entries, given by 
	\begin{equation*}
		R_{n,1}(\tilde{S}^{\otimes n},F) := \left. 
		\frac{d^n}{d \kappa^n} \right\vert_{\kappa = 0} F_{\kappa \tilde{S}}^{\text{ret}} \; .
	\end{equation*}
	We write this perturbative expansion symbolically as
	\begin{equation*}
		F_{\kappa \tilde{S}}^{\text{ret}} \simeq 
		\sum_{n = 0}^{\infty} \frac{\kappa^n}{n!} R_{n,1}(\tilde{S}^{\otimes n},F) \equiv R(e_{\otimes}^S,F) \; .
	\end{equation*}
\end{definition}

The term ``symbolic" here means that $R(e_{\otimes}^S,F)$ is considered to be an element of the space $\mathscr{F}[\![\hbar]\!]$ of formal power series in $\hbar$ with coefficients in $\mathscr{F}$, so no convergence of the series is implied (see definition \ref{def:formal_po}). The following proposition will be relevant for the discussion of the relation between classical and quantum fields. 

\begin{proposition}
	The (integral kernels of) two classical perturbative retarded fields $A, B\in\mathscr{P}$ factorize in the following sense \normalfont
	\begin{equation}
		\label{eq:clfact}
		(AB)_S^{\text{ret}}(x) = A_S^{\text{ret}} (x) \cdot B_S^{\text{ret}} (x) \;,
	\end{equation}
	\textit{as distributions in $x$, that is the pointwise product is well defined and commutative. They furthermore satisfy the off-shell field equation}
	\begin{equation}
		\label{eq:clpertoff}
		R_{\text{cl}} \left( e^S_{\otimes} , \frac{\delta (S_0+S)}{\delta \varphi (x)} 
		\right) 
		= \frac{\delta S_0}{\delta \varphi (x)} \; ,	
	\end{equation}
	\textit{which follows from the properties of the retarded wave operator.} 
\end{proposition}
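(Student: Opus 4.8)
The plan is to reduce both statements to the defining properties of the retarded wave operator $r_{S_0+S,S_0}$, exploiting that the classical retarded product is nothing but the formal Taylor expansion of the retarded field $F_S^{\text{ret}} = F\circ r_{S_0+S,S_0}$ in the coupling. Indeed, from the definition $R_{n,1}(\tilde S^{\otimes n},F) = \frac{d^n}{d\kappa^n}\big\vert_{\kappa=0}F_{\kappa\tilde S}^{\text{ret}}$ one reads off that, as a formal power series, $R(e_\otimes^S,F)\simeq F_S^{\text{ret}}$ for every fixed $F\in\mathscr{F}_{\text{loc}}$. Both claims then become statements about the pullback $\cdot\circ r_{S_0+S,S_0}$.

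For the factorization I would evaluate both sides on an arbitrary configuration $h\in\mathcal{E}$. Since the pointwise product on $\mathscr{F}$ is defined configuration-wise, precomposition with $r_{S_0+S,S_0}$ is an algebra homomorphism for it: writing $g:=r_{S_0+S,S_0}(h)$,
\begin{equation*}
(AB)_S^{\text{ret}}(x)(h) = (AB)(x)(g) = A(x)(g)\,B(x)(g) = A_S^{\text{ret}}(x)(h)\cdot B_S^{\text{ret}}(x)(h)\,,
\end{equation*}
which establishes the identity at the level of functionals and makes commutativity manifest. The only genuine content is therefore the qualifier ``as distributions in $x$'': one must check that the formal-power-series coefficients of $A_S^{\text{ret}}(x)$ and $B_S^{\text{ret}}(x)$, whose integral kernels are distributions, have wave front sets whose product in the variable $x$ satisfies Hörmander's criterion (theorem \ref{thm:hoerm}). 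I expect this microlocal well-definedness, rather than the algebraic identity, to be the main obstacle, and it is precisely what the wave front condition (ii) in the definition of $\mathscr{F}'(\mathbb{M}^n)$ was imposed to secure.

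For the off-shell field equation I would apply $R(e_\otimes^S,\cdot)\simeq(\cdot)_S^{\text{ret}}$ to the argument $\frac{\delta(S_0+S)}{\delta\varphi(x)}$. The subtlety is that this argument itself depends on $\kappa$, so I would first split it as $\frac{\delta S_0}{\delta\varphi(x)} + \kappa\frac{\delta\tilde S}{\delta\varphi(x)}$ and use linearity of the retarded product in its last slot, pulling the explicit factor $\kappa$ out of the second term. Each summand then carries a fixed, $\kappa$-independent local field, so
\begin{equation*}
R\Big(e_\otimes^S,\tfrac{\delta(S_0+S)}{\delta\varphi(x)}\Big)\simeq \Big(\tfrac{\delta S_0}{\delta\varphi(x)}\Big)_S^{\text{ret}} + \kappa\,\Big(\tfrac{\delta\tilde S}{\delta\varphi(x)}\Big)_S^{\text{ret}} = \tfrac{\delta(S_0+S)}{\delta\varphi(x)}\circ r_{S_0+S,S_0}\,,
\end{equation*}
where the last equality recombines the terms using linearity of precomposition and $S=\kappa\tilde S$. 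Property (ii) of the retarded wave operator states exactly that this equals $\frac{\delta S_0}{\delta\varphi(x)}$, which closes the argument. The point requiring care here is the bookkeeping of powers of $\kappa$: the explicit coupling in $\frac{\delta S}{\delta\varphi(x)}$ must be counted together with the expansion of $e_\otimes^S$, so that the two contributions reassemble into $\frac{\delta(S_0+S)}{\delta\varphi(x)}\circ r_{S_0+S,S_0}$ rather than being mismatched by one order.
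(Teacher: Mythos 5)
Your argument is correct and is exactly the proof the paper intends: the paper states the proposition with only the remark that it ``follows from the properties of the retarded wave operator,'' and your two steps---precomposition with $r_{S_0+S,S_0}$ being a homomorphism for the configuration-wise product, and property (ii) of the retarded wave operator applied after splitting $\tfrac{\delta(S_0+S)}{\delta\varphi(x)}=\tfrac{\delta S_0}{\delta\varphi(x)}+\kappa\tfrac{\delta\tilde S}{\delta\varphi(x)}$ with careful $\kappa$-bookkeeping---are precisely that argument made explicit. One correction: the microlocal check you flag as ``the main obstacle'' is not actually needed here, since for each fixed $h\in\mathcal{E}$ one has $A_S^{\text{ret}}(x)(h)=A(x)\bigl(r_{S_0+S,S_0}(h)\bigr)$ with $r_{S_0+S,S_0}(h)\in\mathcal{C}^\infty(\mathbb{M})$, so the retarded fields (and their $\kappa$-derivatives at $\kappa=0$) are smooth functions of $x$ and their pointwise product is trivially well defined---the wave front condition (ii) on $\mathscr{F}'(\mathbb{M}^n)$ is what secures the Poisson bracket and the star product, and the genuine failure of (\ref{eq:clfact}) at coinciding points is a quantum phenomenon, not a classical one.
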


	\section{Free Quantum Fields: Deformation Quantization}
	\label{sec:defquant}
	Deformation quantization is a  prescription of how to obtain a quantum theory from a given classical one. We briefly describe the main ideas and give the definitions needed later on. 

\subsection{The framework of deformation quantization}

Deformation quantization makes precise the idea that when passing from the classical into the quantum world, one should replace Poisson brackets by commutators and to get back the limit $\hbar \rightarrow 0$ has to be taken. In the framework of deformation quantization, this relation between theories is achieved by replacing the classical product  $\cdot$  with a non-commutative product $\star$ called ``star product" according to the following definition.

\begin{definition}
	A \textit{deformation quantization} for a Poisson algebra $(\mathscr{A},\cdot, \{\cdot,\cdot\})$ is a bilinear an associative product $\star$ on $\mathscr{A} \times \mathscr{A}$ with values in $\mathscr{A}[\![ \hbar ]\!]$ that satisfies 
	\begin{enumerate}[label=(\roman*)]
		\item $f \star g = f\cdot g + \mathcal{O} (\hbar)$, 
		\item $[f,g]_{\star} := f \star g - g \star f = i \hbar \{f,g\} + \mathcal{O} (\hbar^2)$,
	\end{enumerate} 
	for all $f,g\in\mathscr{A}$. 
\end{definition}

Condition (i) ensures that the $\hbar \rightarrow 0$ limit of the $\star$-product yields the classical product. Since we have no notion of convergence on the algebra of formal power series, taking this limit amounts to just setting $\hbar = 0$. Condition (ii) states that -- to lowest order in $\hbar$ -- the $\star$-commutator of the quantum theory is equal to the classical Poisson bracket. 

\begin{remark}
	We point out that the star product $\star$ and the star operation $*$ are different objects, although named confusingly similar.  
\end{remark}

\subsection{The star product}

We will now write down a particular $\star$-product which satisfies the requirements of the definition. Therefore, another kind of distribution is needed.

\begin{definition}
	\label{def:starprod}
	The \textit{Wightman two point function} is defined as the distribution with integral kernel 
	\begin{equation*}
		\Delta^+_m(x) := \frac{1}{(2\pi)^{d-1}} \int d^{d-1}\vec{p}\; 
		\frac{e^{-i(\omega_{\vec{p}} x^0-\vec{p}\vec{x})}}{2\omega_{\vec{p}}}\; , \quad \text{where} \quad \omega_{\vec{p}} := \sqrt{\vec{p}^2+m^2} \; .
	\end{equation*}
	It provides a splitting into positive and negative frequency parts of the commutator function: 
	\begin{equation*}
		i \Delta_m(x) = \Delta_m^+(x) - \Delta_m^+(-x) \; .
	\end{equation*}	
\end{definition}

\begin{definition}
	We define a \textit{star product} $\star : \mathscr{F}[\![ \hbar ]\!] \times \mathscr{F}[\![ \hbar ]\!] \rightarrow \mathscr{F}[\![ \hbar ]\!]$ as
	\begin{align}
		\label{eq:star}
		F \star G = 
		\sum_{n=0}^{\infty}\frac{\hbar^n}{n!}
		\int dx_1 \cdots dx_n \, dy_1 \cdots dy_n \notag\\
		\cdot\; 
		\frac{\delta^n F}{\delta \varphi(x_1)\cdots\delta\varphi(x_n)} 
		\;\prod_{l=1}^{n} \Delta_m^+(x_l - y_l)\;
		\frac{\delta^n G}{\delta \varphi(y_1)\cdots\delta\varphi(y_n)} \; . 
	\end{align}
	The corresponding $\star$-commutator is 
	\begin{equation*}
			[F,G]_{\star} := F \star G - G \star F \; .
	\end{equation*}
\end{definition}

Equation (\ref{eq:star}) amounts to the prescription that when computing the star product of two local fields $F$ and $G$, one has to find all possible contractions between the basic fields contained in $F$ and those in $G$. Hence it encodes the combinatorics of Wick's theorem. If in QFT the fields are represented as operators on some Hilbert space, the star product corresponds to the operator product.

\begin{theorem}[existence of the star product]
	\label{thm:starprod}
	\normalfont{\cite[chap. 2.4]{duetsch19}}
	\textit{The star product exists (due to the wave front properties of the fields and the two point function), and is a star product in the sense of Definition} \normalfont{\ref{def:starprod}}\textit{. The corresponding commutator satisfies the Jacobi Identity and acts as a derivation on the algebra of fields.}
\end{theorem}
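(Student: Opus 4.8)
The plan is to decompose the statement into four tasks: (a) well-definedness of each coefficient of $\hbar$, so that $F\star G\in\mathscr{F}[\![\hbar]\!]$; (b) the two defining relations (i) and (ii) of a deformation quantization; (c) associativity; and (d) the Jacobi identity together with the derivation property, both of which I expect to be formal consequences of associativity.

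First I would establish existence, which I expect to be the main obstacle. For fixed $n$ the coefficient of $\hbar^n$ in (\ref{eq:star}) is the integral of a pointwise product of the distributions $\Delta^+_m(x_l-y_l)$ with the functional derivatives of $F$ and $G$; the latter again lie in $\mathscr{F}'(\mathbb{M}^n)$, so their wave front sets obey condition (ii) of the field definition, while $\text{WF}(\Delta^+_m)$ is carried by covectors in the forward cone $\overline{V}_+$. The decisive observation is that the forward-cone covectors supplied by the $n$ propagators can only be cancelled if the derivative of $F$ (resp.\ $G$) contributes covectors lying entirely in $\overline{V}_-^{\times n}$ (resp.\ $\overline{V}_+^{\times n}$) -- and this is precisely the configuration excluded by condition (ii). Hence the transversality hypothesis of Hörmander's criterion (theorem \ref{thm:hoerm}) is satisfied and the product exists; a bookkeeping check of the symmetry, compact support and wave front condition of the resulting kernel then shows it again lies in $\mathscr{F}$.

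The relations (i) and (ii) are then quick. Property (i) holds because the $n=0$ term of (\ref{eq:star}) is exactly $F\cdot G$ and every further term carries a power of $\hbar$. For property (ii) I would compute $[F,G]_\star$ order by order: the $n=0$ contributions cancel by commutativity of $\cdot$, and in the $n=1$ term of $G\star F$ a relabelling $x\leftrightarrow y$ produces $\hbar\int dx\,dy\,\tfrac{\delta F}{\delta\varphi(x)}\,\Delta^+_m(y-x)\,\tfrac{\delta G}{\delta\varphi(y)}$. Subtracting and using the splitting $i\Delta_m(x)=\Delta^+_m(x)-\Delta^+_m(-x)$ from Definition \ref{def:starprod} collapses the first-order part to $i\hbar\int dx\,dy\,\tfrac{\delta F}{\delta\varphi(x)}\,\Delta_m(x-y)\,\tfrac{\delta G}{\delta\varphi(y)}=i\hbar\{F,G\}_m$, the remainder being of order $\hbar^2$.

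For associativity I would rewrite the product as a bidifferential operator acting on two independent copies of the field,
\begin{equation*}
F\star G=\left.\exp\!\left(\hbar\int dx\,dy\,\Delta^+_m(x-y)\,\frac{\delta}{\delta\varphi_1(x)}\,\frac{\delta}{\delta\varphi_2(y)}\right)F(\varphi_1)\,G(\varphi_2)\right\vert_{\varphi_1=\varphi_2=\varphi}\,,
\end{equation*}
which only identifies the copies after the contractions have been carried out. Since the kernel $\Delta^+_m$ does not depend on $\varphi$, the operators contracting different pairs of copies commute, and expanding both $(F\star G)\star H$ and $F\star(G\star H)$ yields the same sum over all pairings of the three factors weighted by the corresponding propagators; hence they agree. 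Finally, the Jacobi identity and the derivation property need no separate argument: for any associative product the induced commutator automatically satisfies $[[F,G]_\star,H]_\star+[[G,H]_\star,F]_\star+[[H,F]_\star,G]_\star=0$ and $[F,G\star H]_\star=[F,G]_\star\star H+G\star[F,H]_\star$, which are exactly the two assertions, so both follow once associativity is established.
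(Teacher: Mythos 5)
Your proposal is correct and essentially coincides with the proof the paper points to: the paper does not prove Theorem \ref{thm:starprod} itself but only cites \cite[chap. 2.4]{duetsch19}, where the argument runs exactly along your lines (existence of each $\hbar^n$-coefficient via Hörmander's criterion, with the propagators contributing $\overline{V}_+$-covectors at the $x$-slots and $\overline{V}_-$-covectors at the $y$-slots so that cancellation would force the derivatives of $F$, resp.\ $G$, into the configurations $\overline{V}_-^{\times n}$, resp.\ $\overline{V}_+^{\times n}$, excluded by condition (ii); associativity via the exponential bidifferential-operator form with commuting contraction operators; Jacobi identity and derivation property as purely algebraic consequences of associativity). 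The only step you leave implicit is that the functional derivatives evaluated at a configuration $h\in\mathcal{E}$ inherit the wave front condition (ii), which holds because they arise from the kernels $f_n$ by partial smearing with the smooth function $h$ — a routine fact, so there is no genuine gap.
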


The Algebra $\mathfrak{A} = ( \mathscr{F}[\![ \hbar ]\!],\star,^*)$ is called the free algebra of quantum fields. One property of the $\star$-commutator is that it yields zero for fields whose supports lay at spacelike distances of each other. This can be interpreted in relation to what we called above the principle of locality: Phenomena taking place at spacelike distances can't have any influence on each other. The actual statement reads as follows:

\begin{proposition}[spacelike commutativity]
	\label{prop:spacecomm}
	Let $F,G\in \mathscr{F}$. If $(x-y)^2 < 0$ for all \normalfont{$x \in \text{supp}\, F$}\textit{ and }\normalfont{$y \in \text{supp}\, G$}\textit{, then $[F,G]_{\star} = 0$.}
\end{proposition}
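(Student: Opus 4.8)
The plan is to expand the $\star$-commutator directly from its definition and reduce the desired vanishing to the support properties of the commutator function $\Delta_m$. First I would write out $F\star G$ and $G\star F$ using the series (\ref{eq:star}). In the expression for $G\star F$ I would relabel the dummy integration variables so that the arguments of the $F$-derivatives are again called $x_l$ and those of the $G$-derivatives are called $y_l$, and then use commutativity of the pointwise product of these field-valued functional derivatives to bring both contributions over a common integrand. The two then differ only by whether the two-point function carries $\Delta_m^+(x_l-y_l)$ or $\Delta_m^+(y_l-x_l)$; the $n=0$ term equals $F\cdot G$ in both products and cancels, leaving
\begin{equation*}
[F,G]_\star = \sum_{n=1}^\infty \frac{\hbar^n}{n!}\int \prod_{l=1}^n dx_l\,dy_l\; \frac{\delta^n F}{\delta\varphi(x_1)\cdots\delta\varphi(x_n)}\Big(\prod_{l=1}^n \Delta_m^+(x_l-y_l) - \prod_{l=1}^n \Delta_m^+(y_l-x_l)\Big)\frac{\delta^n G}{\delta\varphi(y_1)\cdots\delta\varphi(y_n)}\,.
\end{equation*}

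Next I would record the decisive property of the propagators. From $\text{supp}\,\Delta_m^{\text{ret}}\subseteq\overline V_+$ and $\Delta_m(x)=\Delta_m^{\text{ret}}(x)-\Delta_m^{\text{ret}}(-x)$ one obtains $\text{supp}\,\Delta_m\subseteq\overline V_+\cup\overline V_-$, so $\Delta_m$ vanishes on the open spacelike region $\{z\,|\,z^2<0\}$. Combined with the splitting $i\Delta_m(z)=\Delta_m^+(z)-\Delta_m^+(-z)$ from Definition \ref{def:starprod}, this yields $\Delta_m^+(z)=\Delta_m^+(-z)$ for all spacelike $z$, which is the statement that powers the argument.

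I would then expand the difference of products as a telescoping sum,
\begin{equation*}
\prod_{l=1}^n \Delta_m^+(x_l-y_l) - \prod_{l=1}^n \Delta_m^+(y_l-x_l) = \sum_{k=1}^n \Big(\prod_{l<k}\Delta_m^+(x_l-y_l)\Big)\, i\Delta_m(x_k-y_k)\,\Big(\prod_{l>k}\Delta_m^+(y_l-x_l)\Big),
\end{equation*}
using $\Delta_m^+(x_k-y_k)-\Delta_m^+(y_k-x_k)=i\Delta_m(x_k-y_k)$. Each summand then carries the factor $i\Delta_m(x_k-y_k)$, whose support in the pair $(x_k,y_k)$ lies in $\{(x_k,y_k)\,|\,(x_k-y_k)\in\overline V_+\cup\overline V_-\}$. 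On the other hand the surrounding functional derivatives confine $x_l\in\text{supp}\,F$ and $y_l\in\text{supp}\,G$ for every $l$ (the higher derivatives inherit, in each argument, the support of $F$ respectively $G$ in the sense of Definition \ref{def:supp}), and by hypothesis these sets are spacelike separated, so $(x_k-y_k)^2<0$ throughout. Hence the support of $i\Delta_m(x_k-y_k)$ is disjoint from the support of the adjacent functional derivatives, the product of the two vanishes, and summing over $k$ and $n$ gives $[F,G]_\star=0$.

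The delicate point, and the one I expect to be the main obstacle, is the distribution-theoretic bookkeeping. The integrands are genuine products of distributions, so I would need the wave-front condition (ii) in the definition of $\mathscr F$ together with the wave-front properties of $\Delta_m^+$ — i.e. Hörmander's criterion — to guarantee that all these products, and the telescoping manipulation, are well defined; this is already ensured by Theorem \ref{thm:starprod}. Granting that, the genuinely technical heart is the statement that a product of distributions with disjoint closed supports vanishes, applied to $i\Delta_m(x_k-y_k)$ against the functional derivatives, and the verification that the higher functional derivatives are indeed supported in the corresponding powers of $\text{supp}\,F$ and $\text{supp}\,G$. Everything else — the relabeling, the cancellation of the $n=0$ term, and the elementary support inclusion $\text{supp}\,\Delta_m\subseteq\overline V_+\cup\overline V_-$ — is routine.
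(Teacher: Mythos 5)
Your proof is correct and follows essentially the standard argument that this paper relies on (the paper states Proposition \ref{prop:spacecomm} without an explicit proof, deferring to \cite{duetsch19}): after cancelling the $n=0$ terms, one telescopes the difference of products of $\Delta_m^+$ into summands each carrying a factor $i\Delta_m(x_k-y_k)$, whose support in $\overline{V}_+\cup\overline{V}_-$ is disjoint from the open neighbourhood of $\text{supp}\,F\times\text{supp}\,G$ (compact, hence spacelike separation is stable) enforced by the adjacent functional derivatives, so every term vanishes. The two points you flag as delicate --- well-definedness of the distributional products via the wave front condition of Theorem \ref{thm:starprod}, and the fact that the higher functional derivatives are supported in the corresponding powers of $\text{supp}\,F$ and $\text{supp}\,G$ (which follows from the symmetry of the kernels $f_n$ and Definition \ref{def:supp}) --- are exactly the right places where care is needed, and both go through as you indicate.
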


	\section{Interactions: Perturbative QFT}
	\label{sec:interact}
	In section $\ref{sec:clft}$ we have presented a perturbative formulation of classical field theory where the main objects are retarded products $R_{n,1}$. Now we turn to the quantum theory, which we will describe in terms of time ordered products $T_n$. A formulation of QFT using retarded products is also possible, and both retarded and time ordered products yield equivalent descriptions.

\subsection{The time ordered product}

While the retarded products $R_{n,1}$ are the expansion coefficients of the retarded fields, the time ordered products  $T_n$ are the expansion coefficients of the $S$-matrix. The physical relevance of the $S$-matrix lays in its role for the description of scattering experiments. In common text book approaches it is taken to be an operator mapping in-states to out-states that can be computed via the Dyson series \cite[chap. 4.2]{peskin95} as (with $\hbar=1$) 
\begin{equation*}
	\textbf{S}(L) = T \left\{\exp \Big[
	i \int dx \, L_{\text{int}}(x)
	\Big]
	\right\} \; ,
\end{equation*}
where $T$ denotes time ordering, that is all terms in the bracket containing an $x$ should be rearranged by putting ``later times to the left". However, this time ordering operation is not well defined for the case of distributions. \\

Our approach is a different one. To construct the $T$-products we use the framework of causal perturbation theory going back to Epstein and Glaser \cite{EG73}. We define the $T$-products axiomatically, where the axioms are motivated by properties that hold true in the classical theory, and then show that we can construct such objects. The axioms are divided into two classes, of which the first one are the basic axioms. In the inductive construction, they determine the $T$-products uniquely up to points where all their arguments are equal. The possible extensions to such points are not unique, but we restrict them by requiring further properties, called renormalization conditions, which form the second class. 

\subsection{Axioms for the time ordered product}

In the following we give the four basic axioms for the $T$-product. 

\begin{definition}
	\label{def:Tn}
	For $n\in\mathbb{N}$ we define the \textit{time-ordered product} of $n$-th order as a map 
	\begin{equation*}
		T_{n} : \mathscr{F}_{\text{loc}}^{\otimes n} 
		\rightarrow \mathscr{F}
	\end{equation*} 
	satisfying the following axioms: 
	\begin{enumerate}[label=(\roman*)]
		\item \textbf{Linearity:} $T_n$ is a linear map. 
		\item \textbf{Initial condition:} $T_1(F)=F$ for all $F \in \mathscr{F}_{\text{loc}}$. 
		\item \textbf{Symmetry:} $T_n$ is a totally symmetric map
		\begin{equation*}
			T_n\left(F_{\pi(1)},\dots,F_{\pi(n)}\right) = T_n\left(F_1,\dots,F_n\right) \quad \forall \pi \in S_n,\; F_1, \dots, F_n \in \mathscr{F}_{\text{loc}} \;.
		\end{equation*} 
		\item \textbf{Causality:} For any $A_1,\dots,A_n \in \mathscr{P}$, $T_n$ factorizes causally. That is 
		\begin{align}
			\label{eq:causality}
			T_n\left(A_1(x_1),\dots,A_n(x_n)\right) = \; &T_k\left(A_{\pi(1)}(x_{\pi(1)}),\dots,A_{\pi(k)}(x_{\pi(k)})\right) \notag\\ 
			&\star 
			T_{n-k}\left(A_{\pi(k+1)}(x_{\pi(k+1)}),\dots,A_{\pi(n)}(x_{\pi(n)})\right)
		\end{align}
		whenever $\{x_{\pi(1)},\dots,x_{\pi(k)}\} \cap 
		\left(\{x_{\pi(k+1)},\dots,x_{\pi(n)}\}+ \overline{V}_-\right) = \emptyset$ for a permutation $\pi \in S_n$. 
	\end{enumerate}
\end{definition}

\begin{definition}
	The \textit{S-matrix} is defined as 
	\begin{equation*}
		\textbf{S}(F) = \sum_{n=0}^{\infty}
		\frac{i^n}{n! \hbar^n}\, T_n\left(F^{\otimes n}\right)
		\equiv T( e_{\otimes}^{iF/\hbar}) \; .
	\end{equation*}
\end{definition}

Axiom (iv) is the translation of the ``time ordering" prescription described above into our framework. In the perturbative setting it is equivalent to the following property of the $S$-matrix:
\begin{equation*}
	\textbf{S} (H+F) = \textbf{S} (H)\star \textbf{S} (F)\quad \text{whenever} \quad \text{supp}\, H \cap {\text{supp}F\,+ \overline{V}_-} = \emptyset
\end{equation*}
If we interpret the $S$-matrix as a scattering operator the physical idea is the following: Whenever the interaction with $H$ does not lay in the past of the interaction with $F$, then the scattering with $H$ and $F$ can be described as two separated scattering processes taking place one after another. 

\begin{remark}
	The fact that the time-ordered products depend only on local functionals implies the Action Ward Identity
	\begin{equation}
		\label{eq:AWI}
		\partial_{x_l}T_n\left(\cdots \otimes A(x_l) \otimes \cdots \right) 
		= T_n \left( \cdots \otimes \partial_{x_l} A(x_l) \otimes \cdots \right) \quad \forall A \in \mathscr{P}, 1\leq l\leq n\; .
	\end{equation}
\end{remark}

\subsection{Inductive construction of the time ordered products}

We now want to construct maps $T_n$ satisfying the above axioms. This is done by induction on $n$, where axiom (ii) provides the basis of the induction. Define the thin diagonal as $\Delta_n := \{(x_1,\dots,x_n)\in\mathbb{M}^n\,|\,x_1=\cdots=x_n\}$. The idea is to work with an open cover of $\mathbb{M}^n\backslash\Delta_n$, where on each set of the cover the $T$-product factors causally and is uniquely determined by the products of lower orders through axiom (iv). This leads to the following result: 

\begin{theorem}
	\normalfont{\cite[chap. 3.3.2]{duetsch19}}
	\textit{Given $T_1,\dots,T_n$, the basic axioms determine $T_{n+1}$ uniquely on the space $\mathscr{D}(\mathbb{M}^n\backslash \Delta_n)$.}
\end{theorem}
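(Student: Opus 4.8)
The plan is to carry out the Epstein--Glaser induction step: use the causality axiom to pin down $T_{n+1}$ on an open cover of $\mathbb{M}^{n+1}\setminus\Delta_{n+1}$ in terms of the already-given lower orders, and then glue the local pieces together. First I would introduce, for every proper nonempty subset $I\subsetneq\{1,\dots,n+1\}$, the set
\[
\mathcal{C}_I := \Big\{(x_1,\dots,x_{n+1})\;\Big|\;\{x_i\,|\,i\in I\}\cap\big(\{x_j\,|\,j\in I^c\}+\overline{V}_-\big)=\emptyset\Big\},
\]
on which no point carrying an index in $I$ lies in the closed causal past of any point indexed by $I^c$. Each $\mathcal{C}_I$ is open because the condition reads $x_i-x_j\notin\overline{V}_-$ for all $i\in I,\,j\in I^c$ and $\overline{V}_-$ is closed. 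I would then check that the $\mathcal{C}_I$ cover $\mathbb{M}^{n+1}\setminus\Delta_{n+1}$: off the thin diagonal the coordinates take at least two distinct values, and since $\overline{V}_+\cap\overline{V}_-=\{0\}$ the relation $x\preceq y :\Leftrightarrow y-x\in\overline{V}_+$ is a partial order, so among the distinct values there is a maximal one; taking $I$ to be the indices realizing that value puts the configuration in $\mathcal{C}_I$ with $I$ and $I^c$ both nonempty.

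On $\mathcal{C}_I$ the causality axiom (iv), together with the symmetry axiom (iii), forces
\[
T_{n+1}\big(A_1(x_1),\dots,A_{n+1}(x_{n+1})\big)=T_{|I|}\big((A_i(x_i))_{i\in I}\big)\star T_{|I^c|}\big((A_j(x_j))_{j\in I^c}\big),
\]
and since $|I|,|I^c|\le n$ the right-hand side is already supplied by the induction hypothesis. This already yields the uniqueness assertion: any admissible $T_{n+1}$ must agree with this prescription on each $\mathcal{C}_I$, hence everywhere off the diagonal.

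The real work, and the step I expect to be the main obstacle, is to show that these local prescriptions agree on the overlaps $\mathcal{C}_I\cap\mathcal{C}_J$, so that they define a single object. Here I would set $A:=I\cap J$, $B:=I\cap J^c$, $C:=I^c\cap J$ and $D:=I^c\cap J^c$, and read off from membership in both $\mathcal{C}_I$ and $\mathcal{C}_J$ the causal relations: $A$ lies in the past of none of $B,C,D$; neither $B$ nor $C$ lies in the past of $D$; and crucially $B$ and $C$ lie in each other's past in neither direction, so the points of $B$ and of $C$ are mutually spacelike. Applying the causal factorization of the lower orders (the induction hypothesis) twice refines the two factorizations to $T_A\star T_B\star T_C\star T_D$ and to $T_A\star T_C\star T_B\star T_D$ respectively; by associativity of $\star$ these coincide once $[T_B,T_C]_\star=0$, which is exactly the spacelike commutativity of Proposition~\ref{prop:spacecomm}.

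Finally I would assemble the global map from the local ones using a partition of unity subordinate to $\{\mathcal{C}_I\}$ on $\mathbb{M}^{n+1}\setminus\Delta_{n+1}$; the overlap consistency just established guarantees independence of the chosen partition and defines $T_{n+1}$ as an $\mathscr{F}$-valued distribution away from the thin diagonal. The delicate points to keep honest are that the wave-front conditions of the fields are preserved under the $\star$-products occurring in the factorizations (so the local expressions genuinely lie in $\mathscr{F}$), and the careful bookkeeping of permutations via axiom (iii) when matching an arbitrary splitting $I$ to the literal ordered form in which axiom (iv) is stated.
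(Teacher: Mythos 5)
Your proposal is correct and takes essentially the same route the paper indicates and delegates to \cite[chap. 3.3.2]{duetsch19}: the Epstein--Glaser cover of $\mathbb{M}^{n+1}\setminus\Delta_{n+1}$ by the causally ordered open sets $\mathcal{C}_I$ (the maximal-element argument for the covering is the standard one), the factorization forced on each $\mathcal{C}_I$ by axioms (iii) and (iv) into products of the given lower orders, and overlap consistency via the four-block decomposition and spacelike commutativity (Proposition \ref{prop:spacecomm}). Note, as you yourself observe, that the uniqueness assertion --- which is all the theorem claims --- already follows from the forced factorization on each $\mathcal{C}_I$; your gluing step (where the reliance on the support property \eqref{eq:tsupp} of the lower-order products, or alternatively a double application of axiom (iv) when $|B\cup C|\le n$, should be made explicit) is the extra ingredient needed for existence of the restricted $T_{n+1}$, not for its uniqueness.
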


On the smaller subspace  $\check{\mathbb{M}}^n := \{(x_1,\dots,x_n) \in \mathbb{M}^n \; | \; x_l \neq x_j \; \forall\, 1 \leq l \leq j \leq n\}$ there is a way to compute the $T_n$ explicitly via the Feynman propagator (that exists as a distribution, how can be shown using Hörmander's criterion):  

\begin{definition}
	The \textit{Feynman propagator} is the distribution whose integral kernel is defined by
	\begin{equation*}
		\Delta_m^F(x) = \theta (x^0) \Delta_m^+(x)
		+ \theta (-x^0) \Delta_m^+(-x) \; .
	\end{equation*}
\end{definition}

\begin{theorem}
	On $\mathscr{D}(\check{\mathbb{M}}^n)$ it holds that 
	\begin{equation}
		\label{eq:unrenT}
		T_n \left( A_1(x_1),\dots,A_n(x_n) \right) = 
		A_1(x_1) \star_{F} \cdots  \star_{F}  A_n(x_n)
	\end{equation}
	for all $A_1,\dots,A_n\in\mathscr{P}$. On the r.h.s, the star product $\star_{F}$ is obtained by replacing the two-point function $\Delta^+$ with the Feynman propagator $\Delta^F$ in the definition of the star product. This expression is called the the \normalfont{unrenormalized $T$-product}. 
\end{theorem}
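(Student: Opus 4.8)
The plan is to prove the identity by induction on $n$, using the causal factorization axiom \eqref{eq:causality} to peel off one point at a time and the inductive hypothesis to handle the remaining $n-1$ arguments. The base case $n=1$ is immediate: the initial condition gives $T_1(A_1(x_1)) = A_1(x_1)$, which is the empty $\star_F$-product. For the inductive step I would fix $A_1,\dots,A_n \in \mathscr{P}$ and a test function supported in $\check{\mathbb{M}}^n$, and exploit that there all arguments are pairwise distinct, so that both sides are \emph{a priori} well-defined distributions on $\check{\mathbb{M}}^n$ — the right-hand side because each $\Delta_m^F(x_a-x_b)$ is singular only at coinciding points, so the pointwise products exist by Hörmander's criterion (Theorem~\ref{thm:hoerm}), and the left-hand side because the $T_n$ are already determined there.

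The core is a local statement on an open cover. For each $i$ let
\[
U_i := \{(x_1,\dots,x_n) \in \check{\mathbb{M}}^n \mid x_i - x_j \notin \overline{V}_- \text{ for all } j \neq i\},
\]
which is open because $\overline{V}_-$ is closed. I claim the $U_i$ cover $\check{\mathbb{M}}^n$: the relation $x \lesssim y :\Leftrightarrow y - x \in \overline{V}_+$ is a partial order on any finite set of distinct points, so a maximal point $x_i$ always exists, and maximality is exactly the condition defining $U_i$. On $U_i$ the single-point group $\{x_i\}$ is causally later than the rest, so the causality axiom \eqref{eq:causality} with $k=1$ gives
\[
T_n(A_1(x_1),\dots,A_n(x_n)) = A_i(x_i) \star T_{n-1}\big((A_j(x_j))_{j\neq i}\big),
\]
and since the $n-1$ remaining points are distinct, the inductive hypothesis rewrites $T_{n-1}$ as the $\star_F$-product of the $A_j(x_j)$, $j \neq i$.

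It then remains to convert the outer $\star$ into a $\star_F$. Here the key observation is that on $U_i$ each difference $x_i - x_j$ is either future-pointing or spacelike, and in both cases $\Delta_m^F(x_i - x_j) = \Delta_m^+(x_i - x_j)$: for future-pointing differences this is immediate from the definition of $\Delta_m^F$, while for spacelike differences the commutator $\Delta_m$ vanishes, so $i\Delta_m = \Delta_m^+(x) - \Delta_m^+(-x)$ forces $\Delta_m^+(x) = \Delta_m^+(-x)$ and again $\Delta_m^F = \Delta_m^+$. Consequently every contraction that $A_i(x_i) \star (\cdot)$ forms with the remaining fields uses the same kernel as the corresponding Feynman contraction, and by associativity of $\star_F$ the whole expression equals $A_1(x_1) \star_F \cdots \star_F A_n(x_n)$ on $U_i$. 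Since both sides are globally well-defined distributions on $\check{\mathbb{M}}^n$ that agree on each member of the open cover $\{U_i\}$, they coincide on all of $\mathscr{D}(\check{\mathbb{M}}^n)$; overlaps require no separate check precisely because the right-hand side is a single fixed distribution. I expect the main obstacle to be the bookkeeping in this last step: one must verify that forming $A_i(x_i)\star(\cdot)$ adds exactly the missing contractions between $x_i$ and the other points, with no spurious or double contractions, which amounts to matching the combinatorics of the star product against Wick's theorem for the full $\star_F$-product.
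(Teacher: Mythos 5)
Your proposal is correct and coincides with the standard Epstein--Glaser argument that the paper (via D\"utsch's book) relies on: cover $\check{\mathbb{M}}^n$ by the open sets $U_i$ containing a causally maximal point, apply the causality axiom with $k=1$ together with the inductive hypothesis, use $\Delta_m^F=\Delta_m^+$ as distributions on $\mathbb{M}\setminus\overline{V}_-$ to turn the outer $\star$ into $\star_F$, and glue via the sheaf property of distributions. One slip in a side remark only: $\Delta_m^F$ is singular on the whole light cone, not merely at coinciding points; the right-hand side is nonetheless well defined on $\check{\mathbb{M}}^n$ because for $x\neq 0$ the wave front set of $\Delta_m^F$ is one-sided (locally it equals $\Delta_m^+(\pm x)$), so H\"ormander's criterion (Theorem~\ref{thm:hoerm}) still covers the same-argument powers $\Delta_m^F(x_a-x_b)^p$ arising in loops, which are the only genuinely problematic products.
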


The next task is to extend the time ordered products to the thin diagonal $\Delta^n$. This step is called renormalization, it is no more unique. We require properties from the renormalized $T$-products that restrict the possible extensions. To formulate them we will need some further notions, which we introduce in the following three definitions. 

\begin{definition}
	A \text{state} $\omega$ on a unital $*$-algebra $\mathscr{A}$ is linear a map $\omega :  \mathscr{A} \rightarrow \mathbb{C}$ which is 
	\begin{enumerate}[label=(\roman*)]
		\item real: $\omega(F^*) = \overline{\omega(F)}$,
		\item positive: $\omega(A^* A) \geq 0$,
		\item normalized: $\omega(1)=1$. 
	\end{enumerate} 
	We define the \textit{vacuum state} $\omega_0$ on the algebra of quantum fields as 
	\begin{align*}
		\omega_0 : \mathfrak{A} &\rightarrow \mathbb{C} \notag\\
		F &\mapsto f_0 \; ,
	\end{align*}
	where $f_0$ as in equation (\ref{eq:fields}). Lowercase letters will be used to denote the vacuum expectation values (VEVs) of objects, e.g. $t_n(A_1,\dots,A_n)=\omega_0\left(T_n\left(A_1,\dots,A_n\right)\right)$. 
\end{definition}

\begin{definition}
	For a monomial $\partial^a \varphi \in \mathscr{P}$ where $a\in\mathbb{N}^d$, its \textit{mass dimension} is defined as
	\begin{equation*}
	\text{dim}\,\partial^a \varphi := \frac{d-2}{2}
	+|a| \; .
	\end{equation*}
	Let $\mathscr{P}_j$ be the vector space spanned by all monomials $A\in \mathscr{P}$ with $\text{dim}\,A=j$. We define the set of homogeneous polynomials (w.r.t their mass dimension) as the union $\mathscr{P}_{\text{hom}}:=\bigcup\limits_{j\in\mathbb{N}}\mathscr{P}_j$.
\end{definition} 

\begin{definition}
	The \textit{field parity transformation} on $\mathscr{F}$ corresponding to the mapping $\varphi \mapsto - \varphi$ is defined as 
	\begin{align*}
		\alpha : \mathscr{F} &\rightarrow \mathscr{F} \notag\\
		(\alpha F)(h) &= F(-h) \quad \forall h \in \mathcal{E} \; .
	\end{align*}
\end{definition}

With these additional definitions, the conditions that we impose when extending the $T$-products to the thin diagonal may be formulated. They are motivated by properties of the classical theory that we want to maintain as far as possible in the quantum theory. 

\begin{definition}
	\label{def:rencon}
	The \textit{renormalization conditions} for the $T$-product are
	\begin{enumerate}[label=(\roman*)]
		\setcounter{enumi}{4}
		\item \textbf{Field independence:} 
		\begin{equation*}
			\frac{\delta T_n(F^{\otimes n})}{\delta \varphi(x)}
			= n T_n \left( \frac{\delta F}{\delta \varphi(x)}
			\otimes F^{\otimes(n-1)} \right)
		\end{equation*}
		\item \textbf{$*$-structure and field parity:}
		\begin{equation*}
			\textbf{S}(F)^* = \textbf{S}(F^*)^{\star -1} \;\; \forall F \in \mathscr{F}_{\text{loc}}
			\qquad \text{and} \qquad 
			\alpha \circ T_n = T_n \circ \alpha^{\otimes n} \; .
		\end{equation*}
		\item \textbf{Poincaré Covariance:}
		\begin{equation*}
			\beta_{\Lambda,a} \circ T_n = T_n \circ \beta_{\Lambda,a}^{\otimes n} \quad \forall (\Lambda, a) \in \mathscr{P}^{\uparrow}_+
		\end{equation*}
		\item \textbf{Further symmetries:} If the unrenormalized $T_n$ satisfy additional symmetries, we require them to hold also for the renormalized $T_n$ (for more details see chapter \ref{sec:qMWI}). 
		\item \textbf{Off-shell field equation:}
		\begin{align*}
			T_n \left(\varphi(g) \otimes F_1 \cdots \otimes F_{n-1}\right) &= \varphi(g)\, T_{n-1} \left( F_1 \otimes\cdots F_{n-1} \right) \notag\\
			+\hbar \int dx \,dy\,g(x) &\Delta_m^F(x-y) \frac{\delta}{\delta\varphi(x)}
			T_{n-1} \left( F_1 \otimes\cdots \otimes F_{n-1} \right)
		\end{align*}
		\item \textbf{$\hbar$-dependence:}
		\begin{equation*}
			t(A_1\dots,A_n) \sim \hbar^{\sum_{j=1}^{n} |A_j|/2}
		\end{equation*}
		for all monomials $A_1,\dots,A_n$ which fulfill $A_j \sim \hbar^0 \; \forall j$. The order of a monomial $A=c\prod_{l=1}^{L}\partial^{a_l}\varphi$ in $\varphi$ is defined as $|A|:=L$. 
	\end{enumerate}
\end{definition}

That these conditions are really renormalization conditions is not obvious, but can be shown to hold true. 

\begin{proposition}
	The unrenormalized $T$-products satisfy all renormalization conditions. 
\end{proposition}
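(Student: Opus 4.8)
The plan is to use the explicit closed form of the unrenormalized products. By \eqref{eq:unrenT}, on $\check{\mathbb{M}}^n$ we have $T_n(A_1(x_1),\dots,A_n(x_n)) = A_1(x_1)\star_{F}\cdots\star_{F}A_n(x_n)$, so every condition becomes an algebraic statement about a single bidifferential operation and can be checked by direct computation. I would first record the two structural facts that drive everything: the Feynman propagator is even, $\Delta_m^F(x)=\Delta_m^F(-x)$ (immediate from its definition), so $\star_{F}$ is \emph{associative and commutative}; and its coefficients do not depend on $\varphi$, so functional differentiation commutes with the derivatives occurring in the star product. Commutativity already makes the unrenormalized $T_n$ symmetric, and reduces each renormalization condition to a property of $\star_{F}$ that then propagates to the $n$-fold product.

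Next I would dispatch the conditions that need only such a property. For field independence (v), the remarks above give the Leibniz rule $\tfrac{\delta}{\delta\varphi(x)}(F\star_{F}G)=\tfrac{\delta F}{\delta\varphi(x)}\star_{F}G+F\star_{F}\tfrac{\delta G}{\delta\varphi(x)}$, i.e.\ $\tfrac{\delta}{\delta\varphi(x)}$ is a derivation for $\star_{F}$; applying it to $F\star_{F}\cdots\star_{F}F$ and collecting the $n$ identical terms (using commutativity) yields the factor $n$. For Poincaré covariance (vii) the only input is $\Delta_m^F(\Lambda x)=\Delta_m^F(x)$ for $(\Lambda,a)\in\mathscr{P}_+^{\uparrow}$, which holds because $\Delta_m^+$ is Lorentz invariant and $\theta(x^0)$ is preserved by orthochronous transformations; a change of variables in each contraction integral gives $\beta_{\Lambda,a}(F\star_{F}G)=\beta_{\Lambda,a}F\star_{F}\beta_{\Lambda,a}G$, which iterates. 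The off-shell field equation (ix) follows by peeling off the linear factor, $T_n(\varphi(g)\otimes F_1\otimes\cdots\otimes F_{n-1})=\varphi(g)\star_{F}T_{n-1}(F_1,\dots,F_{n-1})$, and noting that $\varphi(g)$ is linear in $\varphi$, so its functional derivatives of order $\geq 2$ vanish and the star product truncates after a single contraction, reproducing the two displayed terms with $\tfrac{\delta\varphi(g)}{\delta\varphi(x)}=g(x)$.

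The parity half of (vi) also follows by sign bookkeeping: $\alpha$ sends $\varphi\mapsto-\varphi$, hence $\tfrac{\delta(\alpha F)}{\delta\varphi}=-\alpha\tfrac{\delta F}{\delta\varphi}$; each order of $\star_{F}$ carries $2n$ functional derivatives whose signs cancel, and with $\Delta_m^F$ even one obtains $\alpha(F\star_{F}G)=\alpha F\star_{F}\alpha G$, so $\alpha\circ T_n=T_n\circ\alpha^{\otimes n}$. Condition (viii) is vacuous for the unrenormalized products, being a requirement imposed only on their renormalized counterparts. The $\hbar$-dependence (x) is a contraction count: $\omega_0$ projects $A_1\star_{F}\cdots\star_{F}A_n$ onto the term with no surviving $\varphi$, i.e.\ the fully contracted one; each contraction carries one power of $\hbar$, and saturating the vacuum expectation forces all $\sum_j|A_j|$ fields to pair up into $\tfrac12\sum_j|A_j|$ contractions, giving $t_n(A_1,\dots,A_n)\sim\hbar^{\sum_j|A_j|/2}$.

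I expect the genuine obstacle to be the $*$-structure identity $\mathbf{S}(F)^*=\mathbf{S}(F^*)^{\star-1}$, the only condition not settled by a one-line check. Conjugation does not fix $\star_{F}$: since $\overline{\Delta_m^+(x)}=\Delta_m^+(-x)$, one finds $\overline{\Delta_m^F(x)}=\theta(x^0)\Delta_m^+(-x)+\theta(-x^0)\Delta_m^+(x)$, the anti-Feynman propagator, so $(F\star_{F}G)^*=G^*\star_{\bar F}F^*$ with $\star_{\bar F}$ the corresponding anti-Feynman star product. The hard part is to show that the $\star$-inverse $\mathbf{S}(F)^{\star-1}$ is exactly the generating series of the anti-time-ordered products built from $\star_{\bar F}$, so that conjugating $\mathbf{S}(F)$ and inverting $\mathbf{S}(F^*)$ yield the same series. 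This is the unitarity of the $S$-matrix and requires matching the combinatorics of $\mathbf{S}(F)\star\mathbf{S}(F)^{\star-1}=1$ against the conjugated products, rather than the direct algebraic manipulations that suffice elsewhere.
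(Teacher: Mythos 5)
The paper itself offers no proof of this proposition --- it is asserted (``can be shown to hold true'') with the verification deferred to \cite{duetsch19} --- so your proposal can only be judged on its own merits. Your overall route, verifying each condition directly on the explicit formula (\ref{eq:unrenT}) using two structural facts about $\star_F$ (the Feynman propagator is even and field-independent), is exactly the standard one, and your checks of field independence (v), Poincar\'e covariance (vii), the off-shell field equation (ix), the parity half of (vi), the vacuity of (viii), and the $\hbar$-count (x) are all correct. Two small imprecisions: $\theta(x^0)$ is invariant under $\mathscr{L}_+^{\uparrow}$ only on the closed light cone, so Lorentz invariance of $\Delta_m^F$ at spacelike $x$ additionally needs $\Delta_m^+(x)=\Delta_m^+(-x)$ there (equivalently, $\Delta_m$ vanishes at spacelike arguments); and ``associative'' for $\star_F$ must be read as associative wherever the pointwise products of propagators exist, which on $\check{\mathbb{M}}^n$ is guaranteed but is precisely what fails on the diagonal --- this is worth saying, since it delimits the domain on which all your identities hold.

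The genuine gap is the one you flag yourself: the $*$-structure condition $\mathbf{S}(F)^* = \mathbf{S}(F^*)^{\star-1}$ is reduced, correctly, to identifying $\mathbf{S}(F)^{\star-1}$ with the anti-chronological exponential built from $\star_{\bar F}$, but that identification \emph{is} the content of the condition, and announcing it as ``the hard part'' does not discharge it. It can be closed as follows. Define $\overline{T}_n\left(A_1(x_1),\dots,A_n(x_n)\right) := A_1(x_1)\star_{\bar F}\cdots\star_{\bar F}A_n(x_n)$ on $\check{\mathbb{M}}^n$; since $\Delta_m^F(x-y)=\Delta_m^+(x-y)$ whenever $x \notin y+\overline{V}_-$, the $T_n$ factorize causally with later points to the left (w.r.t.\ $\star$), and the same computation for the anti-Feynman propagator shows the $\overline{T}_n$ factorize with later points to the \emph{right}. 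Then the identity $\sum_{I\subseteq N}(-1)^{|I|}\,\overline{T}\left(\otimes_{i\in I}F_i\right)\star T\left(\otimes_{i\in I^c}F_i\right)=0$ for $N\neq\emptyset$ follows by choosing a point $x_j$ not in the past of the others (possible off the diagonal), pulling $F_j$ out to the middle of each term via the two factorizations, and observing that terms with $j\in I$ and $j\notin I$ cancel in pairs; summing over $n$ shows that $\sum_n \frac{(-i)^n}{n!\hbar^n}\overline{T}_n\left((F^*)^{\otimes n}\right)$ is the $\star$-inverse of $\mathbf{S}(F^*)$. Combined with your conjugation formula, iterated to $\left(T_n(F^{\otimes n})\right)^* = \overline{T}_n\left((F^*)^{\otimes n}\right)$ (using commutativity of $\star_{\bar F}$, whose propagator is also even), this gives the condition. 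Finally, if ``all renormalization conditions'' is read to include the scaling-degree condition (xi) stated later in the paper, your proposal is silent on it; one would add the routine count $\mathrm{sd}\left(\partial^a\Delta_m^F\right)=d-2+|a|$ together with additivity of the scaling degree under tensor products to match the bound $\sum_j \dim A_j$.
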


If the renormalization conditions are satisfied they imply the following statements. 
\begin{theorem}
	\label{thm:cWick}
	The field independence axiom \normalfont{(v)} \textit{implies the validity of the causal Wick expansion. Let $A_1,\dots,A_n \in \mathscr{P}$ be monomials and write}
	\begin{align*}
		\underline{A} & := \frac{\partial^k A}{\partial(\partial^{a_1}\varphi)\cdots\partial(\partial^{a_k}\varphi)} \neq 0 \; , \notag\\
		\overline{A} & := C_{a_1\dots a_n} \partial^{a_1}\varphi\cdots\partial^{a_n}\varphi \;,
	\end{align*}
	\textit{where $C_{a_1\dots a_n}$ is a combinatorial factor. The causal Wick expansion for a time-ordered product reads}
	\begin{align}
		\label{eq:cwick}
		T_n&\left( A_1(x_1)\otimes\cdots\otimes A_n(x_n) \right)\notag\\
		&= \sum_{\underline{A}_l\subseteq A_l}\omega_0 \left(T_n\left( \underline{A}_1(x_1),\dots,\underline{A}_n(x_n)
		\right)\right)
		\overline{A}_1(x_1)\cdots\overline{A}_n(x_n) \; ,
	\end{align}
	\textit{where the sum runs over all $k\in\mathbb{N}$ and $a_1,\dots a_k \in \mathbb{N}^d$ which yield a non-vanishing $\underline{A}$. }
\end{theorem}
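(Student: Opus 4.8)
The plan is to recognize the asserted identity as the terminating functional Taylor expansion of $T_n(\mathbf{A})$, where $\mathbf{A} := A_1(x_1)\otimes\cdots\otimes A_n(x_n)$, about the zero configuration, with the Taylor coefficients computed by means of field independence. The starting observation is that the vacuum state coincides with evaluation at $\varphi = 0$: since $\omega_0(F) = f_0$ is the constant term in (\ref{eq:fields}) and every field-dependent summand there vanishes at $h = 0$, we have $\omega_0(F) = F(0)$ for all $F \in \mathscr{F}$. Because each $A_l$ is a polynomial in the $\partial^a\varphi$ and $T_n$ takes values in $\mathscr{F}$, the functional $T_n(\mathbf{A})$ has finite degree in $\varphi$ (with distributional coefficients in the $x_l$) and hence equals its finite Taylor series
\begin{equation*}
  T_n(\mathbf{A}) = \sum_{m \geq 0} \frac{1}{m!} \int dy_1 \cdots dy_m \; \omega_0\!\left( \frac{\delta^m T_n(\mathbf{A})}{\delta\varphi(y_1)\cdots\delta\varphi(y_m)} \right) \varphi(y_1)\cdots\varphi(y_m) \; .
\end{equation*}

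Next I would compute these coefficients. By linearity and polarization, axiom (v) takes the distributed form
\begin{equation*}
  \frac{\delta}{\delta\varphi(y)} T_n(A_1(x_1),\dots,A_n(x_n)) = \sum_{l=1}^{n} T_n\!\left( A_1(x_1),\dots,\frac{\delta A_l(x_l)}{\delta\varphi(y)},\dots,A_n(x_n) \right) ,
\end{equation*}
and iterating it $m$ times spreads the $m$ functional derivatives over the $n$ slots. Each individual functional derivative is evaluated through (\ref{eq:funcder}), which replaces the relevant factor of $A_l$ by an algebraic derivative $\partial A_l / \partial(\partial^a\varphi)$ and emits a factor $(\partial^a\delta)(x_l - y)$. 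After all $m$ derivatives have acted, the argument of $T_n$ is the tuple of algebraically differentiated sub-monomials $\underline{A}_1,\dots,\underline{A}_n$, accompanied by a product of derivatives of $\delta$; applying $\omega_0$, i.e.\ evaluating at $\varphi = 0$, then converts $T_n$ of these sub-monomials into the VEV $\omega_0(T_n(\underline{A}_1(x_1),\dots,\underline{A}_n(x_n)))$ and leaves the $\delta$-factors untouched, since they carry no field content.

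Substituting back into the Taylor series, each integration $\int dy_j \,(\partial^{a}\delta)(x_l - y_j)\,\varphi(y_j)$ collapses, by the defining action of $\partial^a\delta$, to a field factor $\partial^a\varphi(x_l)$; these factors assemble slot by slot into the monomials $\overline{A}_l(x_l)$. What then remains is to check that the purely combinatorial weights agree: the prefactor $1/m!$, the number of ways of distributing the $m$ derivatives among the slots, and the multiplicities generated by repeated algebraic differentiation must combine into precisely the constant $C_{a_1\dots a_n}$ carried by $\overline{A}_l$, and the double sum over $m$ and over distributions must reorganize into the single sum over sub-monomials $\underline{A}_l \subseteq A_l$ (equivalently over $k$ and $a_1,\dots,a_k$) displayed in (\ref{eq:cwick}).

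I expect this combinatorial bookkeeping to be the main obstacle: the conceptual mechanism is simply ``finite Taylor expansion plus field independence,'' but arranging the symmetry factors so that the $1/m!$ is exactly absorbed and the derivatives of $\delta$ rebuild the intended $\overline{A}_l$ without miscounting is the delicate step. Everything else — the identification $\omega_0 = \,$evaluation at $\varphi=0$, the termination of the Taylor series, and the reduction of functional derivatives via (\ref{eq:funcder}) — is routine once the polarized form of axiom (v) is in hand.
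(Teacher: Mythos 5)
Your proposal is correct and follows essentially the same route as the proof the paper relies on (the theorem is stated here without proof, citing \cite{duetsch19}, where the argument is precisely your ``terminating functional Taylor expansion at $\varphi=0$'' combined with the polarized form of field independence and equation (\ref{eq:funcder}), using $\omega_0(F)=F(0)$). The combinatorial bookkeeping you defer is indeed routine and is absorbed into the unspecified factor $C_{a_1\dots a_k}$ (which works out to the $1/k!$ of the Taylor series over ordered tuples $(a_1,\dots,a_k)$), so no essential step is missing.
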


\begin{proposition}
	Furthermore field independence implies that the kernels of the $T$-products are localized at their arguments, that is
	\begin{equation}
		\label{eq:tsupp}
		\text{\normalfont{supp}}\, T\left(A_1(x_1),\dots,A_n(x_n)\right)
		\subseteq \{x_1,\dots,x_n\} \; ,
	\end{equation}
	where definition \normalfont{\ref{def:supp}} \textit{of the support is used.}
\end{proposition}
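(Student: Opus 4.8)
The plan is to unwind Definition~\ref{def:supp}: by definition the support of the field $F := T_n(A_1(x_1),\dots,A_n(x_n))$ is the support of the distribution $y \mapsto \delta F/\delta\varphi(y)$ in $\mathscr{D}'(\mathbb{M},\mathbb{C})$. It therefore suffices to show that this functional derivative, regarded as a distribution in the single variable $y$, is supported in the finite set $\{x_1,\dots,x_n\}$. Here the points $x_1,\dots,x_n$ enter merely as fixed parameters of the integral kernel, while the $\varphi$-dependence of $F$ is exactly what the functional derivative probes.

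First I would promote the field independence axiom~(v) to its polarized form. Since $T_n$ is multilinear and totally symmetric (axioms (i) and (iii)), substituting $F \mapsto \sum_i \lambda_i F_i$ in~(v) and extracting the coefficient of $\lambda_1\cdots\lambda_n$ yields the Leibniz rule
\begin{equation*}
	\frac{\delta}{\delta\varphi(y)} T_n\big(F_1\otimes\cdots\otimes F_n\big)
	= \sum_{l=1}^{n} T_n\Big(F_1\otimes\cdots\otimes \tfrac{\delta F_l}{\delta\varphi(y)}\otimes\cdots\otimes F_n\Big) \;.
\end{equation*}
Specialising to $F_l = A_l(x_l)$ and inserting the explicit expression~(\ref{eq:funcder}) for the functional derivative of a monomial, each summand contains the factor
\begin{equation*}
	\frac{\delta A_l(x_l)}{\delta\varphi(y)} = \sum_{a\in\mathbb{N}^d} (\partial^a\delta)(x_l-y)\,\frac{\partial A_l}{\partial(\partial^a\varphi)}(x_l) \;,
\end{equation*}
which is a finite sum of derivatives of delta distributions centred at $y = x_l$, multiplied by field monomials $\partial A_l/\partial(\partial^a\varphi)\in\mathscr{P}$ localised at $x_l$.

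The decisive observation is that the numerical distribution $(\partial^a\delta)(x_l-y)$ carries no $\varphi$-dependence and so may be pulled out of $T_n$ by linearity (axiom (i)). The $l$-th summand thus reads
\begin{equation*}
	\sum_{a\in\mathbb{N}^d} (\partial^a\delta)(x_l-y)\; T_n\Big(A_1(x_1)\otimes\cdots\otimes \tfrac{\partial A_l}{\partial(\partial^a\varphi)}(x_l)\otimes\cdots\otimes A_n(x_n)\Big) \;,
\end{equation*}
and as a distribution in $y$ this is manifestly supported in $\{x_l\}$, no matter which field the remaining $T_n$ produces. Taking the union over $l=1,\dots,n$ gives $\text{supp}\,(\delta F/\delta\varphi(\cdot)) \subseteq \{x_1,\dots,x_n\}$, which is the claim~(\ref{eq:tsupp}).

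The main obstacle is the passage to the polarized Leibniz rule together with the accompanying distributional bookkeeping: one must verify that derivatives of delta distributions can legitimately be factored through the multilinear map $T_n$, i.e.\ that $T_n$ applied to a distribution-valued argument of the form $(\partial^a\delta)(x_l-y)\,B(x_l)$ equals $(\partial^a\delta)(x_l-y)$ times $T_n(\dots,B(x_l),\dots)$. This is exactly where linearity of $T_n$ in each slot is used. As an alternative one could invoke the causal Wick expansion of Theorem~\ref{thm:cWick}: there the entire $\varphi$-dependence of $T_n(A_1(x_1),\dots,A_n(x_n))$ resides in the product $\overline{A}_1(x_1)\cdots\overline{A}_n(x_n)$ of monomials localised at the $x_l$, while the coefficients $\omega_0(T_n(\underline{A}_1,\dots,\underline{A}_n))$ are $\varphi$-independent, so that $\delta/\delta\varphi(y)$ reaches only the $\overline{A}_l(x_l)$ and the same localisation~(\ref{eq:funcder}) applies.
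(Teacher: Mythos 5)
Your proof is correct and follows exactly the route the paper intends: the proposition is stated there without an explicit proof as a consequence of field independence, and your polarized Leibniz rule combined with the explicit formula~(\ref{eq:funcder}), pulling the numerical distributions $(\partial^a\delta)(x_l-y)$ out of $T_n$ by linearity, is the standard argument behind it (your alternative via the causal Wick expansion of Theorem~\ref{thm:cWick} is the same mechanism in disguise). No gaps; your explicit flagging of the distributional bookkeeping, which is handled by smearing in $x_1,\dots,x_n$, is exactly the right point to be careful about.
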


\begin{proposition}
	\label{prop:t_transl}
	From Poincaré covariance and the fact that $\omega_0 \circ\beta_{\Lambda,a} = \omega_0$ it follows that the VEVs of the $T$-products depend only on their relative coordinates 
	\begin{equation*}
		\omega_0\big(\,T_n\left( A_1(x_1),\dots A_n(x_n) \right)\big) = 
		t_n\left(A_1,\dots,A_n\right)(x_1-x_n,\dots x_{n-1}-x_n) \; ,
	\end{equation*}
	that is they are translation invariant, numerical ($\mathbb{C}$-valued) distributions. 
\end{proposition}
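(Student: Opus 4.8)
The plan is to extract translation invariance from the full Poincaré covariance by specializing to pure translations, and then to invoke the elementary fact that a translation-invariant distribution depends only on difference coordinates. The vacuum invariance $\omega_0 \circ \beta_{\Lambda,a} = \omega_0$ is what converts the covariance of the field-valued map $T_n$ into an honest invariance of its numerical vacuum expectation value.

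First I would observe that for any $F \in \mathscr{F}$ the definition of the vacuum state gives $\omega_0(F) = f_0 \in \mathbb{C}$, so that $\omega_0\big(T_n(A_1(x_1),\dots,A_n(x_n))\big)$ is a $\mathbb{C}$-valued distribution in $(x_1,\dots,x_n)$; this already settles the assertion that the VEVs are numerical distributions. The localization property (\ref{eq:tsupp}) guarantees that this object is a well-defined distribution supported on the diagonal directions, so no question of definedness arises.

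Next I would compute the action of a pure translation on the arguments. Writing a monomial $A \in \mathscr{P}$ in terms of basic fields and reading off from the definition of $\beta_{\Lambda,a}$ that $\beta_{\text{id},a}\varphi(y) = \varphi(y+a)$, one obtains $\beta_{\text{id},a}A(y) = A(y+a)$ for every $A \in \mathscr{P}$. Inserting $\Lambda = \text{id}$ into the Poincaré covariance condition (vii) and then applying $\omega_0$ yields, using $\omega_0 \circ \beta_{\text{id},a} = \omega_0$,
\begin{align*}
	\omega_0\big(T_n(A_1(x_1+a),\dots,A_n(x_n+a))\big)
	&= \omega_0\big(\beta_{\text{id},a}\,T_n(A_1(x_1),\dots,A_n(x_n))\big) \\
	&= \omega_0\big(T_n(A_1(x_1),\dots,A_n(x_n))\big)
\end{align*}
for every $a \in \mathbb{M}$. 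Hence the numerical distribution is invariant under the simultaneous translation $x_i \mapsto x_i + a$ of all its arguments.

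Finally I would pass to difference coordinates. Introducing $\xi_i := x_i - x_n$ for $1 \leq i \leq n-1$ together with $\eta := x_n$, the simultaneous-translation invariance established above is precisely the statement that the distribution is independent of $\eta$. By the structure theorem for translation-invariant distributions --- equivalently, by the vanishing of the derivative along the total-translation direction --- it therefore descends to a distribution $t_n(A_1,\dots,A_n)$ in the variables $(\xi_1,\dots,\xi_{n-1})$, which is the asserted form. The only point requiring a little care is to phrase this last step distributionally rather than pointwise, testing against $\mathscr{D}(\mathbb{M}^n)$ and using that pairing with a test function commutes with the translation; this is standard and constitutes the sole, mild technical obstacle.
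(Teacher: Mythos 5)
Your argument is correct and is precisely the one the paper intends (the proposition is stated there without an explicit proof): specialize the covariance axiom (vii) to pure translations, where $\beta_{\mathrm{id},a}A(y)=A(y+a)$ for $A\in\mathscr{P}$, use $\omega_0\circ\beta_{\mathrm{id},a}=\omega_0$ to turn covariance of the field-valued $T_n$ into invariance of its VEV under simultaneous translation of all arguments, and then descend to the difference variables $x_i-x_n$ by the standard structure theorem for translation-invariant distributions. One peripheral correction: your appeal to (\ref{eq:tsupp}) for well-definedness is misplaced, since that property concerns the functional support $\mathrm{supp}\,\delta T/\delta\varphi$ in the sense of Definition \ref{def:supp} (and the VEV $t_n$ is in general \emph{not} supported on the diagonal); that the VEV is a numerical distribution in $(x_1,\dots,x_n)$ follows simply from $T_n(A_1(x_1),\dots,A_n(x_n))$ being an $\mathscr{F}$-valued distribution composed with the linear map $\omega_0:F\mapsto f_0$ --- this slip is not load-bearing and does not affect the validity of your proof.
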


The open question is now if there exist extended $T_n$ satisfying all the conditions and if so, how much arbitrariness is left in their choice. From theorem \ref{thm:cWick} and proposition \ref{prop:t_transl} we see that we can express any unrenormalized $T$-product as the sum over translation invariant numerical distributions. So to extend the $T$-products it is sufficient to extend all their VEVs $t$ to the origin. The uniqueness of this procedure may be characterized in terms of the scaling degree of a distribution, which -- roughly speaking -- gives a measure of the strength of its singularity at $x=0$. 

\begin{definition}
	The \textit{scaling degree} (w.r.t the origin) of a distribution $t \in \mathscr{D}'(\mathbb{R}^k)$ or $t \in \mathscr{D}'(\mathbb{R}^k\backslash\{0\})$ is given by
	\begin{equation*}
	\text{sd}(t) = \text{inf}\, \{ r \in \mathbb{R} \; | \lim\limits_{\lambda \searrow 0} \lambda^r \, t(\lambda x) = 0  \} \; .
	\end{equation*}
	We set $\text{inf}\,\emptyset:= \infty$ and $\text{inf}\,\mathbb{R}:= -\infty$. 
\end{definition}

The possible extensions of distributions to the origin are characterized by the following theorem, due to \cite[chap. 5]{fred00}. 

\begin{definition}
	An \textit{extension} of a distribution $r^0 \in \mathscr{D}'(\mathbb{R}^k\backslash\{0\})$ is a distribution $r \in \mathscr{D}'(\mathbb{R}^k)$ such that $r(f)=r^0 (f) \;\; \forall f \in \mathscr{D}(\mathbb{R}^k\backslash\{0\})$. 
\end{definition}

\begin{theorem}[extensions of distribution] \textit{
	Let $t^0 \in \mathscr{D}'(\mathbb{R}^k\backslash\{0\})$.}
	\begin{enumerate}[label=(\roman*)]
		\normalfont
		\item
		\textit{If $\text{sd}\,(t^0)<k$, there exists a unique extension $t \in \mathscr{D}'(\mathbb{R}^k)$ fulfilling the condition $\text{sd}\,(t)=\text{sd}\,(t^0)$.}
		\normalfont
		\item 
		\textit{If $k\leq\text{sd}\,(t^0)<\infty$, several extensions $t \in \mathscr{D}'(\mathbb{R}^k)$ satisfying $\text{sd}\,(t)=\text{sd}\,(t^0)$ exist. The difference of two such solutions $t$ and $t'$ is of the form}
		\begin{equation}
		\label{eq_fredh}
		t'-t = \sum_{|a|\leq\text{sd}(t^0)-k} C_a \,\partial^a \delta_{(k)} \qquad \text{\textit{where}} \quad C_a \in \mathbb{C} \; .
		\end{equation} 
	\end{enumerate}
\end{theorem}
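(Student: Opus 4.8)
The plan is to handle the uniqueness assertion of (i) and the difference formula of (ii) simultaneously, since both rest on understanding distributions supported at the origin, and then to attack the two existence statements by a single explicit limiting construction, whose convergence is where the genuine analytic work lies.

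First I would dispose of the structural part. If $t, t' \in \mathscr{D}'(\mathbb{R}^k)$ are two extensions of $t^0$, their difference $u := t' - t$ satisfies $\langle u, f\rangle = 0$ for every $f \in \mathscr{D}(\mathbb{R}^k\setminus\{0\})$, so $\operatorname{supp} u \subseteq \{0\}$. By the standard structure theorem for distributions supported at a point, $u = \sum_{|a|\leq p} C_a\,\partial^a\delta_{(k)}$ for some finite $p$ and constants $C_a \in \mathbb{C}$. Since $\delta_{(k)}(\lambda x) = \lambda^{-k}\delta_{(k)}(x)$ gives $\operatorname{sd}(\partial^a\delta_{(k)}) = k + |a|$, and the scaling degree of a finite sum of homogeneous pieces of distinct degree equals the maximum of the individual degrees, the requirement $\operatorname{sd}(t') = \operatorname{sd}(t) = \operatorname{sd}(t^0)$ forces $C_a = 0$ whenever $k + |a| > \operatorname{sd}(t^0)$. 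This is exactly the claim \eqref{eq_fredh} in case (ii); and in case (i), where $\operatorname{sd}(t^0) < k$, no multi-index survives, so $u = 0$ and the extension is unique.

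For existence I would fix $\chi \in \mathscr{D}(\mathbb{R}^k)$ with $\chi \equiv 1$ near the origin and set $\chi_n(x) := \chi(2^n x)$, so that $(1-\chi_n)f$ removes a shrinking neighbourhood of $0$. In the sub-critical case $\omega := \operatorname{sd}(t^0) < k$ I would define
\[
\langle t, f\rangle := \lim_{n\to\infty}\langle t^0, (1-\chi_n)f\rangle,
\]
and study the telescoping increments $\langle t^0, (\chi_n-\chi_{n+1})f\rangle$, each being the pairing of $t^0$ with a function localized in a dyadic annulus of radius $\sim 2^{-n}$; for $\omega < k$ these decay geometrically, so the series converges. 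In the over-critical case $k \leq \omega < \infty$ I would introduce the Taylor-subtraction operator $W$ on test functions, $(Wf)(x) := f(x) - w(x)\sum_{|a|\leq\rho}\frac{x^a}{a!}(\partial^a f)(0)$ with $\rho := \lfloor \omega - k\rfloor$ and $w \equiv 1$ near $0$, so that $Wf$ vanishes to order $\rho+1$ at the origin; feeding $Wf$ into the same limit lowers the effective scaling degree by $\rho+1 > \omega - k$ and hence below the critical threshold $k$, restoring convergence. Because $Wf = f$ whenever $f$ vanishes near $0$, the resulting functional genuinely extends $t^0$, and one checks it has scaling degree $\omega$.

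The hard part will be the quantitative estimate underpinning convergence: I expect to need that $\operatorname{sd}(t^0)\leq\omega$ is not merely a statement about each fixed scaled test function but a bound uniform over bounded families, of the shape $|\langle t^0,\psi(\cdot/\lambda)\rangle| \leq C\,\lambda^{k-\omega-\varepsilon}$ for $\psi$ ranging over a bounded set supported in a fixed annulus. This uniformity is what permits the dyadic sum (with its $f$-dependent coefficients coming from Taylor expanding $f$ across the annulus) to be dominated, and I would extract it from the definition of the scaling degree together with the Banach–Steinhaus uniform boundedness principle applied to the family $\{\lambda^{\omega}\,t^0_\lambda\}_\lambda$. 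Checking that the constructed $t$ preserves rather than raises the scaling degree is a secondary verification resting on the same annulus estimates, as is confirming that the dependence on the auxiliary choices $\chi$ and $w$ only produces the ambiguity already classified in case (ii).
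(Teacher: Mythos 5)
Your proposal is correct and takes essentially the same route as the paper's source: the paper states this theorem without proof, citing \cite[chap. 5]{fred00}, and the argument given there is exactly yours --- the difference formula and uniqueness via the structure theorem for distributions supported at the origin combined with $\text{sd}\,(\partial^a\delta_{(k)})=k+|a|$ and subadditivity of the scaling degree under sums, plus existence via the dyadic cutoff limit for $\text{sd}\,(t^0)<k$ and the Taylor-subtraction operator $W$ for $\text{sd}\,(t^0)\geq k$, with convergence secured by a Banach--Steinhaus uniform scaling estimate on test functions supported in a fixed annulus. One cosmetic point: the family to which uniform boundedness applies is $\{\lambda^{\omega'}\,t^0(\lambda\,\cdot)\}$ for some fixed $\omega'>\omega$ (your $\varepsilon$ in the displayed bound already reflects this), not $\{\lambda^{\omega}\,t^0(\lambda\,\cdot)\}$, since the infimum defining the scaling degree need not be attained.
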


\begin{definition}
	\label{def:finren}
	In case (ii), the addition of a term $\sum_{a} C_a\, \partial^a \delta_{(k)}$ is called a \textit{finite renormalization}. 
\end{definition}

So the freedom of renormalization consists in choosing the constants $C_a$ accordingly to equation (\ref{eq_fredh}). This choice is what is being restricted by the renormalization conditions. We require one additional condition, concerning the scaling degree as follows:

\begin{enumerate}[label=(\roman*)]
	\setcounter{enumi}{10}
	\item \textbf{Scaling degree:}
	\begin{equation*}
	\text{sd}\,t(A_1,\dots,A_n)(x_1-x_n,\dots) 
	\leq \sum_{j=1}^{n} \text{dim} \, A_j \quad 
	\forall A_1,\dots,A_n \in \mathscr{P}_{\text{hom}}
	\end{equation*}
\end{enumerate}

Knowing the specific form of the finite renormalizations one may prove the following theorem: 

\begin{theorem} 
	\normalfont{\cite[chap. 3.2.4]{duetsch19}}
	 \textit{There exists a sequence $T_n$ of maps defined on the whole $\mathscr{D}(\mathbb{M}^n)$ satisfying all axioms and all the renormalization conditions.}
\end{theorem}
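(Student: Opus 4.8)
The plan is to construct the maps $T_n$ by induction on $n$, following the Epstein--Glaser scheme. The initial condition $T_1 = \mathrm{id}$ provides the base case and trivially satisfies all axioms and renormalization conditions. For the inductive step I would assume that $T_1,\dots,T_n$ have been constructed satisfying everything, and build $T_{n+1}$. By the theorem determining the $T$-products away from the thin diagonal, the basic axioms already fix $T_{n+1}$ uniquely on $\mathscr{D}(\mathbb{M}^{n+1}\backslash\Delta_{n+1})$; the entire task is therefore to extend it across the thin diagonal $\Delta_{n+1}$ in a way compatible with all the conditions.

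First I would reduce the extension of the operator-valued $T_{n+1}$ to the extension of its numerical kernels. By the causal Wick expansion (Theorem \ref{thm:cWick}) every $T_{n+1}(A_1(x_1),\dots,A_{n+1}(x_{n+1}))$ is a finite sum of VEVs $\omega_0(T_{n+1}(\underline{A}_1,\dots,\underline{A}_{n+1}))$ multiplied by field monomials $\overline{A}_1\cdots\overline{A}_{n+1}$ that carry no singularity on the diagonal. By Proposition \ref{prop:t_transl} each such VEV is a translation-invariant numerical distribution $t^0\in\mathscr{D}'(\mathbb{R}^{dn}\backslash\{0\})$ in the relative coordinates. Thus it suffices to extend each $t^0$ to the origin. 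The scaling degree condition (xi) gives the bound $\mathrm{sd}(t^0)\le\sum_j\dim A_j<\infty$, so the extension theorem applies: if the scaling degree is below $k=dn$ the extension is unique, and otherwise it exists but is ambiguous up to the finite renormalizations of equation (\ref{eq_fredh}). Reconstructing $T_{n+1}$ from the extended kernels via the Wick expansion automatically preserves field independence (v), since that axiom is exactly what underlies the expansion, and preserves the localization and support properties.

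The heart of the proof is then to fix the remaining freedom --- the constants $C_a$ in Definition \ref{def:finren} --- so that all the other renormalization conditions hold simultaneously. I would proceed condition by condition on the finite renormalization. The symmetry axiom (iii) is enforced by symmetrizing the chosen extension over $S_{n+1}$, which is legitimate because permutations preserve the scaling degree. The $*$-structure and field parity (vi) are handled analogously, by retaining only the part of the extension invariant under the corresponding finite-order operations. For Poincaré covariance (vii) the translation part is already built in, and Lorentz covariance is achieved by noting that the space of finite renormalizations $\sum_{|a|\le\mathrm{sd}(t^0)-k}C_a\,\partial^a\delta$ carries a finite-dimensional representation of the Lorentz group; decomposing it and keeping the covariant component yields a Lorentz-covariant extension of the same scaling degree. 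The $\hbar$-dependence (x) is read off by tracking powers of $\hbar$ through the construction, and the off-shell field equation (ix) is imposed as a normalization relating $T_{n+1}$ with one factor $\varphi(g)$ to $T_n$, consistent with the inductive hypothesis.

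The hard part will be showing that these conditions do not conflict, i.e.\ that a single extension meets all of them at once. The genuinely subtle points are the non-compactness of the Lorentz group --- which forbids a naive averaging and forces the representation-theoretic argument above --- and the fact that the off-shell field equation (ix) is an inhomogeneous constraint rather than a mere symmetry, so one must verify that imposing it is compatible with the covariance and scaling-degree requirements already fixed. Checking that the ``further symmetries'' condition (viii) can also be accommodated, and that none of these normalizations spoils the scaling degree bound needed to re-enter the induction, completes the argument.
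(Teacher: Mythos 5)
Your proposal is essentially the proof the paper relies on: the paper gives no argument of its own for this theorem but defers to \cite[chap.~3.2.4]{duetsch19}, whose proof is precisely the inductive Epstein--Glaser construction you sketch --- reduction to translation-invariant numerical kernels via the causal Wick expansion, extension across the origin controlled by the scaling degree with ambiguity parametrized by the constants $C_a$, and order-by-order adjustment of these finite renormalizations, including the representation-theoretic treatment of Lorentz covariance forced by the non-compactness of $\mathscr{L}_+^{\uparrow}$ and the separate imposition of the off-shell field equation. Your sketch correctly identifies the genuinely delicate points (mutual compatibility of the conditions, and verifying that the unextended kernels at order $n+1$ inherit the scaling-degree bound from the inductive hypothesis), so it matches the cited construction in both structure and substance.
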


This completes our construction of the $T$-products. \\

We have described the inductive construction for the time ordered products. The same can be done for the retarded products of quantum field theory, corresponding to the classical ones described in section \ref{sec:clft}. The particular axioms and renormalization conditions differ, but give an equivalent description of the quantum theory. Transitions between the two can be done by using Bogoliubov's formula
\begin{equation}
	\label{eq:bog}
	R(e_{\otimes}^{F/\hbar},G) 
	= \frac{\hbar}{i} \left.\frac{d}{d\lambda} \right\vert_{\lambda=0} 
	\textbf{S}(F)^{\star -1} \star \textbf{S}(F+\lambda G) \; .
\end{equation}
The constructions for $R$- and $T$-products are equivalent in the following sense. Assume the axioms on linearity and symmetry to hold for both the $R$ and $T$ products. Then constructing either of them satisfying the axioms determines the other uniquely and according to the respective axioms, by equation (\ref{eq:bog}).

	\chapter{Relating classical to quantum symmetries}
	\label{chap:symms}

	\section{Generalities about the MWI}
	\label{sec:symms}
	The formulation we use here to describe symmetries is the Master Ward Identity (MWI). It is a relation that holds in classical field theory and that we want to require from the quantum theory. 

\subsection{The classical MWI for a general action}

In this section we follow \cite[chap. 4]{duetsch19}. The classical MWI is the following relation. 

\begin{proposition}
	Let $Q\in\mathscr{P}$ and an interaction $S$ be given. From the perturbative off-shell field equation \textnormal{(\ref{eq:clfact})} and the classical factorization property \textnormal{(\ref{eq:clpertoff})} it follows that
	\begin{align}
		\label{eq:cMWI}
		R_{\text{\normalfont{cl}}} \left( e^S_{\otimes} , Q(x) \cdot \frac{\delta (S_0+S)}{\delta \varphi (x)}  
		\right)
		&= R_{\text{\normalfont{cl}}} \left(e^S_{\otimes}, Q(x)\right) \cdot R_{\text{\normalfont{cl}}} \left( e^S_{\otimes} , \frac{\delta (S_0+S)}{\delta \varphi (x)} 
		\right) \notag\\ 
		&= R_{\text{\normalfont{cl}}}\left( e^S_{\otimes} , Q(x)\right)
		\cdot \frac{\delta S_0}{\delta \varphi (x)} \; .
	\end{align}
	This is called the Master Ward Identity (MWI) for the retarded products. 
\end{proposition}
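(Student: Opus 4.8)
The plan is to obtain both equalities directly from the two results cited in the statement, exploiting the fact that the full perturbative retarded product reconstructs the retarded field. Indeed, by the definition of the classical retarded product one has $R_{\text{cl}}(e^S_{\otimes},F)=F_S^{\text{ret}}=F\circ r_{S_0+S,S_0}$ as a formal power series in the coupling. Because the retarded field is the precomposition of $F$ with the single fixed map $r_{S_0+S,S_0}:\mathcal{E}\to\mathcal{E}$, the assignment $F\mapsto F_S^{\text{ret}}$ automatically respects the pointwise product wherever the relevant products are defined; this is precisely the factorization property (\ref{eq:clfact}), which is the structural input I would use for the first equality.

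First I would apply (\ref{eq:clfact}) to the pointwise product $Q(x)\cdot\frac{\delta(S_0+S)}{\delta\varphi(x)}$ and rewrite the resulting retarded fields as retarded products, giving
\begin{equation*}
	R_{\text{cl}}\!\left(e^S_{\otimes},\,Q(x)\cdot\frac{\delta(S_0+S)}{\delta\varphi(x)}\right)
	= R_{\text{cl}}\!\left(e^S_{\otimes},Q(x)\right)\cdot R_{\text{cl}}\!\left(e^S_{\otimes},\frac{\delta(S_0+S)}{\delta\varphi(x)}\right).
\end{equation*}
For the second equality I would insert the perturbative off-shell field equation (\ref{eq:clpertoff}),
\begin{equation*}
	R_{\text{cl}}\!\left(e^S_{\otimes},\frac{\delta(S_0+S)}{\delta\varphi(x)}\right)=\frac{\delta S_0}{\delta\varphi(x)},
\end{equation*}
into the second factor, which turns the right-hand side into $R_{\text{cl}}(e^S_{\otimes},Q(x))\cdot\frac{\delta S_0}{\delta\varphi(x)}$ and completes the proof.

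The main obstacle is not the algebraic bookkeeping but justifying that (\ref{eq:clfact}) applies to the factor $\frac{\delta(S_0+S)}{\delta\varphi(x)}$, which is not a bare monomial of $\mathscr{P}$: it equals $-(\dalemb+m^2)\varphi(x)$ together with the localized interaction term $\frac{\delta S}{\delta\varphi(x)}$, the latter being a finite sum of field monomials weighted by (derivatives of) the test function $g$. I would dispose of this by linearity, noting that the smooth weights are configuration-independent and hence untouched by the retarded pullback, so that factorization for genuine monomials applies termwise, while the purely kinematical first term lies in $\mathscr{P}$ already. One must moreover check that the pointwise product of $Q$ with the field equation at the coinciding point $x$ is well defined as a distribution, which is exactly the content secured by the wave-front conditions built into $\mathscr{F}$ and underlying (\ref{eq:clfact}).
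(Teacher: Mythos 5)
Your proof is correct and follows exactly the route the paper intends: the first equality is the classical factorization property (\ref{eq:clfact}) applied to the pointwise product $Q(x)\cdot\frac{\delta(S_0+S)}{\delta\varphi(x)}$, and the second is the substitution of the perturbative off-shell field equation (\ref{eq:clpertoff}) into the second factor (note the paper's statement cites the two equation labels in swapped roles, but you used them with their correct content). Your extra care in extending factorization by linearity to the non-monomial factor and in invoking the wave-front conditions for the product at coinciding points is a sound elaboration of what the paper leaves implicit.
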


Both the off-shell field equation and the factorization hold true in classical theory. So the MWI is a general relation that follows from the properties of the fields and hence is always valid. We reformulate it in the following way.

\begin{definition}
	We define the functional
	\begin{equation}
		\label{eq:funcA}
		A = \int dx\, \sum_{k=1}^{K} h_i(x)\, Q_i(x)\, \frac{\delta S_0}{\delta \varphi(x)} \quad \text{where} \quad K \in \mathbb{N} ,\quad Q_i \in \mathscr{P} , \quad h_i \in  \mathscr{D}(\mathbb{M}) \; ,
	\end{equation}
	and a corresponding derivation 
	\begin{equation}
		\label{eq:derQ}
		\delta_{\vec{h}\cdot\vec{Q}} := 
		\int dx\, \sum_{k=1}^{K} h_i(x)\, Q_i(x)\, \frac{\delta}{\delta \varphi(x)} \; .
	\end{equation}
\end{definition}

\begin{proposition}
	The classical MWI for the symmetry $\vec{Q}$ and the interaction $S$ may be written as 
	\begin{equation}
		\label{eq:ret_MWI}
		R_{\text{\normalfont{cl}}} \left(e^S_{\otimes}, (A+\delta_{\vec{h}\cdot\vec{Q}} S)\right)
		= \int dx\, \sum_{k=1}^{K} h_i(x)\, R_{\text{\normalfont{cl}}} \left(e^S_{\otimes}, Q_i(x)\right)\cdot \frac{\delta S_0}{\delta \varphi(x)} \; .
	\end{equation}
\end{proposition}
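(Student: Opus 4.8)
The plan is to recognize that the argument $A + \delta_{\vec{h}\cdot\vec{Q}}S$ of the retarded product is nothing but a smeared and summed version of the field $Q(x)\cdot\frac{\delta (S_0+S)}{\delta \varphi (x)}$ that appears on the left-hand side of the classical MWI (\ref{eq:cMWI}). Once this is seen, the claimed identity (\ref{eq:ret_MWI}) follows from (\ref{eq:cMWI}) alone together with the linearity of $R_{\text{cl}}$ in its last argument; no new analytic input is required.

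First I would combine the two contributions to the argument. Using the definition (\ref{eq:funcA}) of $A$, the definition (\ref{eq:derQ}) of the derivation $\delta_{\vec{h}\cdot\vec{Q}}$ applied to $S$, and the linearity of the functional derivative in the form $\frac{\delta S_0}{\delta \varphi(x)} + \frac{\delta S}{\delta \varphi(x)} = \frac{\delta (S_0+S)}{\delta \varphi(x)}$, one obtains
\begin{equation*}
	A + \delta_{\vec{h}\cdot\vec{Q}}S = \int dx\, \sum_{i=1}^{K} h_i(x)\, Q_i(x)\cdot\frac{\delta (S_0+S)}{\delta \varphi(x)} \; .
\end{equation*}
Each integrand $Q_i(x)\cdot\frac{\delta (S_0+S)}{\delta \varphi(x)}$ is a pointwise product of field polynomials in $\mathscr{P}$ and is therefore again a local field localized at $x$, so the right-hand side is a well-defined element of $\mathscr{F}_{\text{loc}}$ and a legitimate argument of $R_{\text{cl}}$.

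Next I would insert this expression into $R_{\text{cl}}(e^S_{\otimes},\,\cdot\,)$ and use that each $R_{n,1}$, and hence $R_{\text{cl}}(e^S_{\otimes},\,\cdot\,)$, is linear in its last slot. Exchanging $R_{\text{cl}}$ with the integration over the localization variable $x$ and the finite sum over $i$ yields
\begin{equation*}
	R_{\text{cl}}\!\left(e^S_{\otimes}, A+\delta_{\vec{h}\cdot\vec{Q}}S\right) = \int dx\, \sum_{i=1}^{K} h_i(x)\, R_{\text{cl}}\!\left(e^S_{\otimes}, Q_i(x)\cdot\frac{\delta (S_0+S)}{\delta \varphi(x)}\right) \; .
\end{equation*}
Applying the pointwise classical MWI (\ref{eq:cMWI}) to each term with $Q = Q_i$ replaces the inner retarded product by $R_{\text{cl}}(e^S_{\otimes},Q_i(x))\cdot\frac{\delta S_0}{\delta \varphi(x)}$, and substituting this back reproduces precisely the right-hand side of (\ref{eq:ret_MWI}).

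Since the argument is essentially a reorganization, I do not expect a genuine obstacle. The only step requiring care is the exchange of $R_{\text{cl}}$ with the integration over $x$: this is justified by reading (\ref{eq:cMWI}) as an identity of $\mathscr{F}$-valued distributions in $x$ and using the linearity (and continuity) of $R_{\text{cl}}$ to smear both sides against the compactly supported test functions $h_i \in \mathscr{D}(\mathbb{M})$ and to sum over the finitely many $i$.
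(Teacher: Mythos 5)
Your proposal is correct and follows exactly the route the paper intends: the paper states this proposition without proof precisely because it is the straightforward reorganization you carry out, namely combining $A+\delta_{\vec{h}\cdot\vec{Q}}S$ into $\int dx\,\sum_i h_i(x)\,Q_i(x)\,\frac{\delta(S_0+S)}{\delta\varphi(x)}$ and applying the pointwise MWI (\ref{eq:cMWI}) under the smearing by linearity of $R_{\text{cl}}$. Your added care about the integrand being a legitimate local argument and about exchanging $R_{\text{cl}}$ with the smearing makes explicit what the paper leaves implicit, so nothing is missing.
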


Why we call $\vec{Q}$ a symmetry will become clear when discussing Noether's theorem in the next section. We can translate the MWI into the quantum theory by just replacing classical retarded products with the quantum ones, and reformulate it for the $T$-products by Bogoliubov's equation. This yields the following. 

\begin{proposition}
	The quantum MWI for the $T$-product and $K=1$ reads 
	\begin{equation*}
		T\left(e^{iS/\hbar}_{\otimes}\otimes (A+\delta_{hQ} S)\right)
		= \int dx\, h(x)\, T\left(e^{iS/\hbar}_{\otimes}\otimes Q(x)\right)\cdot \frac{\delta S_0}{\delta \varphi(x)} \; .
	\end{equation*}
	Writing it to $n$-th order for non-diagonal entries yields
	\begin{align}
		\label{eq:nthMWI}
		T_{n+1} &\left(F_1\otimes \cdots F_n \otimes A\right)
		+ \frac{\hbar}{i} \sum_{l=1}^{n} 
		T_{n} \left(F_1\otimes \cdots \otimes \delta_{hQ} F_l \otimes \cdots \otimes F_n\right) \notag\\
		&= \int dx\, h(x)\, 
		T_{n+1} \left(F_1\otimes \cdots F_n \otimes Q(x)\right) 
		\cdot \frac{\delta S_0}{\delta \varphi(x)} \; .
	\end{align}
\end{proposition}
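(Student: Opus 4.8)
The statement is essentially a change of representation rather than a new analytic input, so my plan is computational: turn the retarded identity (\ref{eq:ret_MWI}) into time-ordered products and then read off its $n$-th order. First I would record the dictionary that Bogoliubov's formula (\ref{eq:bog}) supplies. Differentiating the $S$-matrix exponential in $\lambda$ gives $\frac{d}{d\lambda}\big|_0 \textbf{S}(S+\lambda G)=\frac{i}{\hbar}\,T(e_{\otimes}^{iS/\hbar}\otimes G)$, so the $\hbar/i$ and $i/\hbar$ factors cancel and
\begin{equation*}
	R(e_{\otimes}^{S/\hbar},G) = \textbf{S}(S)^{\star -1} \star T(e_{\otimes}^{iS/\hbar}\otimes G)
\end{equation*}
for every $G\in\mathscr{F}_{\text{loc}}$. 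I would then impose the classical MWI verbatim on the quantum retarded products (the ``replace $R_{\text{cl}}$ by the quantum $R$'' step, i.e.\ reading the MWI as a renormalization condition) and substitute this dictionary into both sides of (\ref{eq:ret_MWI}) specialised to $K=1$.

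Next I would $\star$-multiply the resulting equation on the left by $\textbf{S}(S)$ so as to cancel $\textbf{S}(S)^{\star -1}$. On the left-hand side this immediately returns $T(e_{\otimes}^{iS/\hbar}\otimes(A+\delta_{hQ}S))$, which is the left-hand side of the proposition. On the right-hand side I must commute the left $\star$-multiplication past the \emph{pointwise} product with $\delta S_0/\delta\varphi(x)$. This is where a genuine structural fact is needed: expanding $\textbf{S}(S)\star\big[\,\cdot\,\delta S_0/\delta\varphi(x)\big]$ by the Leibniz rule, the only terms not already present in $(\textbf{S}(S)\star\,\cdot\,)\cdot\delta S_0/\delta\varphi(x)$ are contractions of $\textbf{S}(S)$ with $\delta S_0/\delta\varphi(x)=-(\dalemb+m^2)\varphi(x)$, and every such contraction carries the factor $(\dalemb+m^2)\Delta_m^+=0$ since $\Delta_m^+$ solves the free field equation. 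Hence the $\star$-factor passes through the pointwise product and one recovers exactly the $T$-exponential form stated in the proposition.

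To descend to the $n$-th order identity (\ref{eq:nthMWI}) I would polarize: replace the single interaction $S$ in the exponential by a formal sum $\sum_{j=1}^{n}F_j$ of distinct local functionals, expand both $T$-exponentials through the definition of $\textbf{S}$, and extract the coefficient of the multilinear monomial $F_1\cdots F_n$ in which each $F_j$ appears exactly once — this is what ``non-diagonal entries'' means. The $A$-term contributes $T_{n+1}(F_1\otimes\cdots\otimes F_n\otimes A)$ from the order-$n$ part of the exponential. Using linearity of the derivation $\delta_{hQ}$, the term $\delta_{hQ}S$ becomes $\sum_j\delta_{hQ}F_j$; since the slot holding $\delta_{hQ}F_j$ already consumes the $F_j$ factor, the required monomial is produced already at order $n-1$, giving $\sum_l T_n(\cdots\otimes\delta_{hQ}F_l\otimes\cdots)$. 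The relative constant $\hbar/i$ is precisely the ratio $\frac{i^{n-1}/\hbar^{n-1}}{i^{n}/\hbar^{n}}$ of the order-$(n-1)$ and order-$n$ normalizations of $\textbf{S}$. The right-hand side yields $\int dx\,h(x)\,T_{n+1}(F_1\otimes\cdots\otimes F_n\otimes Q(x))\cdot\delta S_0/\delta\varphi(x)$ from its order-$n$ term, and dividing the equation by the common factor $i^n/\hbar^n$ produces (\ref{eq:nthMWI}).

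The main obstacle I anticipate is the middle step: justifying that left $\star$-multiplication by $\textbf{S}(S)$ commutes with the pointwise product by $\delta S_0/\delta\varphi(x)$. The polarization and the tracking of combinatorial factors are routine bookkeeping, but this commutation is the one place where the argument could fail, and it works only because $\Delta_m^+$ is annihilated by $(\dalemb+m^2)$, so that no contraction between $\textbf{S}(S)$ and the field-equation insertion survives.
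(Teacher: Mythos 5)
Your proposal is correct and follows exactly the route the paper indicates (but leaves implicit): impose the MWI on the quantum retarded products, pass to $T$-products via Bogoliubov's formula (\ref{eq:bog}), and extract the multilinear (non-diagonal) $n$-th order coefficient, with your key commutation step $\textbf{S}(S)\star\bigl(G\cdot\frac{\delta S_0}{\delta\varphi(x)}\bigr)=\bigl(\textbf{S}(S)\star G\bigr)\cdot\frac{\delta S_0}{\delta\varphi(x)}$ holding precisely because $(\dalemb+m^2)\Delta_m^+=0$ kills all contractions with the field-equation factor. Your bookkeeping of the $\hbar/i$ factor and the polarization reading of ``non-diagonal entries'' also match the paper's usage (compare the expansion in section \ref{sec:1base}), so there is nothing to correct.
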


The status of this equation -- under which conditions it holds true -- is not yet clarified. This will be discussed in section \ref{sec:qMWI}. 

\subsection{Relation to Noether's Theorem}

This section discusses the relation of the MWI to Noether's theorem in classical field theory. We start by defining smooth transformations of fields.\footnote{We do not give any meaning to the notion of ``smoothness" for this kind of transformations. What we want is that all derivatives w.r.t. $\alpha$ exist and satisfy the product rule.}

\begin{definition}
	\label{def:shiva}
	Let a smooth \textit{transformation} of the basic field depending on the parameter $\alpha$ be given by a mapping $\varphi \mapsto \varphi_{\alpha}$ which may involve expressions depending explicitly on $x$. We define a transformation of a general field $F\in\mathscr{F}$ by
	\begin{equation*}
	F(\varphi) \mapsto F_{\alpha}(\varphi) := F(\varphi_{\alpha}) \;.
	\end{equation*}
	The corresponding \textit{infinitesimal transformation} $s$ is 
	\begin{equation*}
	s F := \left.\frac{\partial}{\partial\alpha}\right\vert_{\alpha=0} F_{\alpha} \; .
	\end{equation*}
	We require the following properties to hold true: 
	\begin{align*}
		(F_1\cdot F_2)_{\alpha} &= (F_1)_{\alpha} \cdot (F_2)_{\alpha}
		\notag\\
		s(F_1\cdot F_2) &= (s F_1)\cdot F_2 + F_1 \cdot (s F_2) \notag\\
		s(\partial^{\mu} A(x)) &= \partial^{\mu}(s A(x)) \quad \text{\normalfont{for}}\; A \in \mathscr{P}
	\end{align*}
\end{definition}
With these definitions we can give a version of Noether's theorem for theories with one single basic field. 

\begin{theorem}
	Consider a system with a total action containing the field $\varphi$ and its first derivative and a transformation $\varphi \mapsto \varphi_{\alpha}$ leaving the total action invariant, that is $(S_{\text{\normalfont{tot}}})_{\alpha}= S_{\text{\normalfont{tot}}} \equiv S_0 +S$. Assume the test function $g$ in the total Lagrangian $L_{\text{\normalfont{tot}}}(x) =  L_0(x)-\,\kappa\, g(x)L_{\text{\normalfont{int}}}(x)$ to be constant on a neighborhood of a double cone $\mathscr{O}$. Then for $x \in \mathscr{O}$ there exists a four vector $j^{\mu} \in \mathscr{P}$ called the Noether current satisfying 
	\begin{equation}
	\label{eq:noeth}
	\partial_{\mu} j^{\mu}(x) = Q(x)\, \frac{\delta (S_0 +S)}{\delta \varphi(x)} \; ,
	\end{equation}
	where $Q := s \varphi$ for the infinitesimal transformation $s$ corresponding to $\alpha$. The current $j^{\mu}$ is given by
	\begin{equation}
	\label{eq:current}
	j^{\mu}(x) =  \Lambda^{\mu}(x) - \frac{\partial L_{\text{\normalfont{tot}}}}{\partial (\partial_{\mu}\varphi)} \, s \varphi
	\end{equation}
	for some $\Lambda^{\mu} \in \mathscr{P}$. 
\end{theorem}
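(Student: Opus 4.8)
The plan is to carry out the standard Noether argument at the level of the Lagrangian density and to read off the current from it. Since $L_{\text{tot}}$ depends only on $\varphi$ and $\partial_\mu\varphi$, I would first compute the infinitesimal variation $s L_{\text{tot}}$ by the chain rule, using the Leibniz rule and $s(\partial_\mu\varphi) = \partial_\mu(s\varphi) = \partial_\mu Q$ from Definition \ref{def:shiva}:
\begin{equation*}
s L_{\text{tot}} = \frac{\partial L_{\text{tot}}}{\partial\varphi}\, Q + \frac{\partial L_{\text{tot}}}{\partial(\partial_\mu\varphi)}\, \partial_\mu Q \; .
\end{equation*}

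Next I would eliminate $\partial L_{\text{tot}}/\partial\varphi$ in favour of the functional derivative: applying formula (\ref{eq:funcder}) to the first-order Lagrangian and integrating the $|a|=1$ contribution by parts produces the Euler--Lagrange expression $\delta(S_0+S)/\delta\varphi = \partial L_{\text{tot}}/\partial\varphi - \partial_\mu\big(\partial L_{\text{tot}}/\partial(\partial_\mu\varphi)\big)$. Substituting this and recombining the two remaining terms with the Leibniz rule turns the variation into the off-shell identity
\begin{equation*}
s L_{\text{tot}} = Q\, \frac{\delta(S_0+S)}{\delta\varphi} + \partial_\mu\!\left( \frac{\partial L_{\text{tot}}}{\partial(\partial_\mu\varphi)}\, Q \right) ,
\end{equation*}
which holds as a polynomial identity in the field variables. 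The remaining input is the invariance of the action. On the double cone $\mathscr{O}$ the test function $g$ is constant, say equal to $c$, so there $L_{\text{tot}}$ reduces to the $x$-independent polynomial $L_0 - \kappa c\, L_{\text{int}}$ and no derivatives of $g$ intervene; the assumed invariance $(S_{\text{tot}})_\alpha = S_{\text{tot}}$ then localizes, for $x\in\mathscr{O}$, to the statement that $s L_{\text{tot}}$ is a pure divergence, $s L_{\text{tot}} = \partial_\mu\Lambda^\mu$ with $\Lambda^\mu\in\mathscr{P}$. Equating the two expressions for $s L_{\text{tot}}$ and rearranging yields $Q\,\delta(S_0+S)/\delta\varphi = \partial_\mu\big(\Lambda^\mu - (\partial L_{\text{tot}}/\partial(\partial_\mu\varphi))\,s\varphi\big)$, which is exactly (\ref{eq:noeth}) with the current (\ref{eq:current}).

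I expect the delicate point to be the passage from the global invariance of the action to the local quasi-invariance $s L_{\text{tot}} = \partial_\mu\Lambda^\mu$ of the Lagrangian density with $\Lambda^\mu$ a field polynomial. This is precisely where the formal status of $S_0$, the localization by $g$, and the restriction to $\mathscr{O}$ become essential: one has to verify that on $\mathscr{O}$ the variation of the action leaves no surviving boundary terms and no $\partial_\mu g$ contributions, so that the integral identity sharpens into a genuinely local, polynomial one and $\Lambda^\mu$ can be chosen in $\mathscr{P}$. By contrast, the chain-rule and Leibniz-rule manipulations are routine bookkeeping.
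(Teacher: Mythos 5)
Your argument is correct and coincides with the paper's own proof: both use the chain rule with $s(\partial_\mu\varphi)=\partial_\mu(s\varphi)$, the Euler--Lagrange identity extracted from (\ref{eq:funcder}), the Leibniz-rule recombination into a divergence plus $Q\,\delta S_{\text{tot}}/\delta\varphi$, the constancy of $g$ on $\mathscr{O}$ to exclude $\partial_\mu g$ terms, and the invariance $sS_{\text{tot}}=0$ to write $sL_{\text{tot}}=\partial_\mu\Lambda^\mu$ with $\Lambda^\mu\in\mathscr{P}$ (the paper in fact asserts this last implication just as directly as you do, merely opening with it rather than closing with it). The only difference is the order of presentation, so there is nothing substantive to change.
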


\begin{proof}
	Since $S_{\text{tot}}$ is invariant under $\alpha$ we have $s S_{\text{tot}} =0$, that is $sL_{\text{\normalfont{tot}}} =  \partial_{\mu}\Lambda^{\mu}$ for some $\Lambda^{\mu} \in \mathscr{P}$. Using the derivation property of $s$ and the fact that $s$ commutes with derivatives (both from from definition \ref{def:shiva}) we calculate for $x\in\mathscr{O}$
	\begin{align*}
	\partial_{\mu}\Lambda^{\mu}(x) = s L_{\text{tot}}(x) 
	&= \frac{\partial L_{\text{tot}}}{\partial \varphi}(x)\, s \varphi(x) + 
	\frac{\partial L_{\text{tot}}}{\partial(\partial_{\mu} \varphi)}(x)\, s \big(\partial_{\mu}\varphi(x)\big) \notag\\
	&=  \partial_{\mu}^x \Big(\frac{\partial L_{\text{tot}}}{\partial(\partial_{\mu} \varphi)}(x)\, s \varphi(x) \Big)
	+ \Big[ \frac{\partial L_{\text{tot}}}{\partial \varphi}(x)
	- \partial_{\mu} \frac{\partial L_{\text{tot}}}{\partial(\partial_{\mu} \varphi)}(x)\Big]  s \varphi(x) \; .
	\end{align*}
	Since $g$ is constant on $\mathscr{O}$, there are no further contributions. Due to equation (\ref{eq:funcder}), the expression in the $[\cdots]$-brackets equals $\delta S_{\text{tot}} / \delta \varphi(x)$. So defining $Q$ and $j^{\mu}$ as in the proposition yields the result.  
\end{proof}

Noether's theorem tells us that -- for systems of the described kind -- the current in equation (\ref{eq:noeth}) vanishes if the field goes on-shell, that is if it satisfies the equation of motion. Now we see that the r.h.s. of (\ref{eq:noeth}) is precisely the argument of the retarded product on the l.h.s. of the MWI (\ref{eq:cMWI}). By putting this into the MWI and using the Action Ward Identity we get
\begin{align*}
	\partial_{\mu}^x R_{\text{\normalfont{cl}}} \left( e^S_{\otimes} , j^{\mu}(x) 
	\right)
	= R_{\text{\normalfont{cl}}}\left( e^S_{\otimes} , Q(x)\right)
	\cdot \frac{\delta S_0}{\delta \varphi (x)} \; , 
\end{align*}
where now $Q=s\varphi$ for the corresponding transformation. So in this sense, the MWI covers the description of symmetries that can be expressed as the conservation of a current via Noether's theorem. 

\subsection{The MWI in the quantum theory}
\label{sec:qMWI}

We will now clarify the status of the MWI in the quantum theory by the following proposition: 

\begin{proposition}
	The MWI is a renormalization condition for the quantum theory. 
\end{proposition}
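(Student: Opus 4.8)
The plan is to show that imposing the quantum MWI constrains only the freedom left by the inductive Epstein--Glaser construction, namely the extension of the time-ordered products to the thin diagonal, and that it is automatically satisfied wherever the $T$-products are already fixed. I would work with the $n$-th order form of the MWI, equation (\ref{eq:nthMWI}), and define the \emph{anomaly} as the difference of its two sides,
\begin{align*}
  \Delta_n(F_1,\dots,F_n) := {}& T_{n+1}\!\left(F_1\otimes\cdots\otimes F_n\otimes A\right)
  + \frac{\hbar}{i}\sum_{l=1}^{n} T_n\!\left(F_1\otimes\cdots\otimes\delta_{hQ}F_l\otimes\cdots\otimes F_n\right) \\
  & - \int dx\, h(x)\, T_{n+1}\!\left(F_1\otimes\cdots\otimes F_n\otimes Q(x)\right)\cdot\frac{\delta S_0}{\delta\varphi(x)}\,.
\end{align*}
The statement then amounts to proving that $\Delta_n$ is a finite renormalization in the sense of Definition \ref{def:finren}: a local expression supported on the thin diagonal, so that requiring $\Delta_n=0$ is a condition of the same type as those in Definition \ref{def:rencon}.

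First I would check the base case, which is trivial: for $n=0$ the initial condition $T_1=\mathrm{id}$ gives $T_1(A)=A=\int dx\, h(x)Q(x)\tfrac{\delta S_0}{\delta\varphi(x)}$, matching the right-hand side exactly. For the inductive step I would establish that $\Delta_n$ vanishes off the thin diagonal $\Delta_{n+1}$, the conceptual origin being the always-valid classical MWI (\ref{eq:cMWI}) transported to the $T$-products via Bogoliubov's formula (\ref{eq:bog}). Covering $\mathbb{M}^{n+1}\setminus\Delta_{n+1}$ by regions in which at least one argument is causally separated, the causality axiom (iv) factors the $(n+1)$-fold products into lower orders. Where the argument carrying the field equation, $A=\int dx\, h(x)Q(x)\tfrac{\delta S_0}{\delta\varphi(x)}$ with $\tfrac{\delta S_0}{\delta\varphi}=-(\dalemb+m^2)\varphi$, is separated from all the $F_l$, the off-shell field equation (ix) lets me move the field-equation factor through the time-ordered product up to the correction terms and reproduce the right-hand side; where instead two of the $F_l$ separate, the inductive hypothesis gives $\Delta_n=0$. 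This pins down $\mathrm{supp}\,\Delta_n\subseteq\Delta_{n+1}$.

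Once the support lies on the diagonal, the structure theorem for distributions supported on a submanifold (equivalently the extension theorem, cf.\ the form (\ref{eq_fredh})) forces $\Delta_n$ to be a finite linear combination of derivatives of $\delta$-distributions on the diagonal, which, together with the scaling-degree bound (xi), is exactly the ambiguity parameterized by a finite renormalization. Hence the MWI neither over-determines the $T$-products where they are already fixed nor contradicts the basic axioms; it restricts only the diagonal extension, which is the defining property of a renormalization condition (note this does \emph{not} yet claim the condition can be met, which is the content of the main theorem). The step I expect to be the main obstacle is the off-diagonal bookkeeping: verifying that the derivation terms $\delta_{hQ}F_l$ generated by the $\tfrac{\hbar}{i}$ contributions are precisely those produced by causal factorization of the classical structure, and that the nonlinear field-equation argument $Q(x)\tfrac{\delta S_0}{\delta\varphi(x)}$ is handled consistently by condition (ix), so that the cancellations leaving support only on the diagonal genuinely close at every order.
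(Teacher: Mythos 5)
Your plan is correct in outline, but you should know that the paper itself does not actually prove this proposition: it asserts it in section \ref{sec:qMWI} and backs it entirely by the anomalous MWI, theorem \ref{thm:aMWI}, which is quoted from \cite[chap. 5.2]{db08} without proof --- and whose listed properties ($\Delta^0=0$, the locality/diagonal-support statement (\ref{eq:locanom}), the scaling bound (\ref{eq:a_restr})) are exactly the facts you set out to establish. Your argument --- define the anomaly as the difference of the two sides of (\ref{eq:nthMWI}), show by induction and causal factorization that it is supported on the thin diagonal, then use the structure of diagonal-supported distributions together with the scaling-degree condition (xi) to identify it with a finite renormalization in the sense of definition \ref{def:finren} --- is precisely the strategy of the cited source, so in substance you are reconstructing the proof the paper delegates to the literature. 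One small addition for completeness: to land on the specific form (\ref{eq:locanom}) with numerical coefficients you also need translation covariance and field independence (to pass to VEVs via the causal Wick expansion), which is why theorem \ref{thm:aMWI} lists these among its hypotheses; for the bare claim that the MWI is a renormalization condition, off-diagonal vanishing already suffices, and you correctly separate that claim from the (much harder) question of whether the condition can be fulfilled.

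One concrete correction to your off-diagonal step: the off-shell field equation (ix) is not the right tool for moving the field-equation factor through the $T$-product. Condition (ix) governs $T_n(\varphi(g)\otimes\cdots)$, i.e.\ a factor \emph{linear} in the basic field, whereas the argument $A$ carries the nonlinear factor $Q(x)\,\frac{\delta S_0}{\delta\varphi(x)}$. What actually does the work in the region where $y$ is causally separated from all $x_l$ is causal factorization (iv) together with the observation that the free-field-equation factor star-commutes with everything: every contraction of $\frac{\delta S_0}{\delta\varphi(x)}$ produces $(\dalemb+m^2)\Delta_m^{+}(x-\cdot)=0$, so its star product with any field collapses to the classical pointwise product appearing on the right-hand side of (\ref{eq:nthMWI}). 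The derivation terms $\delta_{hQ}F_l$ contain $\delta(y-x_l)$ and hence enter the bookkeeping only where $y$ meets some $x_l$, where the inductive hypothesis at lower order takes over. You flagged this step yourself as the main obstacle; with the star-commutation observation substituted for (ix), your plan closes.
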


That is the MWI is always satisfied for unrenormalized $T$-products, but it must be imposed as a condition to hold for the extensions to the thin diagonal. Hence the MWI does a priori not hold, and it is an open question for each individual model whether it can be satisfied by choosing the renormalizations appropriately or not. So the question arises about properties of the terms violating the quantum MWI. The following theorem due to \cite[chap. 5.2]{db08} describes their structure. We assume our interactions to be local of the form $S=\kappa L(g)$ with $g\in\mathscr{D}(\mathbb{M})$, $L \in \mathscr{P}$. 

\begin{theorem}[anomalous MWI]
	\label{thm:aMWI}
	
	Let $(T_n)_{n\in\mathbb{N}}$ be a time ordered product satisfying all basic axioms and the renormalization conditions translation covariance and field independence. Then there exists a unique sequence of linear maps \normalfont
	\begin{align*}
		\Delta^n : \mathscr{P}^{\otimes(n+1)} &\rightarrow \mathscr{D}'(\mathbb{M}^{n+1},\mathscr{F}_{\text{loc}}) \; ; 
		\notag\\
		\otimes^n_{j=1} L_j \otimes Q &\mapsto 
		\Delta^n\left(\otimes^n_{j=1} L_j(x_j) ;Q(y)
		\right)
		\equiv \Delta^n\left(( \otimes^n_{j=1} L_j) \otimes Q
		\right)
		(x_1,\dots,x_n,y)
	\end{align*}
	\textit{
	that are totally symmetric in the first $n$ entries and fulfill the anomalous MWI }
	\begin{equation}
		\label{eq:anoMWI}
		T\left(e^{iS/\hbar}_{\otimes}\otimes (A+\delta_{hQ} S+\Delta(e^S_{\otimes};hQ))\right)
		= \int dx\, h(x)\, T\left(e^{iS/\hbar}_{\otimes}\otimes Q(x)\right)\cdot \frac{\delta S_0}{\delta \varphi(x)} \; .
	\end{equation}
	\textit{
	The maps $\Delta^n$ have the following properties: }
	\begin{enumerate}[label=(\roman*)]
		\normalfont
		\item 
		\textit{
		$\Delta^0=0$
		}
		\normalfont
		\item \textbf{Locality and Translation covariance:}
		\textit{
		There exist linear maps $P_a^n : \mathscr{P}^{\otimes(n+1)}\rightarrow\mathscr{P}$ that are symmetric in the first $n$ entries and uniquely determined by 
		\begin{equation}
			\label{eq:locanom}
			\Delta^n\left(\otimes^n_{j=1} L_j(x_j) ;Q(y)
			\right)
			= \sum_{a\in(\mathbb{N}^d)^n} \partial^a 
			\delta(x_1-y,\dots,x_n-y)\,P_a^n(\otimes^n_{j=1} L_j;Q)(y)
		\end{equation}
		where the sum over $a$ is finite. 
		}
		\normalfont
		\item
		\textit{$\Delta^n\left(\otimes^n_{j=1} L_j(x_j) ;Q(y)
			\right) = \mathscr{O}(\hbar) \quad \forall n>0 \quad \text{if} \quad L_j \sim \hbar^0,\quad Q \sim \hbar^0$.
		}
		\normalfont
		\item \textbf{Field independence:}
		\textit{$\Delta^n$ depends on $\varphi$ only through its arguments.
		}
	\end{enumerate}
	\textit{If the $T$-product satisfies further renormalization conditions, these translate in the following way into properties of the maps $\Delta^n$}: 
	\begin{itemize}
		\item \normalfont{\textbf{Scaling degree:}} \textit{On the r.h.s of} (\ref{eq:locanom}) \textit{the sum over a is restricted by}
		\begin{equation}
			\label{eq:a_restr}
			|a| + \text{dim} \left(P_a^n(\otimes^n_{j=1} L_j;Q) \right)
			\leq \sum_{j=1}^{n}\text{dim}(L_j) + \text{dim}(Q)
			+ \frac{d+2}{2} - dn \; .
		\end{equation}
		\item \normalfont{\textbf{Lorentz covariance:}} 
		\begin{equation}
			\label{eq:anom_lorz}
			\beta_L \Delta(e^S_{\otimes};hQ) = 
			\Delta(e^{\beta_L S}_{\otimes};h\beta_L Q) \qquad \forall L \in \mathscr{L}_+^{\uparrow}
		\end{equation}
		\item \normalfont{\textbf{*-structure:}}
		\begin{equation}
			\label{eq:anomstar}
			\Delta(e^S_{\otimes};hQ)^* = 
			\Delta(e^{S^*}_{\otimes};\overline{h}Q^*)
		\end{equation}
		\item \normalfont{\textbf{off-shell field equation:}}
		\begin{equation}
			\label{eq:anomoffs}
			\Delta(e^S_{\otimes};h1) = 0
		\end{equation}
	\end{itemize}
\end{theorem}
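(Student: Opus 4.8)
The plan is to \emph{define} the anomaly $\Delta$ as precisely the defect by which the naive MWI fails, and then to extract its structural properties from the renormalization conditions the $T$-products are assumed to satisfy. Since every $T_n$ is linear, I would set
\[
T\!\left(e^{iS/\hbar}_{\otimes}\otimes \Delta(e^S_{\otimes};hQ)\right)
:= \int dx\, h(x)\, T\!\left(e^{iS/\hbar}_{\otimes}\otimes Q(x)\right)\cdot \frac{\delta S_0}{\delta \varphi(x)}
- T\!\left(e^{iS/\hbar}_{\otimes}\otimes (A+\delta_{hQ} S)\right),
\]
and read off $\Delta^n$ order by order in the coupling as the coefficient of $\kappa^n$. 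The entire content of the theorem is then that this right-hand side is itself of the form $T(e^{iS/\hbar}_{\otimes}\otimes(\text{local field}))$, with that field supported on the total diagonal.

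First I would treat the piece containing $A$, which carries a factor $\frac{\delta S_0}{\delta\varphi}=-(\dalemb+m^2)\varphi$. Applying the Klein--Gordon operator to the off-shell field equation (ix) — and using that $-(\dalemb+m^2)\Delta_m^F=\delta$ — converts an insertion of $\frac{\delta S_0}{\delta\varphi}(x)$ into a local contact term plus the ``classical'' piece. Carrying this out inside $T_{n+1}(F_1\otimes\cdots\otimes A)$ and combining it with the derivation term $\frac{\hbar}{i}\sum_l T_n(\cdots\otimes\delta_{hQ}F_l\otimes\cdots)$ of (\ref{eq:nthMWI}) shows that on the configuration of pairwise distinct points $\check{\mathbb{M}}^{n+1}$, where the $T$-products coincide with iterated $\star_F$-products (\ref{eq:unrenT}), the two sides of the MWI agree \emph{exactly}. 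This is the quantum survival of the classical factorization identities (\ref{eq:clfact})--(\ref{eq:clpertoff}). Hence the defect is supported on the union of partial diagonals, and iterating the argument confines it to the thin diagonal $x_1=\dots=x_n=y$.

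Next I would localise the anomaly fully. By field independence (v) the causal Wick expansion (Theorem \ref{thm:cWick}) lets me write the defect as a sum of numerical VEVs times Wick monomials; by the support property (\ref{eq:tsupp}) each such numerical distribution is supported at the total diagonal, and by translation covariance (Proposition \ref{prop:t_transl}) it depends only on relative coordinates. A distribution supported at the origin is a finite linear combination of derivatives of $\delta$, which yields exactly the representation (\ref{eq:locanom}) and \emph{defines} the maps $P_a^n$; their symmetry in the first $n$ slots is inherited from the symmetry axiom of $T$, and uniqueness is immediate from the uniqueness of such a delta-expansion. The listed properties then follow one by one: $\Delta^0=0$ because at zeroth order in $S$ the MWI reduces to the defining relation (\ref{eq:funcA}) for $A$ and holds identically; the $\mathcal{O}(\hbar)$ statement follows from the $\hbar$-counting (x), since the $\hbar^0$ part of the defect is the classical MWI and hence vanishes; field independence of $\Delta$ is inherited from (v); the scaling-degree bound (\ref{eq:a_restr}) is obtained by feeding condition (xi) and $\dim\frac{\delta S_0}{\delta\varphi}=\tfrac{d+2}{2}$ into the delta-expansion, the $-dn$ accounting for the $n$ delta functions in $d$ dimensions; and Lorentz covariance, the $*$-relation and the off-shell equation descend directly from the renormalization conditions (vii), (vi), (ix) applied to the defining equation.

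The hard part, I expect, will be the second step: showing rigorously that the non-local contributions cancel so that the defect is genuinely supported on the diagonal. This rests on an inductive interplay between the off-shell field equation and causal factorization, and the bookkeeping of the contact terms generated when the Klein--Gordon operator hits the Feynman propagators is delicate — this is exactly where one must verify that no spurious non-local remainder survives away from coincident points. Everything else is, in essence, a translation of already-established renormalization conditions into properties of the diagonal distribution $\Delta^n$.
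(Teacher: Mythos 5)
The paper itself contains no proof of this theorem: it is imported from \cite[chap.~5.2]{db08}, so the relevant comparison is with that reference, and your outline reproduces its strategy essentially step for step --- define the anomaly as the defect of the naive MWI, use the validity of the identity for the unrenormalized products together with causal factorization to confine the defect inductively to the thin diagonal, obtain the representation (\ref{eq:locanom}) from field independence (via the causal Wick expansion), translation covariance, and the fact that a distribution supported at a point is a finite sum of derivatives of $\delta$, and then let each further renormalization condition descend to the corresponding property of $\Delta^n$; your dimensional bookkeeping for (\ref{eq:a_restr}), with $\dim(\delta S_0/\delta\varphi)=\tfrac{d+2}{2}$ and $-dn$ coming from the $n$ delta factors, is likewise the standard one. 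One point you should make explicit, because as written your first display is circular: it only becomes a definition when read as a recursion in the order of $\kappa$ --- at order $\kappa^n$ the unknown $\Delta^n$ enters solely through $T_1(\Delta^n)=\Delta^n$, all other occurrences being the already-constructed $\Delta^k$, $k<n$, inserted into higher $T$-products --- and it is precisely after these lower-order insertions are accounted for that the remaining defect is supported on the thin diagonal (on a partial diagonal the raw defect need not vanish, since lower-order anomalies contribute there). With that caveat your sketch is sound, and the step you yourself flag as delicate --- verifying that the contact terms generated when the Klein--Gordon operator hits the Feynman propagators leave no non-local remainder away from coincident points --- is indeed where the substantive work lies in \cite{db08}.
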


This theorem states that the anomalous term is a local interacting field, that is $\Delta(e^S_{\otimes};hQ)$ is a local field and $R(e^{S/\hbar}_{\otimes}, \Delta(e^S_{\otimes};hQ))$ is the corresponding interacting field with interaction $S$ (see equation (\ref{eq:ret_MWI}), the additional $\hbar$ is present in the retarded quantum products). If this term cannot be removed completely from the anomalous MWI in (\ref{eq:anoMWI}) by finite renormalizations, it leads to additional interacting fields -- called quantum anomalies -- that were not present in the classical theory.\footnote{The most prominent experimentally measurable example of such an anomaly occurs in axial QED during the decay of the neutral $\pi^0$ meson \cite[chap. 5.3]{scharf95}.}

	\section{The case of a complex scalar field}
	\label{sec:complex}
	In this chapter we turn to the case of a complex scalar field, described by the two basic fields $\phi$ and $\phi^*$. Most definitions and all main results carry over in the expected ways. We will stick with the same symbols for the main building blocks of the theory, from now on they denote the complex counterparts to the real scalar theory.

\subsection{Basic definitions}

\begin{definition}
	The \textit{configuration space} is $\mathcal{E} := \mathcal{C}^{\infty} (\mathbb{M},\mathbb{C})$, the two basic fields are 
\begin{equation*}
	\phi(x),\phi^*(x): 
	\begin{cases}
	\mathcal{E} \rightarrow \mathbb{C}\\
	\phi(x)(h):=h(x), \quad \phi^*(x)(h):=\overline{h(x)} \; .
	\end{cases}
	\end{equation*}
	Let $\mathscr{F}'_{n,l}({\mathbb{M}^n})$ be defined analogously to $\mathscr{F}'({\mathbb{M}^n})$ but with symmetry required only among the first $l$ arguments and the following $n-l$ arguments separately. The \textit{field space} is defined as the set of all functionals $F\equiv F(\phi,\phi^*): \mathcal{E} \rightarrow \mathbb{C}$ of the form 
	\begin{equation*}
		F(\phi,\phi^*) = f_{0,0} + \sum_{n=1}^{N}\sum_{l=0}^{n} \int d x_1 \cdots d x_n \,
		\phi^*(x_1) \cdots \phi^*(x_l)\phi(x_{l+1})\cdots\phi(x_n) \,
		f_{n,l}(x_1,\dots,x_n)
	\end{equation*}
	with $f_{0,0} \in \mathbb{C}$ and $f_{n,l} \in\mathscr{F}'_{n,l}({\mathbb{M}^n})$ for $n \geq 1$. The $*$\textit{-conjugate field} $F^*$ of $F$ is defined as 
	\begin{equation*}
		F^*(\phi,\phi^*) = \overline{f_{0,0}} + \sum_{n=1}^{N}\sum_{l=0}^{n} \int d x_1 \cdots d x_n \,
		\phi(x_1) \cdots \phi(x_l)\phi^*(x_{l+1})\cdots\phi^*(x_n) \,
		\overline{f_{n,l}(x_1,\dots,x_n)} \; .
	\end{equation*}
	The \textit{functional derivative} is 
	\begin{align*}
		&\frac{\delta^{k+j} F}{\delta \phi^*(y_1) \cdots \delta \phi^*(y_k)\delta\phi(z_1) \cdots \delta \phi(z_j)} := 
		\sum_{n=k+j}^{N}\sum_{l=k}^{n-j} \frac{l!}{(l-k)!} \frac{(n-l)!}{(n-l-j)!} 
		\int d x_1 \cdots d x_{n-k-j} \, \notag\\
		&\qquad\qquad\qquad
		\cdot\phi^*(x_1) \cdots \phi^*(x_{l-k})
		\phi(x_{l-k+1}) \cdots \phi(x_{n-k-j}) \notag\\
		&\qquad\qquad\qquad
		\cdot f_{n,l}(y_1,\dots,y_k,x_1,\dots,x_{l-k},
		z_1,\dots,z_j,x_{l-k+1},\dots,x_{n-k-j})\; .
	\end{align*}
\end{definition}

So for derivatives w.r.t. $\phi$, the fields $\phi^*$ are treated as constants and vice versa. This is what we mean by saying that $\phi$ and $\phi^*$ are independent fields. 
\begin{definition}
	The \textit{star product} is
	\begin{align}
		\label{eq:complstar}
		F \star G &= 
		\sum_{n,k=0}^{\infty}\frac{\hbar^{n+k}}{n!k!}
		\int dx_1 ... dx_{n+k} \, dy_1 ... dy_{n+k} \notag\\
		&\cdot\frac{\delta^{n+k} F}{\delta \phi(x_1)...\delta\phi(x_n)\delta\phi^*(x_{n+1})...\delta\phi^*(x_k)}
		\;\prod_{l=1}^{n+k} \Delta_m^+(x_l - y_l)\notag\\
		&\cdot\frac{\delta^{n+k} G}{\delta \phi^*(y_1)...\delta\phi^*(y_n)\delta\phi(y_{n+1})...\delta\phi(y_k)} \; . 
	\end{align}
\end{definition}

This says that to compute star products among complex fields we need to take all contractions between pairs of $\phi$ and $\phi^*$. The product yields the basic commutators 
\begin{equation}
	\label{eq:comms}
	[\phi(x),\phi^*(y)] = [\phi(x)^*,\phi(y)] = i\hbar\Delta(x-y) \;, \qquad [\phi(x),\phi(y)] = [\phi^*(x),\phi^*(y)] = 0\; ,
\end{equation}
as expected for complex scalar theory. 

\begin{definition}
	The \textit{free action} is (again formally) given by 
	\begin{equation*}
		S_0 := \int dx \, L_0(x) =
		 \int dx \, \left(\partial_{\mu} \phi^*(x)\, \partial^{\mu} \phi(x) - m^2 \phi^* (x)\, \phi(x) \right) \; .
	\end{equation*}
	The corresponding free field equations are
	\begin{equation}
		\label{eq:feqcomp}
		\frac{\delta S_0}{\delta \phi(x)} := -(\dalemb +m^2) \phi^*(x)\;, \qquad 
		\frac{\delta S_0}{\delta \phi^*(x)} := -(\dalemb +m^2) \phi(x) \; .
	\end{equation}
	We consider a \textit{quartic interaction} of the form 
	\begin{equation*}
		S := -\kappa \int dx \,g(x)\, L_{\text{int}}(x) = -\kappa \int dx \, g(x) \left(\phi^*(x)\phi(x)\right)^2 \; .
	\end{equation*}
\end{definition}
We will frequently drop the subscript and write $L \equiv L_{\text{int}}$. The particular form of the interaction will be relevant only in chapter \ref{sec:proofMWI} to determine the arguments of the $T$-products for which the MWI will be shown to hold. In the following, all discussions treat the case of a free complex scalar field.

\subsection{Noether's Theorem for the complex scalar field}

In this section we apply Noether's theorem to the case of the complex scalar field, whose action is invariant under global phase transformations. 

\begin{definition}
	We define a \textit{global} $U(1)$\textit{-transformation} on the basic fields as 
	\begin{equation*}
		\phi(x) \mapsto \phi_{\alpha}(x) := e^{i\alpha}\phi(x)\; , 	\qquad 
		\phi^*(x) \mapsto \phi^*_{\alpha}(x) := e^{-i\alpha}\phi^*(x) \; , 
	\end{equation*}
	and on a general $F\in\mathscr{F}$ as
	\begin{equation*}
		F \mapsto F_{\alpha}(\phi,\phi^*) := F(\phi_{\alpha},\phi_{\alpha}^*) \;.
	\end{equation*}
\end{definition}

\begin{proposition}
	\label{prop:compneoth}
	Applying Noether's theorem to the free complex scalar field, that is to $S_{\text{\normalfont{tot}}}=S_0$ for this transformation as above we get
	\begin{align}
		s L_0 &= s L_{\text{\normalfont{int}}} = 0, \notag\\
		\qquad Q :&= Q_1 = s\phi= i\phi\;, \qquad Q_2 = s\phi^* = -i\phi^* = Q^*\;, \notag\\
		\label{eq:compcurrent}
		j^{\mu} &= i\,(\phi\,\partial^{\mu}\phi^* - \phi^*\, \partial^{\mu}\phi) \; .
	\end{align}
\end{proposition}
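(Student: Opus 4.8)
The plan is to verify the three claimed ingredients directly and then assemble the current via the (two-field) Noether construction. First, the infinitesimal generators follow immediately from the definition $sF=\left.\partial_\alpha\right\vert_{\alpha=0}F_\alpha$. Differentiating the global phase transformation gives $s\phi=\left.\partial_\alpha\right\vert_{\alpha=0}e^{i\alpha}\phi=i\phi$ and $s\phi^*=\left.\partial_\alpha\right\vert_{\alpha=0}e^{-i\alpha}\phi^*=-i\phi^*$, so $Q:=Q_1=i\phi$ and $Q_2=-i\phi^*$. Using that the $*$-operation conjugates coefficients and exchanges $\phi\leftrightarrow\phi^*$, one gets $Q^*=(i\phi)^*=-i\phi^*=Q_2$, the claimed relation.

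Next I would check the invariance of the Lagrangians. Because $\alpha$ is constant, $s$ acts as a derivation and commutes with $\partial_\mu$ (Definition \ref{def:shiva}), so it suffices to count phase factors. In $L_0=\partial_\mu\phi^*\,\partial^\mu\phi-m^2\phi^*\phi$ and in $L_{\text{int}}=(\phi^*\phi)^2$ every term contains equally many factors of $\phi$ and $\phi^*$; under the transformation the phases $e^{i\alpha}$ and $e^{-i\alpha}$ cancel termwise, so $(L_0)_\alpha=L_0$ and $(L_{\text{int}})_\alpha=L_{\text{int}}$, whence $sL_0=sL_{\text{int}}=0$. In particular $sL_{\text{tot}}=0$, so in the notation of the Noether theorem $\partial_\mu\Lambda^\mu=0$ and we may take $\Lambda^\mu=0$.

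For the current I would re-run the Noether construction with the two independent fields $\phi,\phi^*$ in place of the single real field. Expanding $sL_{\text{tot}}=0$ by the derivation property and using $s\partial_\mu\psi=\partial_\mu s\psi$ for $\psi\in\{\phi,\phi^*\}$, then moving a derivative off the kinetic terms exactly as in the single-field proof, one reaches $\partial_\mu j^\mu=(s\phi)\,\delta S_{\text{tot}}/\delta\phi+(s\phi^*)\,\delta S_{\text{tot}}/\delta\phi^*$, with the current $j^\mu\propto\frac{\partial L_{\text{tot}}}{\partial(\partial_\mu\phi)}\,s\phi+\frac{\partial L_{\text{tot}}}{\partial(\partial_\mu\phi^*)}\,s\phi^*$ and the bracketed Euler--Lagrange terms identified with the functional derivatives through (\ref{eq:funcder}). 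Computing $\partial L_0/\partial(\partial_\mu\phi)=\partial^\mu\phi^*$ and $\partial L_0/\partial(\partial_\mu\phi^*)=\partial^\mu\phi$ and inserting $s\phi=i\phi,\ s\phi^*=-i\phi^*$, collecting terms produces the stated $j^\mu=i(\phi\,\partial^\mu\phi^*-\phi^*\,\partial^\mu\phi)$, the overall sign being the conventional $U(1)$-charge normalization. As a consistency check, $\partial_\mu j^\mu=i(\phi\,\dalemb\phi^*-\phi^*\,\dalemb\phi)$, whose derivative cross-terms cancel and which vanishes on-shell by the field equations (\ref{eq:feqcomp}).

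The only genuine subtlety is that the Noether theorem was proved for a single real field, so the main step is confirming that the construction generalizes to the pair $(\phi,\phi^*)$: the derivation property of $s$, the commutation $s\partial_\mu=\partial_\mu s$, and the functional-derivative identity (\ref{eq:funcder}) all carry over unchanged, and the current simply becomes a sum over the two fields. Beyond that the computation is routine bookkeeping, the only point requiring care being the overall sign and normalization convention adopted for the current.
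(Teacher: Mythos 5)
Your proposal is correct and takes essentially the same route as the paper's own (very terse) proof: since the global phase transformation does not mix $\phi$ and $\phi^*$ and has no explicit $x$-dependence, Noether's theorem is generalized by summing over the two fields, $sL_{\text{tot}}=0$ gives $\Lambda^\mu=0$, and the generators and current follow by the direct calculation you spell out explicitly. Your remark that the literal two-field version of (\ref{eq:current}) fixes $j^\mu$ only up to the overall sign set by the $U(1)$-charge normalization is apt, and your on-shell check $\partial_\mu j^\mu = i\left(\phi\,\dalemb\phi^* - \phi^*\,\dalemb\phi\right)$ confirms that the stated expression is the one consistent with the sign conventions used later in the MWI (\ref{eq:compMWI}).
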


\begin{proof}
	Since the global phase transformation does not mix between the two basic fields, the generalization of Noether's theorem for the one field case is obviously done by just summing over both fields in equations (\ref{eq:noeth}) and (\ref{eq:compcurrent}). Furthermore, the transformation does not depend explicitly on $x$, so we do not need the test function in the interaction to be constant on any region of space time. The results follow by direct calculation. The total divergence $\Lambda^{\mu}=0$. 
\end{proof}

\subsection{Derivation of the classical MWI for a complex scalar field}

In this section we derive the classical MWI for a complex scalar field using Noether's theorem. We begin by defining a space of field polynomials which will be used frequently in the following. 

\begin{definition}
	\label{def:P1}
	Let $\mathscr{P}^{(1)} \subset \mathscr{P}$ be the space of field polynomials in $\phi,\phi^*,\partial^{\mu}\phi$ and $\partial^{\nu}\phi^*$ only. A basis element $P \in \mathscr{P}^{(1)}$ of this vector space may be written as 
	\begin{align}
		\label{eq:p0}
		P &= (\phi)^{\alpha_1}(\phi^*)^{\beta_1}
		(\partial^{\mu}\phi)^{\alpha_2}(\partial^{\nu}\phi^*)^{\beta_2}\;, \qquad \alpha_i,\beta_i \in \mathbb{N} \; , \notag\\
		a :&= \alpha_1 + \alpha_2\;, \qquad  b:=\beta_1 + \beta_2 \; .
	\end{align}
	So $a$ is the total number of $\phi$ and $\partial^{\mu}\phi$ while $b$ is the total number of $\phi^*$ and $\partial^{\mu}\phi^*$
\end{definition}

The following lemma gives a relation between the charge number operator and the derivation $\delta_{hQ}$ appearing in the MWI (\ref{eq:nthMWI}). 

\begin{lemma}
	\label{lemma:deltaQ}
	Let $P \in \mathscr{P}^{(1)}$. Define the corresponding charge number operators
	\begin{align}
		\label{eq:theta}
		\theta &:= 
		\phi\frac{\partial}{\partial \phi}
		+\partial^{\mu}\phi\frac{\partial}{\partial(\partial^{\mu}\phi)}
		- \phi^*\frac{\partial}{\partial \phi^*}
		-\partial^{\mu}\phi^*\frac{\partial}{\partial(\partial^{\mu}\phi^*)} \; , \notag\\
		\theta_{\mu} &:= 
		\phi\frac{\partial}{\partial(\partial^{\mu}\phi)}
		-\phi^*\frac{\partial}{\partial(\partial^{\mu}\phi^*)} \;.
	\end{align}
	Then 
	\begin{equation*}
		\delta_{Q(y)}P(x) =
		-i \Big(\delta(y-x)\left(\theta P\right)(x)
		-\partial_y^{\mu}\big(\delta(y-x)\left(\theta_{\mu}P\right)(x)\big)\Big)\;, 
	\end{equation*}
	where $	\delta_{Q(y)}$ is as in equation \normalfont{(\ref{eq:derQ})} \textit{with the test function $h$ omitted.}
\end{lemma}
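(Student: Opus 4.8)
The plan is to reduce everything to the explicit functional-derivative formula (\ref{eq:funcder}) and to exploit that $P\in\mathscr{P}^{(1)}$ contains the basic fields to at most first order in derivatives. First I would write the derivation explicitly in the complex case,
\[
\delta_{Q(y)} = Q_1(y)\,\frac{\delta}{\delta\phi(y)} + Q_2(y)\,\frac{\delta}{\delta\phi^*(y)},
\]
with $Q_1=s\phi$ and $Q_2=s\phi^*$ as in Proposition \ref{prop:compneoth}, and apply it to $P(x)$. By linearity of $\delta_{Q(y)}$, $\theta$ and $\theta_{\mu}$ it suffices to verify the identity for a single monomial, and the complex analogue of (\ref{eq:funcder}) gives, since only the multi-indices $a=0$ and $a=\mu$ contribute for $P\in\mathscr{P}^{(1)}$,
\[
\frac{\delta P(x)}{\delta\phi(y)} = \delta(x-y)\,\frac{\partial P}{\partial\phi}(x) + (\partial^{\mu}\delta)(x-y)\,\frac{\partial P}{\partial(\partial^{\mu}\phi)}(x),
\]
and the analogous expression with $\phi$ replaced by $\phi^*$.

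Next I would separate the result into two groups: the terms in which the delta distribution is undifferentiated (coming from $\partial/\partial\phi$ and $\partial/\partial\phi^*$) and those carrying one derivative $(\partial^{\mu}\delta)(x-y)$ (coming from $\partial/\partial(\partial^{\mu}\phi)$ and $\partial/\partial(\partial^{\mu}\phi^*)$). In the first group I use $\phi(y)\delta(x-y)=\phi(x)\delta(x-y)$ to localise the prefactor at $x$; this immediately produces the $\phi\,\partial_{\phi}$ and $\phi^*\,\partial_{\phi^*}$ pieces of $(\theta P)(x)$ multiplied by $\delta(x-y)$. The crux is the second group: writing $(\partial^{\mu}\delta)(x-y)=-\partial_y^{\mu}\delta(x-y)$ and applying the Leibniz rule in $y$,
\[
\phi(y)\,\partial_y^{\mu}\delta(x-y) = \partial_y^{\mu}\big(\phi(x)\,\delta(x-y)\big) - (\partial^{\mu}\phi)(x)\,\delta(x-y),
\]
so that each derivative-of-delta term splits into a total $y$-divergence and a leftover undifferentiated-delta term.

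The final step is to recollect. The total-divergence pieces assemble exactly into $\partial_y^{\mu}\big(\delta(x-y)\,(\theta_{\mu}P)(x)\big)$, because $\theta_{\mu}=\phi\,\partial/\partial(\partial^{\mu}\phi)-\phi^*\,\partial/\partial(\partial^{\mu}\phi^*)$. The leftover undifferentiated-delta pieces are precisely the derivative-of-field contributions to $\theta$, namely the $\partial^{\mu}\phi\,\partial/\partial(\partial^{\mu}\phi)$ and $\partial^{\mu}\phi^*\,\partial/\partial(\partial^{\mu}\phi^*)$ terms, so together with the first group they complete $(\theta P)(x)$. Factoring out the common constant from the charges $s\phi$, $s\phi^*$ then yields the stated identity, the overall sign being fixed by these conventions.

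I expect the main obstacle to be the bookkeeping in the middle step: one must see that the Leibniz splitting of the derivative-of-delta terms does double duty, simultaneously producing the divergence term $\partial_y^{\mu}(\delta\,\theta_{\mu}P)$ and supplying exactly the $\partial^{\mu}\phi$- and $\partial^{\mu}\phi^*$-weighted contributions needed to complete $\theta P$, which are not visible in the first group. Keeping the index placement and the sign from $(\partial^{\mu}\delta)(x-y)=-\partial_y^{\mu}\delta(x-y)$ straight throughout is the only delicate point; there is no analytic difficulty, since all manipulations are the standard distributional identities valid for the compactly supported kernels guaranteed by the definition of $\mathscr{F}$.
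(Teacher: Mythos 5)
Your proposal is correct and follows essentially the same route as the paper's own proof: insert the explicit form of $\delta_{Q(y)}$, apply the functional-derivative formula (\ref{eq:funcder}) with only $a=0$ and $|a|=1$ contributing for $P\in\mathscr{P}^{(1)}$, use the Leibniz rule in $y$ on the derivative-of-delta terms, and recollect the undifferentiated-delta pieces into $(\theta P)(x)$ and the total divergences into $\partial_y^{\mu}\big(\delta(y-x)(\theta_{\mu}P)(x)\big)$. Your remark that the overall sign is ``fixed by these conventions'' glosses over the one genuinely delicate point, but the paper does the same: its proof effectively substitutes $Q(y)=-i\phi(y)$ even though Proposition \ref{prop:compneoth} gives $Q=s\phi=i\phi$, so this sign convention is an artifact of the paper rather than a gap in your argument.
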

\begin{proof}
	For the case of two fields, we have to sum over both of them in the definition of the derivation in equation (\ref{eq:derQ}), that is $\delta_{hQ}= \sum_i \delta_{h_i Q_i}$. Writing down only the integral kernel (by dropping $h$) and using the particular form of $Q_i$ from proposition \ref{prop:compneoth} we calculate
	\begin{align*}
		\delta_{Q(y)}P(x) &= \Big(Q(y) \frac{\delta}{\delta \phi(y)} + Q^*(y) \frac{\delta}{\delta \phi^*(y)}\Big) P(x) \notag\\
		&= -i\phi(y) \Big(\delta(y-x)\frac{\partial P}{\partial \phi}(x) - \partial_y^{\mu}\delta(y-x)\frac{\partial P}{\partial (\partial \phi)}(x)
		\Big) \notag\\
		&\qquad + i\phi^*(y) \Big(\delta(y-x)\frac{\partial P}{\partial \phi^*}(x) - \partial_y^{\mu}\delta(y-x)\frac{\partial P}{\partial (\partial \phi^*)}(x)
		\Big) \notag\\
		&= -i \delta(y-x) \Bigg(
		 \phi(y)\frac{\partial P}{\partial \phi}(x)
		+\partial^{\mu}_y \phi(y)\frac{\partial P}{\partial(\partial^{\mu} \phi)}(x) \notag\\
		& \qquad\qquad\qquad\qquad-\phi^*(y)\frac{\partial P}{\partial \phi^*}(x)
		- \partial^{\mu}_y \phi^*(y)\frac{\partial P}{\partial(\partial^{\mu} \phi^*)}(x)
		\Bigg) \notag\\
		&\qquad\qquad + i \partial^{\mu}_y \Bigg[ \delta(y-x) \Big(
		\phi(y)\frac{\partial P}{\partial(\partial^{\mu} \phi)}(x)
		- 	\phi^*(y)\frac{\partial P}{\partial(\partial^{\mu} \phi^*)}(x)
		\Big)\Bigg]\notag\\
		&= 	-i\delta(y-x)\left(\theta P\right)(x)
		+i\partial_y^{\mu}\big(\delta(y-x)\left(\theta_{\mu}P\right)(x)\big)\; ,
	\end{align*}
	where we have used the properties of the equation for the functional derivative (\ref{eq:funcder}) as well as the chain and product rule for the $\delta$-distribution. 
\end{proof}

\begin{remark}
	\label{rem:thetaleibn}
	For later purposes, we note that $\theta$ as introduced in lemma \ref{lemma:deltaQ} satisfies the Leibniz rule for the complex $\star$-product 
	\begin{equation*}
		\theta (F \star G) = (\theta F) \star G + F \star (\theta G) \qquad \forall F,G \in \mathscr{F} \; .
	\end{equation*}
	This may be verified by direct calculation, see \cite[Chap. 5.1.4]{duetsch19}
\end{remark}

The following proposition translates the general MWI in the form (\ref{eq:nthMWI}) into the case of the complex scalar field using Noether's theorem. The resulting MWI is the one to be shown to hold for the quantum case in chapter \ref{sec:proofMWI}. 

\begin{proposition}
	\label{prop:compMWI}
	For $P_1,\dots,P_n \in \mathscr{P}^{(1)}$ polynomials in the two basic fields and their first derivatives the MWI for a complex scalar field can be written as 
	\begin{align}
		\label{eq:compMWI}
		\partial_{\mu}^y \,T_{n+1}\big( P_1(x_1)&\otimes\cdots\otimes P_n(x_n)\otimes j^{\mu}(y) \big) 
		\notag\\
		&-\hbar\sum_{l=1}^{n}\delta(y-x_l)\, 
		T_{n}\big( P_1(x_1)\otimes\cdots\otimes(\theta P_l)(x_l)\otimes\dots\otimes P_n(x_n) \big)
		 \notag\\
		& +\hbar\,\partial_y^{\mu} \Big(
		\sum_{l=1}^{n}\delta(y-x_l)\,
		T_{n}\big( P_1(x_1)\otimes\cdots\otimes(\theta_{\mu} P_l)(x_l)\otimes\dots\otimes P_n(x_n) \big)\Big) 
		\notag\\
		= 
		i\, &T_{n+1}\big( P_1(x_1)\otimes\cdots\otimes P_n(x_n)\otimes \phi(y) \big) \cdot(\dalemb\,+\,m^2)\phi^*(y) 
		\notag\\
		-i\, &T_{n+1}\big( P_1(x_1)\otimes\cdots\otimes P_n(x_n)\otimes \phi^*(y) \big) \cdot(\dalemb\,+\,m^2)\phi(y) \; .
	\end{align}
\end{proposition}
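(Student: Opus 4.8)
The plan is to read (\ref{eq:compMWI}) off from the general $n$-th order quantum MWI (\ref{eq:nthMWI}), after feeding in the Noether data of Proposition \ref{prop:compneoth} and rewriting the derivation term with Lemma \ref{lemma:deltaQ}. Since the complex theory carries two basic fields, I first set up the two-field version of (\ref{eq:nthMWI}): the derivation $\delta_{hQ}$ and the functional $A$ each split into a sum over the two currents $Q_1=i\phi$ and $Q_2=-i\phi^*=Q^*$, and the single field equation is replaced by the pair (\ref{eq:feqcomp}). Taking $F_l=P_l\in\mathscr{P}^{(1)}$ as arguments yields an identity integrated against an arbitrary $h\in\mathscr{D}(\mathbb{M})$, from which the kernel form (\ref{eq:compMWI}) is obtained at the very end by varying $h$.

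Next I would process the three groups of terms separately. By Noether's theorem in the two-field form (the obvious summation of (\ref{eq:noeth}) used in Proposition \ref{prop:compneoth}), one has $\partial_\mu j^\mu=Q_1\,\delta S_0/\delta\phi+Q_2\,\delta S_0/\delta\phi^*$, so the last argument of the $(n+1)$-th order product is $A=\int dy\,h(y)\,\partial_\mu j^\mu(y)$. Using linearity of $T_{n+1}$ together with the Action Ward Identity (\ref{eq:AWI}) I pull the derivative outside, $T_{n+1}(\cdots\otimes A)=\int dy\,h(y)\,\partial_\mu^y T_{n+1}(\cdots\otimes j^\mu(y))$, which is the current term of (\ref{eq:compMWI}). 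For the derivation contribution $\tfrac{\hbar}{i}\sum_l T_n(\cdots\otimes\delta_{hQ}P_l\otimes\cdots)$ I apply Lemma \ref{lemma:deltaQ} to each $\delta_{Q(y)}P_l(x_l)$, splitting it into $-i\,\delta(y-x_l)(\theta P_l)(x_l)$ and $+i\,\partial_y^\mu\big(\delta(y-x_l)(\theta_\mu P_l)(x_l)\big)$; because the $T_n$ factors are $y$-independent, the $\partial_y^\mu$ falls on the $\delta$-distribution alone, and the prefactor $\hbar/i$ turns the $\mp i$ into the $\mp\hbar$ and $\pm\hbar$ coefficients of (\ref{eq:compMWI}). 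Finally the right-hand side reads $\int dy\,h(y)\big[T_{n+1}(\cdots\otimes Q_1(y))\cdot\delta S_0/\delta\phi(y)+T_{n+1}(\cdots\otimes Q_2(y))\cdot\delta S_0/\delta\phi^*(y)\big]$ with the pointwise product; substituting $Q_1=i\phi$, $Q_2=-i\phi^*$ and the explicit equations (\ref{eq:feqcomp}) produces exactly the two terms $i\,T_{n+1}(\cdots\otimes\phi(y))\cdot(\dalemb+m^2)\phi^*(y)$ and $-i\,T_{n+1}(\cdots\otimes\phi^*(y))\cdot(\dalemb+m^2)\phi(y)$ of (\ref{eq:compMWI}).

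Collecting the three pieces gives the claimed identity under one integral $\int dy\,h(y)$; as $h$ is arbitrary, equality of the integrands as distributions in $y$ is precisely (\ref{eq:compMWI}).

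I expect the only real difficulty to be the bookkeeping of signs and of the factors of $i$ and $\hbar$: matching the $\hbar/i$ in front of the derivation term against the $\mp i$ coming out of Lemma \ref{lemma:deltaQ}, keeping the sign conventions of the current (\ref{eq:compcurrent}) and of the field equations (\ref{eq:feqcomp}) consistent with Noether's relation, and correctly handling the distributional derivatives $\partial_y^\mu\delta(y-x_l)$ before the test function is stripped off. All the conceptual ingredients (Noether's theorem, the Action Ward Identity, and Lemma \ref{lemma:deltaQ}) are already in place, so no genuinely new obstacle arises beyond this careful assembly.
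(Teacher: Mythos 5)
Your proposal is correct and follows essentially the same route as the paper's proof: start from the $n$-th order quantum MWI (\ref{eq:nthMWI}) adapted to the two basic fields, substitute the Noether data $\partial_\mu j^\mu = \sum_i Q_i\,\delta S_0/\delta\phi_i$ and the free field equations (\ref{eq:feqcomp}), rewrite the derivation term via Lemma \ref{lemma:deltaQ}, pull the derivative out with the Action Ward Identity (\ref{eq:AWI}), and strip the test functions. Your explicit tracking of how the $\hbar/i$ prefactor combines with the $\mp i$ from Lemma \ref{lemma:deltaQ} to produce the $\mp\hbar$ and $\pm\hbar$ coefficients is if anything slightly more detailed than the paper's own presentation.
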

\begin{proof}
	We start with equation (\ref{eq:nthMWI}), the MWI for the $T$-products to $n$-th order for off-diagonal entries. With the
	functional $A$ from (\ref{eq:funcA}) and the derivation $\delta_{hQ}$ from (\ref{eq:derQ}), both for the case of two basic fields with the sum over $i$ running through $\phi_i=\phi$, $\phi_2=\phi^*$ we get
	\begin{align*}
		&\int dy \, \sum_{i}h_i(y)\, T_{n+1} \left(F_1\otimes \cdots F_n \otimes Q_i(y)\,\frac{\delta S_0}{\delta \phi_i(y)}\right) \notag\\
		&\qquad+ \int dy \, \sum_{i}h_i(y)\cdot \frac{\hbar}{i} \sum_{l=1}^{n} 
		T_{n} \left(F_1\otimes \cdots \otimes \delta_{Q_i(y)} F_l \otimes \cdots \otimes F_n\right) \notag\\
		= &\int dy\, \sum_{i}h_i(y)\,
		T_{n+1} \left(F_1\otimes \cdots F_n \otimes Q_i(y)\right) 
		\cdot \frac{\delta S_0}{\delta \phi_i(y)} \; .
	\end{align*}
	On the r.h.s. we use the free field equations (\ref{eq:feqcomp}). On the l.h.s, in the first term we replace the divergence of the Noether current from equation (\ref{eq:noeth}) for the free action (since it is $S_0$ that appears in $A$). Now we choose $F_1,\dots,F_n$ to be local functionals of the form $F_i = \int dx\, g_i(x) P_i(x)$ for $P_i\in \mathscr{P}^{(1)}$. In the second term on the l.h.s we can then put in the form of $\delta_{Q(y)}$ from lemma \ref{lemma:deltaQ}. Finally by using the Action Ward Identity (\ref{eq:AWI}) to pull the derivative out of the first $T$-product on the l.h.s and omitting all test functions $h_i$ and $g_i$ we arrive at equation (\ref{eq:compMWI}). 
\end{proof}	
	
	\chapter{Proving the MWI for the complex scalar field}
	\label{chap:proof}
	
	\section{Preliminaries to the proof}
	\label{sec:prelims}
	This section presents four statements that will be needed for the proof of the MWI in the subsequent section. They are all given in \cite{duetsch19} either for the scalar or the QED case, but for two of them the proof has to be modified to fit the complex scalar field. 

\subsection{Anomaly with basic fields as arguments}

The following proposition will be used to reduce the number of arguments of the anomaly map $\Delta^k$ that have to be discussed in section \ref{sec:6case}. It states that in the inductive procedure of the proof performed in chapter \ref{sec:proofMWI}, anomaly maps with arguments consisting of only one basic field vanish. 

\begin{proposition}
	\label{prop:argums}
	Let $Q\in\mathscr{P}$. Assume that $T_1,\dots,T_n$ are constructed and that \normalfont
	\begin{equation*}
		\Delta^k \left( \otimes_{j=1}^{n-1} F_j;Q(y)\right) = 0
		\quad \forall F_j \in \mathscr{F}_{\text{loc}}, \quad k<n\; .
	\end{equation*}
	\textit{Then }
	\begin{equation*}
		\Delta^k \Big(\partial^a\phi_i(x)\otimes\bigotimes_{j=1}^{n-1} F_j;Q(y)\Big) = 0
		\quad \forall F_j \in \mathscr{F}_{\text{loc}} \; , \quad \forall a\in\mathbb{N}^d \; .
	\end{equation*}\textit{
	for each of the basic fields $\phi_1=\phi,\phi_2=\phi^*$. }
\end{proposition}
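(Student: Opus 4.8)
The plan is to reduce the anomaly carrying a basic-field argument to anomalies of strictly lower order, which vanish by assumption. First I would record what the inductive hypothesis does to the definition of the anomaly. Expanding the anomalous MWI (\ref{eq:anoMWI}) to the order with $n$ interaction arguments and using $T_1=\mathrm{id}$, the insertion of $\Delta(e^S_{\otimes};hQ)$ splits into a single term containing the top anomaly and terms pairing a lower anomaly $\Delta^k$ ($k<n$) with a lower time-ordered product. Since every such $\Delta^k$ vanishes by hypothesis, the top anomaly is identified with the \emph{$n$-th order violation} of the MWI (\ref{eq:nthMWI}): the difference between its two sides written purely through $T$-products. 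It therefore suffices to show this violation vanishes once one interaction argument is a basic field $\psi(x):=\partial^a\phi_i(x)$. The point that makes the statement available at this stage of the induction is that, by the off-shell field equation, every $T_{n+1}$ with $\psi$ among its arguments is expressible through the already-constructed $T_1,\dots,T_n$, so the object we study is fixed before $T_{n+1}$ itself is.

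The main tool is the off-shell field equation, renormalization condition (ix) of Definition \ref{def:rencon}. In the complex theory it rewrites any time-ordered product containing $\psi(x)$ as the sum of two pieces: (a) $\psi(x)$ multiplying the time-ordered product of the remaining arguments, and (b) a contraction $\hbar\int \Delta_m^F\,\tfrac{\delta}{\delta\phi_{\bar{\imath}}}$ acting on that same lower product, where $\phi_{\bar{\imath}}$ is the field conjugate to $\phi_i$, since the propagator pairs $\phi$ only with $\phi^*$; the multi-derivative $\partial^a$ is carried through using the Action Ward Identity (\ref{eq:AWI}). I would apply this reduction to every $T$-product in the $n$-th order violation: the two products on the right-hand side, the product containing the functional $A$, and the $\delta_{hQ}$ terms in which the derivation acts on one of the $F_j$. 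The one summand where $\psi$ is not literally an argument is the one in which $\delta_{hQ}$ acts on $\psi$ itself; but by the explicit form of $Q$ in Proposition \ref{prop:compneoth}, $\delta_{Q(y)}\psi(x)$ is a derivative of $\delta(x-y)$ times the basic field $\phi_i(y)$, so this product again carries a basic-field argument and is reduced by the same field equation.

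Collecting terms, I expect the factored-out pieces (a) to reassemble into $\psi(x)$ times the $(n-1)$-th order MWI violation with the remaining arguments, that is into $\psi(x)\,\Delta^{n-1}(F_1\otimes\cdots\otimes F_{n-1};Q)$, which is zero by the inductive hypothesis, and the contraction pieces (b) to reassemble into $\hbar\int\Delta_m^F\,\tfrac{\delta}{\delta\phi_{\bar{\imath}}}$ applied to that same $(n-1)$-th order violation, hence also zero. The step I expect to be the main obstacle is exactly this reassembly: one must check that the (a)- and (b)-contributions regroup into the lower-order MWI combination with the right signs and combinatorial weights, and in particular one must control the contact terms produced when the contraction derivative $\tfrac{\delta}{\delta\phi_{\bar{\imath}}}$ meets the Klein--Gordon factors hidden in $A$ and on the right-hand side. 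Here the Green's-function identity $(\dalemb+m^2)\Delta_m^F=-\delta$ together with the free field equations (\ref{eq:feqcomp}) collapses those terms to local expressions that cancel across the identity. Tracking which of the two equations in (\ref{eq:feqcomp}) is invoked for $\phi$ as opposed to $\phi^*$ is precisely where the real-scalar argument of \cite{duetsch19} has to be adapted to the complex field.
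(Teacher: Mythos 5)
Your proposal is correct and takes essentially the same route as the paper: the paper's proof just invokes the off-shell field equation (citing the real-scalar computation in \cite[Exc. 4.3.3]{duetsch19}) and notes that the same calculation goes through for $\phi^*$ and for derivated fields $\partial^a\phi_i$, which is precisely the reduction you describe, including the complex-field pairing of $\phi_i$ with $\delta/\delta\phi_{\bar{\imath}}$ and the contact terms collapsed via $(\dalemb+m^2)\Delta_m^F=-\delta$. Your write-up simply makes explicit the calculation the paper delegates to the reference.
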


\begin{proof}
	This follows by using the off-shell field equation, see \cite[Exc. 4.3.3]{duetsch19} for the real scalar case $\varphi$. Repeating exactly the same calculation for $\phi^*$ and for fields with derivatives $\partial^a$ yields this statement for the complex scalar field.
\end{proof}

\subsection{The charge number operator}

We now show that the action of the charge number operator $\theta$ can be expressed as taking a commutator involving the zeroth component of the current $j^{\mu}$. The proof will use a Lorentz invariant version of the Gauss integral theorem taken from \cite[chap. A.1]{duetsch19}, that is stated in the following lemma. 

\begin{lemma}
	Let $v\in\mathcal{C}^1(\mathbb{M},\mathbb{R}^4)$ be a vector field and $G\subset\mathbb{M}$ a compact set with a sufficiently smooth boundary $\partial G$. Then 
	\begin{equation*}
		\int_G d^4x\, \partial_{\mu} v^{\mu}(x) = \oint_{\partial G} d\sigma_{\mu}(x)\,v^{\mu}(x)
	\end{equation*} 
	for some measure $d\sigma$. For the special case that in some region the boundary $\partial G$ is of the form $x^0 = \text{\normalfont{const}}$, in this region we have 
	\begin{equation*}
		d\sigma = (\pm1,0,0,0)\,d\vec{x}
	\end{equation*}
	with the sign chosen in a way that $d\sigma$ is pointing outwards of $G$.
\end{lemma}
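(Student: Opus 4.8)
The plan is to recognize that this lemma is nothing more than the classical divergence theorem of vector calculus on $\mathbb{R}^4$, written in Lorentz-covariant notation, and that the Minkowski metric never actually enters the computation. The crucial observation to make first is that $\partial_\mu v^\mu = \sum_{\mu=0}^{3}\partial v^\mu/\partial x^\mu$ contracts a lower-index derivative with the contravariant components $v^\mu$, so no factors of $g^{\mu\nu}$ appear in inertial coordinates. Hence the left-hand side is literally the ordinary Euclidean divergence of the $\mathcal{C}^1$ map $x\mapsto(v^0(x),\dots,v^3(x))$.

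With this in hand, the first step is to invoke the standard divergence theorem on $G\subset\mathbb{R}^4$ (valid precisely under our hypotheses: $v$ of class $\mathcal{C}^1$, $G$ compact with sufficiently smooth boundary), which yields
\[
\int_G d^4x\,\partial_\mu v^\mu = \oint_{\partial G} n_\mu\, v^\mu\, dS,
\]
where $n$ is the outward-pointing Euclidean unit normal and $dS$ the induced surface measure on $\partial G$. The second step is then purely definitional: I set $d\sigma_\mu := n_\mu\, dS$, which is the covector-valued measure asserted by the lemma and immediately establishes the general formula.

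To make the advertised Lorentz covariance manifest rather than coordinate-dependent, I would reformulate the same argument through differential forms. Writing $\omega := \iota_v(dx^0\wedge dx^1\wedge dx^2\wedge dx^3)$ for the interior product of the volume form with $v=v^\mu\partial_\mu$, one has $d\omega = (\partial_\mu v^\mu)\,d^4x$, so Stokes' theorem $\int_G d\omega = \oint_{\partial G}\omega$ reproduces the identity with $d\sigma_\mu = \iota_{\partial_\mu}(dx^0\wedge dx^1\wedge dx^2\wedge dx^3)$ restricted to $\partial G$. Because the volume form is Lorentz invariant (proper Lorentz transformations have unit Jacobian) and $v^\mu$ transforms as a vector, both sides are manifestly Lorentz scalars and $d\sigma_\mu$ transforms as a covector; one checks that this covariant surface element coincides with the Euclidean $n_\mu\,dS$ of the elementary argument.

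The remaining, and genuinely computational, step is the special case. When a portion of $\partial G$ has the form $\{x^0 = \text{const}\}$, the Euclidean outward normal there points along the time axis, $n = (\pm1,0,0,0)$, and the induced measure is $dS = dx^1\,dx^2\,dx^3 = d\vec{x}$; equivalently, pulling back $\iota_{\partial_\mu}(dx^0\wedge\cdots\wedge dx^3)$ to the surface annihilates every term except $\mu=0$, since $dx^0$ restricts to zero. This gives $d\sigma = (\pm1,0,0,0)\,d\vec{x}$. I would fix the sign by orientation: it is $+1$ on the part of the boundary where $G$ lies toward earlier times (future-pointing outward normal) and $-1$ where $G$ lies toward later times. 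I do not expect any real obstacle here; the only points requiring care are the bookkeeping of index placement so that $d\sigma_\mu v^\mu$ really is a Lorentz scalar, and correctly orienting the normal in the special case.
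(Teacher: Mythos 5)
Your proposal is correct. Note that the paper does not actually prove this lemma---it is quoted verbatim from \cite[chap.\ A.1]{duetsch19} and used as a black box---so there is no in-paper argument to compare against. Your reduction is the standard and complete one: since $\partial_\mu v^\mu=\sum_{\mu=0}^3 \partial v^\mu/\partial x^\mu$ contracts an upper with a lower index, no metric factors enter and the left-hand side is literally the Euclidean divergence on $\mathbb{R}^4$, so the classical divergence theorem applies under exactly the stated hypotheses ($v\in\mathcal{C}^1$, $G$ compact with sufficiently smooth boundary), and setting $d\sigma_\mu:=n_\mu\,dS$ gives the claim. Your differential-forms reformulation via $\omega=\iota_v(dx^0\wedge\cdots\wedge dx^3)$ is a worthwhile addition, since it makes the Lorentz covariance of $d\sigma_\mu$ manifest---something the elementary Euclidean argument obscures and which justifies the paper's billing of the result as a ``Lorentz invariant version'' of Gauss's theorem. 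Your handling of the special case, including the orientation convention ($+1$ where $G$ lies toward earlier times relative to the flat boundary piece), matches the sign convention actually used in the paper's application in the proof of proposition \ref{prop:Q_0}, where the region between two time slices $x^0=c_1$ and $x^0=c_2$ produces the difference $[Q_0,F_0]_{c_1}-[Q_0,F_0]_{c_2}$.
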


\begin{proposition}
	\label{prop:Q_0}
	We formally write 
	\begin{equation}
		\label{eq:Q_0}
		Q_0 := \int d\vec{x}\, j^0(c,\vec{x})_0 \qquad \text{\normalfont{with}} \quad c\in\mathbb{R} \; .
	\end{equation}
	Then 
	\begin{enumerate}[label=(\roman*)]
	\normalfont
	\item 
	\textit{For $F\in\mathscr{F}$, the commutator $[Q_0,F_0]$ exists and does not depend on $c$. 
	}
	\normalfont
	\item 
	\textit{For $P\in\mathscr{P}^{(1)}$ we have 
	\begin{equation}
		\label{eq:chargecomm}
		[Q_0,P(x)_0]= \hbar\,(\theta P)(x)_0 \; .
	\end{equation}
	}
	\end{enumerate}
\end{proposition}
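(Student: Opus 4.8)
The plan is to read this as the quantum statement that the Noether charge $Q_0$ generates the $U(1)$ charge-counting transformation via the $\star$-commutator. I would prove part (i) first -- establishing that $Q_0$ is a well-defined, time-slice-independent charge on the on-shell algebra -- and then exploit this in part (ii) to compute the commutator on the basic fields and extend to all of $\mathscr{P}^{(1)}$ by a derivation argument.

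For part (i), existence of $[Q_0, F_0]$ follows from spacelike commutativity (Proposition \ref{prop:spacecomm}): the integrand $[j^0(c,\vec{x})_0, F_0]$ vanishes whenever $(c,\vec{x})$ is spacelike to $\text{supp}\,F$, and since the intersection of the causal shadow of the compact set $\text{supp}\,F$ with the time slice $x^0 = c$ is bounded (finite propagation speed), the $\vec{x}$-integral runs effectively over a compact region and converges. For $c$-independence I would use that the current is conserved on-shell: by the Noether identity for the free complex scalar field (Proposition \ref{prop:compneoth}, cf.\ (\ref{eq:noeth})) one has $\partial_\mu j^\mu(x)_0 = 0$ on the solution space $\mathcal{E}_{S_0}$. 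Applying the Lorentz-invariant Gauss theorem to $[j^\mu(x)_0, F_0]$ over the spacetime slab bounded by the slices $x^0 = c_1$ and $x^0 = c_2$, the volume integral vanishes by conservation, the two spacelike faces reproduce $[Q_0(c_1), F_0] - [Q_0(c_2), F_0]$, and the timelike boundary at large spatial radius contributes nothing because there $(x)$ is spacelike to the compact $\text{supp}\,F$, so $[j^i(x)_0, F_0] = 0$ again by Proposition \ref{prop:spacecomm}. This yields $[Q_0(c_1), F_0] = [Q_0(c_2), F_0]$.

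For part (ii), having established $c$-independence, I would set $c = x^0$ and compute directly. The equal-time commutators follow from (\ref{eq:comms}) together with the time-zero relations $\Delta_m(0,\vec{x}) = 0$ and $\partial^0\Delta_m(0,\vec{x}) = -\delta(\vec{x})$ of Proposition \ref{prop:retprop}; in particular $[\phi(x), \partial^0\phi^*(y)]\big|_{x^0=y^0} = i\hbar\,\delta(\vec{x}-\vec{y})$ while $[\phi(x),\phi^*(y)]\big|_{x^0=y^0} = 0$. Inserting $j^0 = i(\phi\,\partial^0\phi^* - \phi^*\,\partial^0\phi)$ and using the Leibniz rule for the $\star$-commutator gives $[Q_0, \phi(x)_0] = \hbar\,\phi(x)_0$ and, analogously, $[Q_0, \phi^*(x)_0] = -\hbar\,\phi^*(x)_0$, which are exactly $\hbar(\theta\phi)(x)_0$ and $\hbar(\theta\phi^*)(x)_0$. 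To reach arbitrary $P \in \mathscr{P}^{(1)}$ I would note that both $\tfrac{1}{\hbar}[Q_0,\,\cdot\,]_\star$ and $\theta$ are derivations for the $\star$-product -- the former by Theorem \ref{thm:starprod}, the latter by Remark \ref{rem:thetaleibn}. Since the two derivations agree on $\phi,\phi^*$ and their first derivatives, and every $P \in \mathscr{P}^{(1)}$ is a polynomial in these generators, the derivation property propagates the agreement to all of $\mathscr{P}^{(1)}$, establishing (\ref{eq:chargecomm}).

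I expect the main obstacle to be the $c$-independence in part (i): one must marshal current conservation, the Gauss theorem, and spacelike commutativity simultaneously, arguing in particular that the timelike boundary term genuinely vanishes and that the manipulations are legitimate at the level of the on-shell fields rather than merely formal. A secondary technical point is the same-point structure implicit in $P(x)_0$ and in $\theta_\mu$; I would sidestep direct evaluation of $\star$-products of coincident fields by routing everything through the derivation property rather than the explicit kernel.
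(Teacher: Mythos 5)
Your proposal is correct and follows essentially the same route as the paper's proof: spacelike commutativity for existence of the commutator, on-shell current conservation combined with the Lorentz-invariant Gauss theorem over a slab (with the spacelike boundary contributions killed by Proposition \ref{prop:spacecomm}) for $c$-independence, equal-time commutators from (\ref{eq:comms}) and Proposition \ref{prop:retprop} at the choice $c=x^0$ for the basic fields, and the derivation property of the $\star$-commutator to extend to all of $\mathscr{P}^{(1)}$. The only cosmetic difference is that you phrase the last step as the agreement of the two derivations $\tfrac{1}{\hbar}[Q_0,\cdot]_{\star}$ and $\theta$ on generators, whereas the paper evaluates $[Q_0,P(x)_0]=\hbar\,(a-b)\,P(x)_0$ directly on basis monomials of $\mathscr{P}^{(1)}$ -- the same argument in different words.
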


\begin{proof}
	(i) It is a priori not clear whether the expression in equation (\ref{eq:Q_0}) exists, but we will only consider its commutator with $F_0$. Due to spacelike commutativity (proposition \ref{prop:spacecomm}) the commutator of two fields $F,G\in\mathscr{F}$ vanishes, if the supports of $F$ and $G$ are spacelike separated. So 
	\begin{equation*}
		[j^0(c,\vec{x})_0,F_0]=0 \qquad \text{if}\;
		(c,\vec{x}) \notin \Big( \text{supp}\,F + (\overline{V}_+ \cup \overline{V}_-)\Big)
	\end{equation*}
	and -- since $\text{supp}\,F$ is bounded -- the region of integration in (\ref{eq:chargecomm}) is bounded, so the integral and thus the commutator exist. \\
	
	\begin{figure}[h!]
		\centering
		\includegraphics[width=0.8\textwidth]{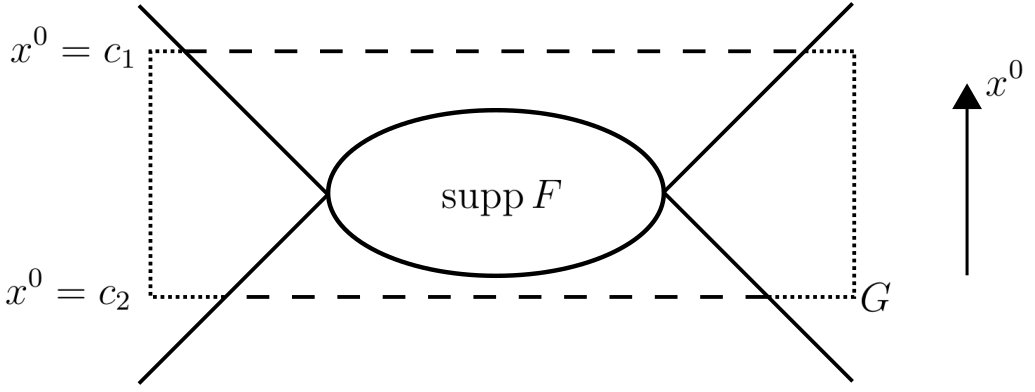}
		\caption[Testbild]{The region $G$ of integration used to show independence of $c\in\mathbb{R}$ of the commutator $[Q_0,F_0]$ in proposition \ref{prop:Q_0}. \label{fig:gauss}}
	\end{figure}
	To show independence of $c$ we choose a region $G$ to integrate over the commutator as indicated in figure \ref{fig:gauss}. Due to equation (\ref{eq:noeth}) the divergence of the on-shell current vanishes. Using Gauss Theorem and spacelike commutativity we calculate 
	\begin{align*}
		0 
		&= \int_G d^4x\, \partial_{\mu}^x \,[j^{\mu}(x)_0,F_0] 
		= \oint_{\partial G} d\sigma_{\mu}\, [j^{\mu}(x)_0,F_0] \notag\\
		&= \int_{x^0=c_1} d\vec{x}\,[j^{0}(x)_0,F_0] -  
		\int_{x^0=c_2} d\vec{x}\,[j^{0}(x)_0,F_0] \notag\\
		&= [Q_0,F_0]_{c_1} - [Q_0,F_0]_{c_2} \; .
	\end{align*}
	Hence the commutator is independent of the choice of $c$. \\

	(ii) Using the particular form of $j^{\mu}$ in (\ref{eq:compcurrent}), the basic commutators (\ref{eq:comms}) and proposition \ref{prop:retprop} we calculate for an arbitrary $c$: 
	\begin{align*}
		[Q_0,\phi(y)_0] &= i \int_{x^0=c}d\vec{x}\, \Big[\phi(x)_0\,\partial^0 \phi^*(x)_0 - \phi^*(x)_0\,\partial^0 \phi(x)_0,\phi(y)_0\big] \notag\\
		&= i \int_{x^0=c}d\vec{x}\,\Big(\phi(x)_0\, i \hbar\, \partial_x^0 \Delta(x-y) - (\partial_x^0 \phi(x)_0)\, i\hbar\, \Delta(x-y) 
		\big) \;, \notag\\
		&= \hbar \int_{x^0=y^0} \phi(y^0,\vec{x})\,\delta(\vec{x}-\vec{y}) \notag\\
		&= \hbar\, \phi(y)_0 \; , \notag\\ \\
		[Q_0,\phi(y)_0] &= -\hbar\, \phi^*(y)_0 \;.
	\end{align*}
	In the third line we have chosen $c=y^0$, which is possible since the result is independent of $c$. The calculation for $\phi^*$ is performed following precisely the same steps. Using linearity of the derivative and the commutator we also get 
	\begin{equation*}
		[Q_0,\partial^{\mu}\phi(y)_0] = \hbar\, \partial^{\mu}\phi(y)_0 \; , \qquad
		[Q_0,\partial^{\mu}\phi^*(y)_0] = - \hbar\, \partial^{\mu}\phi^*(y)_0 \; .
	\end{equation*}
	Now let $P\in\mathscr{P}^{(1)}$ be a basis element. Then due to the derivation property of the commutator (theorem \ref{thm:starprod}) we get 
	\begin{align*}
		[Q_0,P(x)_0] &= \hbar \,(a-b) P(x)_0\notag\\
					 &= \hbar \, (\theta P)(x)_0\; ,
	\end{align*}
	where we use the notation introduced in definition \ref{def:P1}. By linearity of the commutator, the statement follows for any $P\in\mathscr{P}^{(1)}$. 
\end{proof}

\subsection{Furry's Theorem}

Furry's theorem will be used to conclude that the VEVs of certain $T$-products have to vanish. It makes use of the notion of charge conjugation, which is essentially the operation of exchanging all $\phi$ and $\phi^*$ by each other. 

\begin{definition}
	The \textit{charge conjugation operator} $\beta_C$ on the basic fields is defined as 
	\begin{equation*}
	\beta_C \, \phi(x) := \eta_C \, \phi^* (x) \; ,\qquad 
	\beta_C \, \phi^* (x) := \eta_C^* \, \phi (x) \; , \qquad \eta_C \in \mathbb{C}\; \; \text{with} \; \; 
	|\eta_C| = 1 \; ,
	\end{equation*}
	and as an operator on the space of complex scalar fields $\beta_C: \mathscr{F} \rightarrow \mathscr{F}$ by
	\begin{equation*}
	\beta_C \, F (\phi,\phi^*) :=  F (\beta_C \,\phi,\beta_C \,\phi^*).
	\end{equation*}
\end{definition}

\begin{proposition}
	Charge conjugation $\beta_C$ is a linear operator satisfying $\beta_C^2 = 1$ and $\beta_C(F\cdot G) = (\beta_C F)\cdot(\beta_C G)$ for $F,G \in \mathscr {F}$. Furthermore it holds that
	\begin{equation}
		\label{eq:betastar}
		\beta_C(F\star G) = (\beta_C F) \star (\beta_C G) \; . 
	\end{equation}
\end{proposition}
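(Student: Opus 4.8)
The plan is to dispatch the three algebraic properties quickly and reserve the real work for the star-product compatibility \eqref{eq:betastar}. Linearity is immediate: since $\beta_C F = F(\eta_C\phi^*,\eta_C^*\phi)$ is obtained by substituting the basic fields into the functional $F$ while leaving the numerical kernels $f_{n,l}$ untouched, the map $F\mapsto\beta_C F$ respects complex linear combinations. For the involution property I would evaluate $\beta_C^2$ on the two basic fields,
\begin{equation*}
	\beta_C^2\phi = \eta_C\,\beta_C\phi^* = \eta_C\eta_C^*\,\phi = |\eta_C|^2\phi = \phi\;,
\end{equation*}
and likewise $\beta_C^2\phi^*=\phi^*$, using $|\eta_C|=1$; since every $F\in\mathscr{F}$ is a functional of the basic fields, $\beta_C^2 F = F$ follows from the substitution definition. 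The pointwise-product claim is equally direct: because $\beta_C$ acts by substituting the arguments and the product is pointwise, $\beta_C(F\cdot G)=(F\cdot G)(\eta_C\phi^*,\eta_C^*\phi)=(\beta_C F)\cdot(\beta_C G)$.

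For the star product I would first record how $\beta_C$ intertwines with functional derivatives. By the chain rule (checked on the basic fields $\phi,\phi^*$, whose higher kernels are numerical and hence inert under $\beta_C$) one obtains
\begin{align*}
	\beta_C\circ\frac{\delta}{\delta\phi(x)} &= \eta_C^*\,\frac{\delta}{\delta\phi^*(x)}\circ\beta_C\;, \\
	\beta_C\circ\frac{\delta}{\delta\phi^*(x)} &= \eta_C\,\frac{\delta}{\delta\phi(x)}\circ\beta_C\;.
\end{align*}
Applying $\beta_C$ to the defining expression \eqref{eq:complstar} for $F\star G$, I would note that the propagators $\Delta_m^+(x_l-y_l)$ are purely numerical, so $\beta_C$ leaves them fixed, while by the pointwise-product rule just proven $\beta_C$ distributes over the integrand onto the two functional-derivative factors. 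Pushing $\beta_C$ through the $n$ derivatives $\delta/\delta\phi$ and $k$ derivatives $\delta/\delta\phi^*$ acting on $F$ turns them into $n$ derivatives $\delta/\delta\phi^*$ and $k$ derivatives $\delta/\delta\phi$ acting on $\beta_C F$, and symmetrically for $G$; the accumulated phase is $(\eta_C^*)^n(\eta_C)^k$ from the $F$-factor times $(\eta_C)^n(\eta_C^*)^k$ from the $G$-factor, which collapses to $(\eta_C^*\eta_C)^{n}(\eta_C\eta_C^*)^{k}=1$ precisely because $|\eta_C|=1$.

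The final step is to recognize the resulting double sum as $(\beta_C F)\star(\beta_C G)$. After the transformation the factor $\beta_C F$ carries $n$ derivatives $\delta/\delta\phi^*$ and $k$ derivatives $\delta/\delta\phi$, whereas the defining form of the star product wants $\phi$-derivatives of the left entry contracted against $\phi^*$-derivatives of the right. These match after swapping the summation labels $n\leftrightarrow k$ (the prefactor $\hbar^{n+k}/(n!\,k!)$ is invariant under the swap) together with the corresponding simultaneous relabeling of the dummy pairs $(x_l,y_l)$; this is legitimate because each functional derivative is symmetric under permutations of its $\phi$-arguments and of its $\phi^*$-arguments separately, and the product $\prod_l\Delta_m^+(x_l-y_l)$ is preserved under a joint permutation of the $x$'s and $y$'s. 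This identifies the expression with $(\beta_C F)\star(\beta_C G)$ and proves \eqref{eq:betastar}. I expect the only genuine obstacle to be the index bookkeeping in this last step: the crucial structural fact making it work is that $\beta_C$ interchanges the two contraction types — $\phi$-from-left with $\phi^*$-from-right, and vice versa — encoded in the basic commutators \eqref{eq:comms}, so that the full sum over $(n,k)$ is carried bijectively onto itself.
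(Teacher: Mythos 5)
Your proof is correct and follows essentially the same route as the paper: both rest on the intertwining of $\beta_C$ with the functional derivatives (swapping $\delta/\delta\phi \leftrightarrow \delta/\delta\phi^*$, with all phases cancelling because $|\eta_C|=1$), the inertness of the numerical propagators, and the observation that $\beta_C$ exchanges the two contraction types so that the double sum in (\ref{eq:complstar}) is mapped bijectively onto itself under $n\leftrightarrow k$. The only difference is presentational: the paper verifies the exchange on the order-$\hbar$ terms and appeals to induction on the order in $\hbar$, whereas you treat the general $(n,k)$ term directly with explicit phase bookkeeping --- a somewhat more complete rendering of the same argument.
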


\begin{proof}
	Linearity and the first two properties are obvious from the definition of $\beta_C$. To prove the relation with the $\star$-product we first calculate 
	\begin{equation*}
	\frac{\delta(\beta_C F)}{\delta (\beta_C \phi(x))}
	= \beta_C\left( \frac{\delta F}{\delta \phi(x)} \right)\;, \qquad 
	\frac{\delta(\beta_C F)}{\delta (\beta_C \phi^*(x))}
	= \beta_C\left( \frac{\delta F}{\delta \phi^*(x)} \right)\;.
	\end{equation*}
	Considering one contribution to equation (\ref{eq:betastar}) at order $\hbar$ (corresponding to $n=1,k=0$ in equation (\ref{eq:complstar})) we find 
	\begin{equation*}
		\beta_C \left(\frac{\delta F}{\delta \phi(x)}\right)
		\Delta^+(x-y)
		\beta_C \left(\frac{\delta G}{\delta \phi(x)}\right)
		= \frac{\delta (\beta_C F)}{\delta \phi^*(x)}
		\Delta^+(x-y)
		\frac{\delta (\beta_C G)}{\delta \phi(x)^*} \; ,
	\end{equation*}
	which is precisely the second contribution to order $\hbar$ ($n=0,k=1$ in equation (\ref{eq:complstar})). Hence at order $\hbar$, $\beta_C$ exchanges contributions to the star product with their complex conjugates, leaving the overall sum unchanged. The result now follows by induction on the order in $\hbar$. 
\end{proof}

As a further axiom for the $T$-product we require
\begin{enumerate}[label=(\roman*)]
	\setcounter{enumi}{11}
	\item \textbf{charge conjugation invariance:}
	\begin{equation*}
		\beta_C \circ T_n = T_n \circ \beta_C^{\otimes n} \; .
	\end{equation*}
\end{enumerate}

\begin{proposition}
	Charge conjugation invariance is a renormalization condition, which may be satisfied while preserving all other axioms and renormalization conditions. 
\end{proposition}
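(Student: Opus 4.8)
The plan is to establish the two ingredients that together say that charge conjugation invariance is a \emph{renormalization} condition: first, that the axiom $\beta_C\circ T_n=T_n\circ\beta_C^{\otimes n}$ holds automatically for the unrenormalized time-ordered products, so that it never clashes with the basic axioms, and second, that whenever the $T$-products are extended to the thin diagonal one can arrange the extension to respect $\beta_C$ without spoiling any of the conditions already imposed. The decisive structural input is the preceding proposition: $\beta_C$ is an involutive ($\beta_C^2=1$) linear automorphism that intertwines with the star product, equation (\ref{eq:betastar}).

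First I would dispose of the unrenormalized products. On $\mathscr{D}(\check{\mathbb{M}}^n)$ the time-ordered product is the Feynman star product $A_1(x_1)\star_F\cdots\star_F A_n(x_n)$ of equation (\ref{eq:unrenT}). The proof of (\ref{eq:betastar}) used only that $\beta_C$ exchanges the functional derivatives $\delta/\delta\phi\leftrightarrow\delta/\delta\phi^*$ while leaving the scalar contraction invariant; since $\Delta_m^F$ enters $\star_F$ in exactly the same symmetric way that $\Delta_m^+$ enters $\star$, the identical induction on the order in $\hbar$ yields $\beta_C(F\star_F G)=(\beta_C F)\star_F(\beta_C G)$. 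Iterating gives $\beta_C\circ T_n=T_n\circ\beta_C^{\otimes n}$ on $\check{\mathbb{M}}^n$, and by the uniqueness of the $T$-products off the thin diagonal this propagates to all of $\mathbb{M}^n\setminus\Delta_n$. Hence the axiom holds away from the diagonal and imposes nothing beyond a restriction of the renormalization freedom.

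For the inductive extension step I would assume $T_1,\dots,T_{n-1}$ satisfy all axioms, all renormalization conditions, and $\beta_C$-invariance, and take any extension $T_n$ to the thin diagonal satisfying the other renormalization conditions, which exists by the earlier existence theorem. I then symmetrize over the two-element group $\{1,\beta_C\}$ by setting
\begin{equation*}
\tilde T_n := \tfrac{1}{2}\bigl(T_n + \beta_C\circ T_n\circ\beta_C^{\otimes n}\bigr).
\end{equation*}
Because $\beta_C^2=1$ the operation $T\mapsto\beta_C\circ T\circ\beta_C^{\otimes n}$ is an involution fixing $\tilde T_n$, which is exactly the statement $\beta_C\circ\tilde T_n=\tilde T_n\circ\beta_C^{\otimes n}$. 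Away from $\Delta_n$ the map $T_n$ is already $\beta_C$-invariant by the previous step, so $\tilde T_n$ agrees with $T_n$ there; the difference $\tilde T_n-T_n$ is therefore supported on the thin diagonal and, by the extension theorem, is a finite renormalization of the form (\ref{eq_fredh}). Thus $\tilde T_n$ is again an admissible extension of the same off-diagonal data.

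What remains, and what I expect to be the main obstacle, is verifying that this averaging preserves every other renormalization condition. Here I would check that $\beta_C$ commutes with each operation entering those conditions: it commutes with functional derivatives (field independence is preserved), with the Poincaré action $\beta_{\Lambda,a}$ and with field parity $\alpha$ (Lorentz covariance and the field-parity relation survive), it sends each monomial to one of the same mass dimension and the same order in the basic fields (the scaling-degree bound and the $\hbar$-dependence are untouched), and it maps the field equation for $\phi$ to that for $\phi^*$ and is compatible with the $*$-operation (the $*$-structure and off-shell field equation survive). The delicate point is that the $*$-structure and field-parity conditions are themselves involutive constraints, so one must ensure the $\beta_C$-averaging does not undo them and that a Lorentz-covariant extension can be fixed beforehand; the clean resolution is that all these symmetry operations mutually commute — $\beta_C$ acting on the field type, $\alpha$ on the sign, permutations on the ordering, and the Poincaré group on spacetime — so the finite-renormalization freedom carries a compatible action of the full symmetry group and admits a simultaneously invariant choice.
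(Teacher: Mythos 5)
Your proof is correct and follows essentially the same route as the paper, which only sketches the argument — verification for the unrenormalized products via the star-homomorphism property (\ref{eq:betastar}) — and defers the rest to the cited reference. Your completion of the inductive step by averaging, $\tilde T_n=\tfrac{1}{2}\bigl(T_n+\beta_C\circ T_n\circ\beta_C^{\otimes n}\bigr)$, justified by $\beta_C^2=1$, by the off-diagonal uniqueness of $T_n$ (which makes the difference a finite renormalization in the sense of (\ref{eq_fredh})), and by the fact that $\beta_C$ commutes with all operations entering the remaining conditions, is exactly the standard mechanism the paper's reference employs.
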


\begin{proof}
	To show that the condition holds true for unrenormalized $T$-products one may use property (\ref{eq:betastar}) of the charge conjugation operator.  For the complete proof see \cite[chap. 5.1.5]{duetsch19}. 
\end{proof}

Now we give our version of Furry's theorem. 

\begin{theorem}
	\label{thm:gfurry}
	Let $A_i$, $B_j, i = 1,\dots,r, j = 1,\dots,s$ be fields in $\mathscr{F}$ with $\beta_C A_i = A_i$ and $\beta_C B_j = -B_j$ for all $i,j$. Then for $r\in\mathbb{N}$
	\begin{equation}
		\label{eq:gfurry}
		t(A_1,\dots,A_r,B_1,\dots,B_s) = 0 \qquad 
		\text{\normalfont{if $s$ is odd.}}
	\end{equation}
\end{theorem}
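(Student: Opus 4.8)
The plan is to exploit the charge conjugation invariance axiom (xii) together with the fact that the vacuum state is itself invariant under $\beta_C$. The underlying mechanism is simple: since each $A_i$ is even and each $B_j$ is odd under $\beta_C$, applying $\beta_C$ to the whole time-ordered product reproduces that product up to an overall sign $(-1)^s$. If the VEV functional is unchanged by $\beta_C$, then for odd $s$ the number $t(A_1,\dots,B_s)$ must equal its own negative, which forces it to vanish. So the proof reduces to two ingredients that I would set up first.

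The one point that is not completely automatic is the $\beta_C$-invariance of the vacuum, i.e. $\omega_0 \circ \beta_C = \omega_0$. I would verify this directly from the definitions: $\omega_0(F)$ extracts the constant coefficient $f_{0,0}$ in the expansion of $F$, and the substitution $\beta_C F(\phi,\phi^*) = F(\eta_C\phi^*,\eta_C^*\phi)$ only rewrites the terms containing at least one basic field, leaving $f_{0,0}$ untouched. Hence $\omega_0(\beta_C F) = f_{0,0} = \omega_0(F)$. The second ingredient, the charge conjugation axiom $\beta_C \circ T_n = T_n \circ \beta_C^{\otimes n}$, is available by hypothesis, and $\beta_C$ is linear.

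With these in place the computation is short. Writing $n = r+s$ and inserting first the vacuum invariance, then the axiom, then the eigenvalue relations $\beta_C A_i = A_i$, $\beta_C B_j = -B_j$ together with multilinearity of $T_n$, I would obtain
\begin{align*}
	t(A_1,\dots,A_r,B_1,\dots,B_s)
	&= \omega_0\big(T_n(A_1\otimes\cdots\otimes B_s)\big) \\
	&= \omega_0\big(\beta_C\, T_n(A_1\otimes\cdots\otimes B_s)\big) \\
	&= \omega_0\big(T_n(\beta_C A_1\otimes\cdots\otimes \beta_C B_s)\big) \\
	&= (-1)^s\,\omega_0\big(T_n(A_1\otimes\cdots\otimes B_s)\big) \\
	&= (-1)^s\, t(A_1,\dots,A_r,B_1,\dots,B_s) \; .
\end{align*}
For odd $s$ this reads $t = -t$, whence $t = 0$. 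I do not expect a genuine obstacle here: the argument is purely formal once $\omega_0\circ\beta_C = \omega_0$ is recorded, and the only care needed is to keep track that the sign $(-1)^s$ collects exactly one factor $-1$ from each of the $s$ odd arguments $B_j$.
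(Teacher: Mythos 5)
Your proposal is correct and follows essentially the same route as the paper's own proof: both establish $\omega_0 \circ \beta_C = \omega_0$ from the fact that the VEV extracts the constant term $f_{0,0}$, then apply the charge conjugation invariance axiom together with linearity to obtain $t = (-1)^s t$, forcing $t = 0$ for odd $s$. There is nothing to add or repair.
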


\begin{proof}
Since the VEV of a $T$-product picks out the term $f_{0,0}$ with no field operators, we have $\omega_0 \circ \beta_C = \omega_0$. Using the charge conjugation invariance axiom and linearity of the $T$-product yields 

\begin{align*}
	&t(A_1,\dots,A_r,B_1,\dots,B_s) \notag\\
	&= \omega_0 \Big( T_n\big(A_1,\dots,A_r,B_1,\dots,B_s\big)
	\Big) \notag\\
	&= \omega_0 \Big(\beta_C T_n\big(A_1,\dots,A_r,B_1,\dots,B_s\big)
	\Big) \notag\\
	&= \omega_0 \Big(T_n\big(\beta_C A_1,\dots, \beta_C A_r,\beta_C B_1,\dots,\beta_C B_s\big)
	\Big) \notag\\
	& = (-1)^s \;t(A_1,\dots,A_r,B_1,\dots,B_s) \; .
\end{align*}
Hence if $s$ is odd, the VEV has to vanish for arbitrary $r$.
\end{proof}

\begin{corollary}
	\label{corr:furry}
	Choosing $B_i = j^{\mu} = i(\phi\,\partial^{\mu}\phi^* - \phi^*\, \partial^{\mu}\phi)$ and $A_j = L = (\phi^* \phi)^2$ for all $i$ and $j$, which both satisfy the charge conjugation conditions, we get Furry's theorem for complex $\phi^4$ theory: 
	\begin{equation*}
	t(j^{\mu_1},...,j^{\mu_s},L,...,L) = 0 \qquad 
	\text{\normalfont{if $s$ is odd.}}
	\end{equation*}
\end{corollary}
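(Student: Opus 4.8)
The plan is to recognize that this is a direct specialization of the general Furry theorem (Theorem~\ref{thm:gfurry}), so the only genuine work is to verify that the two specific monomials carry the required charge conjugation eigenvalues. Concretely, I would show that $L=(\phi^*\phi)^2$ is even under $\beta_C$ and that each component $j^\mu$ is odd, and then invoke Theorem~\ref{thm:gfurry} with the even fields $A_j$ taken to be copies of $L$ and the odd fields $B_i$ taken to be the current components $j^{\mu_i}$.

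First I would compute the action of $\beta_C$ on $L$. Using $\beta_C\phi=\eta_C\phi^*$ and $\beta_C\phi^*=\eta_C^*\phi$, the fact that $\beta_C$ is a homomorphism for the pointwise product, and $|\eta_C|=1$, one finds $\beta_C L=(\eta_C^*\phi\cdot\eta_C\phi^*)^2=|\eta_C|^4(\phi\phi^*)^2=L$, so $L$ is even. Next I would treat the current. The key auxiliary fact is that $\beta_C$ commutes with spacetime derivatives, which holds because $\eta_C$ is a constant and $\beta_C$ acts by substitution on field polynomials; hence $\beta_C(\partial^\mu\phi)=\eta_C\partial^\mu\phi^*$ and $\beta_C(\partial^\mu\phi^*)=\eta_C^*\partial^\mu\phi$. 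Substituting into $j^\mu=i(\phi\,\partial^\mu\phi^*-\phi^*\,\partial^\mu\phi)$ and again using $|\eta_C|=1$ gives $\beta_C j^\mu=i(\phi^*\,\partial^\mu\phi-\phi\,\partial^\mu\phi^*)=-j^\mu$, as required. These two parities are also exactly the ones already asserted in the statement of the corollary.

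With these eigenvalue relations in hand the conclusion is immediate: setting $A_j=L$ for $j=1,\dots,r$ and $B_i=j^{\mu_i}$ for $i=1,\dots,s$ in Theorem~\ref{thm:gfurry}, the hypotheses $\beta_C A_j=A_j$ and $\beta_C B_i=-B_i$ are met, and the theorem yields $t(j^{\mu_1},\dots,j^{\mu_s},L,\dots,L)=0$ whenever $s$ is odd. There is no substantive obstacle here: the real content lives in the already-proven general theorem, whose proof in turn rests on the charge conjugation invariance axiom together with $\omega_0\circ\beta_C=\omega_0$, and the computation of the $\beta_C$-parities is routine. The only point deserving a moment's care is the commutativity of $\beta_C$ with $\partial^\mu$ used for the current, but this follows directly from the substitutional definition of $\beta_C$ on $\mathscr{P}$ and the constancy of $\eta_C$.
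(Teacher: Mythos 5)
Your proposal is correct and follows exactly the paper's route: the corollary is the direct specialization of Theorem \ref{thm:gfurry}, and the only content is the verification of the parities $\beta_C L = L$ and $\beta_C j^{\mu} = -j^{\mu}$ (using $|\eta_C|=1$ and that $\beta_C$ commutes with $\partial^{\mu}$), which the paper simply asserts and you carry out explicitly. Your parity computations are accurate, so the argument is complete.
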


\subsection{A version of the Poincaré lemma}

This proposition states a version of the Poincaré lemma for local field polynomials. It will be used to restrict the possible terms contributing to the anomaly in the next chapter. 

\begin{proposition}
	Let 
	\begin{equation*}
		f(y,x_1,\dots,x_n) = 
		\sum_{a\in (\mathbb{N}^d)^n} \partial^a 
		\delta(x_1-y,\dots,x_n-y)\, P_a(y) \qquad \text{\normalfont{with}}\quad P_a \in \mathscr{P}
	\end{equation*}
	and 
	\begin{equation*}
		\int dy \, f(y,x_1,\dots,x_n)= 0\; .
	\end{equation*}
	Then there exist polynomials $U_a^{\mu} \in \mathscr{P}$ such that 
	\begin{equation}
		\label{eq:poinc}
		f(y,x_1,\dots,x_n) = \partial_{\mu}^y \Bigg(
		\sum_{a\in (\mathbb{N}^d)^n} \partial^a 
		\delta(x_1-y,\dots,x_n-y)\, U_a^{\mu}(y) \Bigg)\; .
	\end{equation}
\end{proposition}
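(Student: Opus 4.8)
The plan is to remove the delta functions, reduce the claim to an algebraic problem for field polynomials, and solve it using the single integral constraint together with the algebraic Poincaré lemma for $\mathscr{P}$. First I would pass to relative coordinates $\xi_i:=x_i-y$, so that $\delta(x_1-y,\dots,x_n-y)$ depends only on $\xi$ and $\partial^{x_i}=\partial_{\xi_i}$ at fixed $y$. The decisive identity is the translation invariance
\begin{equation*}
\partial^y_\mu\,\delta(x_1-y,\dots,x_n-y)=-\sum_{i=1}^{n}\partial^{x_i}_\mu\,\delta(x_1-y,\dots,x_n-y),
\end{equation*}
which is exactly what lets a $y$-derivative be traded for $x$-derivatives. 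Expanding the right-hand side of (\ref{eq:poinc}) by the Leibniz rule and this identity, and using that the distributions $\partial^a\delta(x_1-y,\dots,x_n-y)$ for distinct $a\in(\mathbb{N}^d)^n$ are linearly independent (even with $y$-dependent, $\mathscr{P}$-valued coefficients, as one checks by testing against product test functions), reduces (\ref{eq:poinc}) to an algebraic system that expresses each $P_a$ through $\partial_\mu U^\mu_a$ and the $U^\mu_{a'}$ of one lower derivative order.

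Necessity of the hypothesis is immediate, since the right-hand side of (\ref{eq:poinc}) is a total $y$-divergence of a field polynomial. To treat sufficiency cleanly I would Fourier transform in the relative variables $\xi$, so that the delta-derivatives become monomials in the conjugate momenta $\kappa_i$, the family $\{P_a\}$ is packaged into a single $\mathscr{P}$-valued polynomial $\mathcal{P}(\kappa,y)=\sum_a\big(\prod_i(i\kappa_i)^{a_i}\big)P_a(y)$, and $\partial^y_\mu$ becomes $\partial^y_\mu-i(\sum_i\kappa_i)_\mu$. In these terms (\ref{eq:poinc}) is the requirement
\begin{equation*}
\mathcal{P}=(\partial^y_\mu-iK_\mu)\,\mathcal{U}^\mu,\qquad K:=\sum_{i}\kappa_i,
\end{equation*}
while the hypothesis becomes $\int dy\, e^{-iK\cdot y}\mathcal{P}(\kappa,y)=0$ for all $\kappa$. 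A short integration by parts shows these two statements are compatible (the hypothesis is precisely what the right-hand side forces), so the real content is sufficiency: that $\int dy\, e^{-iK\cdot y}\mathcal{P}=0$ places $\mathcal{P}$ in the image of $\partial^y_\mu-iK_\mu$. This I would reduce to the classical algebraic Poincaré lemma characterizing the image of the total spacetime divergence on $\mathscr{P}$.

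With the obstruction thus identified, I would solve the system by induction on the derivative order $|a|$: at each order the equation takes the form $\partial_\mu U^\mu_a=(\text{a field polynomial already fixed at lower orders})$, and the integral hypothesis is exactly what guarantees, via the algebraic Poincaré lemma, that this right-hand side is a total divergence, so that an admissible $U^\mu_a\in\mathscr{P}$ exists; the construction keeps the family $\{U^\mu_a\}$ finite. Transforming back to position space then yields (\ref{eq:poinc}). The main obstacle I anticipate is disentangling the two structures carried by $f$ — the distributional momentum structure in the delta-derivatives and the field-polynomial structure in the coefficients $P_a$ — and pinning down the exact point at which $\int dy\,f=0$ removes the cohomological obstruction at every order while keeping all $U^\mu_a$ genuine elements of $\mathscr{P}$.
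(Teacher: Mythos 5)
Your setup is sound: passing to relative coordinates, using $\partial^y_\mu\delta=-\sum_i\partial^{x_i}_\mu\delta$, and matching coefficients of the linearly independent $\partial^a\delta$ correctly reduces the claim to the algebraic system
\begin{equation*}
P_a \;=\; \partial_\mu U^\mu_a \;-\; \sum_{i=1}^{n} U^\mu_{a-e^{(i)}_\mu}\,,
\end{equation*}
and your Fourier packaging $\mathcal P=(\partial^y_\mu-iK_\mu)\,\mathcal U^\mu$ is the right picture (modulo a fixable rigor point: $\int dy\,e^{-iK\cdot y}\mathcal P(\kappa,y)$ is not defined for arbitrary configurations, so one must keep the $x$-smearing, or equivalently extract from $\int dy\,f=0$ the finite family of identities in $\mathscr{P}$ it encodes). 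The genuine gap is in your engine, the ascending induction on $|a|$. The equation at order $a$ reads $\partial_\mu U^\mu_a=P_a+\sum_i U^\mu_{a-e^{(i)}_\mu}$, so its right-hand side depends on the lower-order $U$'s, which are only fixed up to divergence-free vectors; your claim that ``the integral hypothesis is exactly what guarantees, via the algebraic Poincaré lemma, that this right-hand side is a total divergence'' is false as stated. Concretely, for $n=1$ take $P_0=\partial_\mu V^\mu$, $P_{e_\nu}=-V_\nu$ (which satisfies $\sum_a\partial^aP_a=0$); if at order zero you choose $U^\mu_0=V^\mu+c^\mu$ with a nonzero constant $c^\mu$ (perfectly divergence-free), the next equation demands $\partial_\mu U^\mu_{e_\nu}=c^\nu$, which has no solution in $\mathscr{P}$ since a nonzero constant is not a total divergence of polynomials without explicit $x$-dependence. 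So solvability at each order is contingent on the choices made below, the single integrated hypothesis provides no order-by-order guarantee, and you give no mechanism for propagating compatible choices. Moreover you never verify the actual hypothesis of the algebraic Poincaré lemma (vanishing Euler--Lagrange derivatives of each right-hand side), and for $n\geq 2$ your scheme ignores that the equations mix the slots $i$ through $\sum_i U^\mu_{a-e^{(i)}_\mu}$.

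The correct mechanism is different and more elementary, and it makes the algebraic Poincaré lemma unnecessary. Since the twisted divergence $\partial_\mu-iK_\mu$ involves only the total momentum $K=\sum_i k_i$, the relative momenta are spectators: decomposing $\mathcal P=\sum_c q^c\,\mathcal P_c(K;y)$ over relative-momentum monomials, both the equation and the hypothesis decouple sector by sector, the hypothesis becoming one identity per sector, namely that the substitution $iK\mapsto\partial$ (normal-ordered, i.e.\ $\sum_b(iK)^bP_b\mapsto\sum_b\partial^bP_b$) annihilates $\mathcal P_c$. Then one runs a \emph{descending} induction on the $K$-degree: if the top homogeneous part $\mathcal P_{c,M}$ has degree $M\geq 1$, Euler's identity $\mathcal P_{c,M}=\tfrac{1}{M}K_\mu\partial_K^\mu\mathcal P_{c,M}$ lets you divide by $K_\mu$, so subtracting $(\partial_\mu-iK_\mu)\bigl(\tfrac{i}{M}\partial^\mu_K\mathcal P_{c,M}\bigr)$ lowers the degree while preserving the substitution identity (the substitution map annihilates every twisted divergence); the hypothesis is consumed exactly once per sector, at degree zero, where it forces the remainder to vanish. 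In particular the $U^\mu_a$ are built by differentiation in $K$ (index shifts), never by solving $\partial_\mu U^\mu=\text{given}$, and finiteness of the family is automatic. Note also that the paper itself gives no argument here but defers to \cite[Lemma 4.5.1]{duetsch19}, so your proposal had to stand on its own; as written, the cohomological bookkeeping you yourself flag as the main obstacle is precisely what is missing.
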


\begin{proof}
	See \cite[Lemma 4.5.1]{duetsch19}. 
\end{proof}

	\section{The proof of the MWI}
	\label{sec:proofMWI}
	
In this chapter we will prove the MWI for the complex scalar field for a certain class of arguments $P_1,...,P_n$, namely the elements of the following space:  
\begin{definition}
	Let $\mathscr{P}_{(\phi^*\phi)^2}\subset \mathscr{P}^{(1)} \subset \mathscr{P}$ be the vector space of polynomials spanned by the basis elements
	\begin{equation*}
		\mathscr{B}_{(\phi^*\phi)^2} := \big\{(\phi)^{m}(\phi^*)^{n},
		\partial^{\mu}\phi, \partial^{\nu}\phi^*, j^{\eta}
		\big\} \; , \qquad 0 \leq m, n \leq 2 \; ,
	\end{equation*} 
	that is the current, the quartic complex interaction $L$ and all their submonomials.  
\end{definition}

The purpose of this choice is that by showing the MWI for these monomials, we can express all the $T$-products of the form $T(L,\dots,L,j,\dots,j)$ via their causal Wick expansion, in which $t$-products with submonomials of $L$ and $j$ as arguments will appear. We will adapt the proof given by Dütsch and Fredenhagen in \cite{duetschfred99} for the MWI of QED. The basic idea of the proof and the essential steps carry over to the scalar case. Major modifications are mainly due to the fact that the scalar current in (\ref{eq:compcurrent}) contains derivatives of the basic fields, which the QED current $j^{\mu} = \overline{\psi} \wedge \gamma^{\mu} \psi$ doesn't. This leads to an additional term in the complex scalar MWI -- the total divergence in the third line of equation (\ref{eq:compMWI}) -- which has to be taken into account throughout the proof. The main theorem of this thesis thus is the following.

\begin{theorem}
	\label{thm:bachelor}
	For all $n\in\mathbb{N}$, the $T_{n}$ can be renormalized in a way that the complex scalar MWI in equation \normalfont{(\ref{eq:compMWI})}\textit{ holds true for all $P_1,\dots,P_n \in \mathscr{P}_{(\phi^*\phi)^2}$.}
\end{theorem}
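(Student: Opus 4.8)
The plan is to proceed by induction on the order $n$, showing that at each step the obstruction measured by the anomaly map $\Delta^n$ of Theorem \ref{thm:aMWI} can be made to vanish on $\mathscr{P}_{(\phi^*\phi)^2}$ by an admissible finite renormalization. The base case $n=0$ is immediate, since $\Delta^0=0$. For the inductive step I would assume that $T_1,\dots,T_n$ have already been renormalized so that the anomaly vanishes at all lower orders, and then analyse the general form of $\Delta^n\big(\otimes_{j=1}^n P_j(x_j);Q(y)\big)$ forced by the structural properties listed in Theorem \ref{thm:aMWI}.

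First I would use Proposition \ref{prop:argums} together with the off-shell property (\ref{eq:anomoffs}) to discard all arguments that are a single basic field or the unit: under the inductive hypothesis any such argument makes the anomaly vanish. This reduces the problem to the finitely many cases in which every $P_j$ is one of the genuinely composite basis elements $\phi^2,\ (\phi^*)^2,\ \phi^*\phi,\ \phi^2\phi^*,\ \phi(\phi^*)^2,\ L,\ j^\mu$. For each such configuration I would write the anomaly in its local form (\ref{eq:locanom}) and then cut down the admissible polynomials $P_a^n$ by applying the available constraints simultaneously: the scaling-degree bound (\ref{eq:a_restr}) makes the sum over $a$ finite and bounds the mass dimension, $U(1)$ charge conservation (inherited from the phase invariance of the free two-point function) eliminates every monomial whose charge does not match the total charge of the arguments, Lorentz covariance (\ref{eq:anom_lorz}) fixes the index structure, the $*$-relation (\ref{eq:anomstar}) together with the $\hbar$-counting controls reality and order in $\hbar$, and Furry's theorem in the form of Corollary \ref{corr:furry} kills the contributions with an odd number of current insertions. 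The expectation is that this leaves only a short explicit list of candidate local terms in each case.

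The decisive step is then to show that whatever survives is a total $y$-divergence that can be absorbed into the divergence term $\partial_\mu^y T_{n+1}(\cdots\otimes j^\mu(y))$ of (\ref{eq:compMWI}). To this end I would integrate the anomalous MWI over $y$: the two total-divergence terms drop out and one is left with the integrated charge Ward identity, whose validity follows independently from Proposition \ref{prop:Q_0}, where the charge operator $\theta$ is realised as the commutator (\ref{eq:chargecomm}) with $Q_0$. Hence $\int dy\,\Delta^n\big(\otimes_j P_j(x_j);Q(y)\big)=0$, and the Poincaré lemma of Section \ref{sec:prelims} then guarantees that $\Delta^n$ is of the form $\partial_\mu^y\big(\sum_a\partial^a\delta\,U_a^\mu(y)\big)$. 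Such a total divergence is exactly what a finite renormalization (Definition \ref{def:finren}) of $T_{n+1}(\cdots\otimes j^\mu(y))$ contributes to the MWI, so I would fix the renormalization constants to cancel it at this order.

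The main obstacle I anticipate is not the existence of the cancelling finite renormalization but its admissibility: one must verify that the chosen counterterm simultaneously respects the scaling-degree condition, Lorentz covariance, the $*$-structure and charge-conjugation invariance, so that it spoils neither a previously imposed renormalization condition nor the anomaly-freeness established at lower orders. The genuinely new difficulty compared with the QED proof of \cite{duetschfred99} is the extra total-divergence contact term in the third line of (\ref{eq:compMWI}), produced by $\theta_\mu$ because the scalar current (\ref{eq:compcurrent}) carries derivatives of the basic fields; keeping track of how this term interacts with the $\theta$ contact terms and with the divergence structure throughout the case analysis is where the real work lies.
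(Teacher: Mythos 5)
Your skeleton matches the paper's proof — induction on $n$ with $\Delta^0=0$, reduction of arguments via Proposition \ref{prop:argums}, vanishing of the integrated anomaly, the Poincaré lemma to write $d=\partial_\mu^y u^\mu$, and absorption of $u^\mu$ into $t(P_1,\dots,P_n,j^\mu)$ as a finite renormalization — but there is one genuine gap, and it is precisely where the bulk of the paper's work lies. You list scaling degree, Lorentz covariance, the $*$-structure and charge-conjugation invariance as the admissibility obstacles; in fact those are essentially automatic (Lemma \ref{lemm:poinc} and Proposition \ref{lemm:admren} dispose of them), and the one condition that is \emph{not} automatic is the symmetry axiom (iii) of Definition \ref{def:Tn}. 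The renormalization (\ref{eq:renm}) shifts only the slot carrying $j^\mu(y)$, so whenever some of the $P_l$ are themselves currents $j^{\nu_l}(x_l)$, the shifted $t$ must remain symmetric under $(y,\mu)\leftrightarrow(x_l,\nu_l)$, and the $u^\mu$ delivered by the Poincaré lemma carries no such guarantee — the only freedom is adding $\hat u^\mu$ with $\partial_\mu^y\hat u^\mu=0$. Closing this is the content of Proposition \ref{prop:last}: Furry's theorem and charge number conservation cut the possibilities down to three cases (Table \ref{tab:0}); in cases II and III the bound $\omega\le 1$ forces $|a|=0$ and symmetry of the constant invariant tensor settles it; but in case I ($L,\dots,L,j^\nu$, where $\omega=3$ admits two derivatives) one needs an explicit basis analysis of the rank-two derivative structures, an extra constraint obtained by applying $\partial^2_\nu$ to the anomaly equation, and the observation that the $\partial_\mu^y$-exact tensor $\partial_y^\mu\partial_y^\nu-g^{\mu\nu}\dalemb_y$ can be added to symmetrize the one surviving asymmetric structure. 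Without some version of this analysis your inductive step does not close: you have shown the anomaly is a divergence, but not that it is a divergence of an object with the symmetries the renormalization must respect.

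A secondary inaccuracy: when you integrate the anomalous MWI over $y$, the two total-divergence terms do not both simply ``drop out.'' The $\theta_\mu$ contact divergence does vanish (after inserting a test function $g\equiv1$ near the $x_l$ and integrating by parts), but $\int dy\,\partial_\mu^y T_{n+1}(\cdots\otimes j^\mu(y))$ is nontrivial: the paper must use causal factorization to convert $j^\mu(\partial_\mu g)$ into a commutator, deform the smearing to a time slice, invoke Proposition \ref{prop:Q_0} to identify it with $-\hbar\,(\theta\,T(\cdots))_0$, and the resulting cancellation against the $\theta$ contact terms requires charge number conservation (\ref{eq:chconsv}) for the \emph{renormalized} products — a condition that must itself be installed inductively as a renormalization condition (Proposition \ref{prop:Tchc} via Lemma \ref{lemm:pups}), not merely inherited from the phase invariance of the free two-point function, which covers only the unrenormalized $t$-products. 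Note also that the integrated identity is obtained on shell, which suffices because one then passes to the numerical VEV $d$.
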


To prove this theorem we need to show that the anomaly map $\Delta$ vanishes to all orders for arbitrary arguments in $\mathscr{P}_{(\phi^*\phi)^2}$. Since the $T$-products and $\Delta$ are both linear, it is sufficient to show this for all basis elements in $\mathscr{B}_{(\phi^*\phi)^2}$. Our proof will proceed by induction on $n$ as follows. In section \ref{sec:1base} we provide the basis of the induction by showing that the anomalous term to $0$-th order is $\Delta^0=0$, and we give an expression for the $n$-th order term. Section \ref{sec:2charge} shows that the $T_n$ can be renormalized such that they satisfy charge number conservation. In section \ref{sec:3integr} it is shown that the integral w.r.t. the $y$ entry of the anomalous term vanishes, from what certain properties of this term can be deduced. Section \ref{sec:4struc} translates the results of the anomalous MWI (theorem \ref{thm:aMWI}) into properties of the VEVs of the anomalous terms. In the next section \ref{sec:5renom} it is shown that almost all anomalies can be removed by finite renormalizations that are compatible with all axioms and renormalization conditions except for a few that exhibit certain symmetries. The last section \ref{sec:6case} works out the specific anomalies for these cases and shows that admissible renormalizations removing the anomalies exist.

\subsection{Base case of induction}
\label{sec:1base}

This section proves the following proposition, which provides the base case of the induction and states what is to be shown in the inductive step. 

\begin{proposition}
	We have $\Delta^0=0$. Furthermore, assume that the quantum-MWI is satisfied to orders $k<n$ in the fields $P_j$. Then the anomalous term to order $n$ is
	\begin{align}
		\label{eq:start}
		\frac{\hbar^n}{i^n}
		\Delta^n&\left(P_1(x_1) \otimes \dots \otimes  P_n(x_n);Q(y)\right) 
		= 
		- \partial_{\mu}^y \,T_{n+1}\big( P_1(x_1)\otimes\cdots\otimes P_n(x_n)\otimes j^{\mu}(y) \big) \notag\\
		&+\hbar\sum_{l=1}^{n}\delta(y-x_l)\, 
		T_{n}\big( P_1(x_1)\otimes\cdots\otimes(\theta P_l)(x_l)\otimes\dots\otimes P_n(x_n) \big)
		\notag\\
		& -\hbar\,\partial_y^{\mu} \Big(
		\sum_{l=1}^{n}\delta(y-x_l)\,
		T_{n}\big( P_1(x_1)\otimes\cdots\otimes(\theta_{\mu} P_l)(x_l)\otimes\dots\otimes P_n(x_n) \big)\Big) 
		\notag\\
		&+i\, T_{n+1}\big( P_1(x_1)\otimes\cdots\otimes P_n(x_n)\otimes \phi(y) \big) \cdot(\dalemb\,+\,m^2)\phi^*(y) 
		\notag\\
		&-i\, T_{n+1}\big( P_1(x_1)\otimes\cdots\otimes P_n(x_n)\otimes \phi^*(y) \big) \cdot(\dalemb\,+\,m^2)\phi(y) \; .
	\end{align}
\end{proposition}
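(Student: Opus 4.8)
The first assertion, $\Delta^0 = 0$, requires no work: it is exactly property (i) of the anomalous MWI recorded in Theorem \ref{thm:aMWI}, so I would simply cite it. The substance of the proposition is the inductive formula (\ref{eq:start}), and the organizing observation is that its right-hand side is precisely the defect by which the complex scalar MWI (\ref{eq:compMWI}) is allowed to fail at order $n$. Comparing (\ref{eq:start}) term by term with (\ref{eq:compMWI}), one sees that (\ref{eq:start}) asserts nothing more than that $\frac{\hbar^n}{i^n}\Delta^n$ equals the right-hand side of (\ref{eq:compMWI}) minus its left-hand side. So the whole task is to show that, under the inductive hypothesis, the order-$n$ part of the anomaly map measures exactly this defect.

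The plan is to re-run the derivation of Proposition \ref{prop:compMWI}, this time starting from the anomalous MWI (\ref{eq:anoMWI}) rather than its anomaly-free counterpart. By linearity of the $T$-product in its last slot, the two identities differ only by the additional summand $T(e^{iS/\hbar}_{\otimes} \otimes \Delta(e^S_{\otimes}; hQ))$. Every step used to pass from the generating identity to the $n$-th order off-diagonal form and then to specialize it to the complex field --- the Noether relation (\ref{eq:noeth}), the Action Ward Identity (\ref{eq:AWI}), and Lemma \ref{lemma:deltaQ} for $\delta_{Q(y)}$ --- acts only on the $A$- and $\delta_{hQ}S$-summands and therefore reproduces the five lines of (\ref{eq:compMWI}) verbatim. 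Hence the anomalous complex scalar MWI at order $n$ is just (\ref{eq:compMWI}) carrying one extra contribution, the one coming from the new summand.

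It then remains to evaluate that contribution. Writing $T(e^{iS/\hbar}_{\otimes} \otimes \Delta(e^S_{\otimes}; hQ)) = \sum_{m,k} \frac{(i/\hbar)^m}{m!\,k!}\, T_{m+1}(S^{\otimes m} \otimes \Delta^k(S^{\otimes k}; hQ))$ and isolating the pieces of total order $n$ in the coupling forces $m+k=n$. The inductive hypothesis, namely that the MWI holds and hence $\Delta^k=0$ for every $k<n$, together with $\Delta^0=0$, annihilates all of these except the single term $m=0$, $k=n$; there the initial condition $T_1=\mathrm{id}$ collapses $T_1(\Delta^n(\cdots))$ to $\Delta^n(\cdots)$. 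Passing to distinct off-diagonal arguments $F_j=\int dx\,g_j(x)P_j(x)$ (the symmetry of $\Delta^n$ in its first $n$ entries cancelling the $1/n!$) and applying the same overall normalization by $\hbar^n/i^n$ that already converts the generating identity into (\ref{eq:nthMWI}), this surviving term becomes $\frac{\hbar^n}{i^n}\Delta^n(P_1(x_1)\otimes\cdots\otimes P_n(x_n); Q(y))$. Moving it across (\ref{eq:compMWI}) produces (\ref{eq:start}).

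The one genuinely delicate point is the bookkeeping in this final step: keeping track of the powers of $i/\hbar$ from the $S$-matrix expansion, the factorials from the exponential series, and the combinatorial weights generated by the polarization to distinct $P_j$, so as to confirm both that the lower-order anomaly maps $\Delta^k$ with $k<n$ really drop out once the inductive hypothesis is imposed and that the one surviving term emerges with exactly the coefficient $\hbar^n/i^n$. All the remaining content is a mechanical repetition of Proposition \ref{prop:compMWI} with a single inert summand carried along for the ride.
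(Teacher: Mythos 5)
Your proposal is correct and takes essentially the same route as the paper's own proof: both start from the anomalous MWI (\ref{eq:anoMWI}) adapted to the two basic fields, extract the order-$n$ off-diagonal coefficient in the coupling, use the inductive hypothesis $\Delta^k=0$ for $k<n$ to annihilate the lower-order anomaly contributions (with $T_1=\mathrm{id}$ isolating the surviving $\Delta^n$ term), and then repeat the manipulations of Proposition \ref{prop:compMWI} before multiplying through by $\hbar^n/i^n$. The only cosmetic differences are that the paper writes the lower-order contributions as an explicit subset sum and re-derives $\Delta^0=0$ from the cancellation of the first two terms at $n=0$, whereas you cite Theorem \ref{thm:aMWI}(i) directly.
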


\begin{proof}
	We start with equation (\ref{eq:anoMWI}), the anomalous MWI for the $T$-products, but for the case of two fields $\phi$ and $\phi^*$. This implies modifying $A$ and $\delta_{hQ}$ as in the proof of proposition \ref{prop:compMWI} by summing over both fields. Since $Q_1 = Q_2^*=Q$, we use only $Q$ as argument of the anomaly map $\Delta(\dots;Q)$. Now we write out (\ref{eq:anoMWI}) to a fixed order $n$ in the coupling constant $\kappa$ for non-diagonal entries $S_i=\kappa L_i(g_i)$ omitting all test functions
	\begin{align*}
		\Delta^n&\left(\otimes_{j=1}^n L_j(x_j);Q(y)\right)
		= \left(\frac{i}{\hbar}\right)^n T \Big(\bigotimes_{j=1}^n L_j(x_j)\otimes Q_i(y)\Big)
		\cdot \frac{\delta S_0}{\delta \phi_i(y)} \notag\\ 
		&- \left(\frac{i}{\hbar}\right)^n 
		T \Big(\bigotimes_{j=1}^n L_j(x_j)\otimes Q_i(y)\cdot
		\frac{\delta S_0}{\delta \phi_i(y)}\Big) \notag\\
		&- \sum_{l=1}^{n}
		\left(\frac{i}{\hbar}\right)^{n-1} 
		T \Big(\bigotimes_{j=1, j\neq l}^n L_j(x_j)\otimes \delta_{Q_i(y)}L_l(x_l)\Big) \notag\\
		&- \sum_{I \subset \{1,\dots,n\}, I^c \neq \emptyset}
		\left(\frac{i}{\hbar}\right)^{|I^c|} 
		T\Big(\bigotimes_{k\in I^c} L_k(x_k)\otimes
		\Delta^{|I|}\big(\otimes_{j\in I} L_j(x_j);Q(y)\big)\Big) \; ,
	\end{align*}
	where the sum over $Q_i$ and $\phi_i$ with $i \in \{1,2\}$ is implied. If $n=0$, only the first two terms on the r.h.s contribute and they cancel each other, hence $\Delta^0 = 0$. Due to the inductive assumption, we have $\Delta^k =0$ for $k<n$, so the term in the last line vanishes. Now we assume the $L_j$ to be polynomials $P_j \in\mathscr{P}_{(\phi^*\phi)^2}$ and proceed exactly as in the proof of proposition \ref{prop:compMWI}, by putting in the field equations, the expression for $\delta_Q$, the current $j^{\mu}$ and using the AWI. After finally multiplying with $\hbar^n / i^n$ we arrive at (\ref{eq:start}). 
\end{proof}

\subsection{Charge number conservation}
\label{sec:2charge}

In this section we show charge number conservation for the $T$-products.
\begin{proposition}
	\label{prop:Tchc}
	In the inductive construction, the $n$-th order $T$-product with arguments in $\mathscr{P}^{(1)}$ that are eigenvectors of the charge number operator $\theta$ may be renormalized such that it satisfies charge number conservation 
	\begin{equation}
	\label{eq:chconsv}
	\theta\, T\big(P_1(x_1) \otimes ... \otimes  P_n(x_n)\big) 
	= T\big(P_1(x_1) \otimes ... \otimes  P_n(x_n)\big) \cdot
	\sum_{j=1}^{n} (a_j-b_j) \; .
	\end{equation}
\end{proposition}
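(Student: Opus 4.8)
The plan is to reduce the claim to a statement about the numerical distributions $t_n$ and then to dispose of the renormalization ambiguity separately in each sector of the charge grading. Since $T_n$ and $\theta$ are linear and the basis monomials of $\mathscr{P}^{(1)}$ are eigenvectors of $\theta$, with $\theta P_j = (a_j-b_j)P_j$, it suffices to take the $P_1,\dots,P_n$ to be such basis elements. For a monomial $A$ with $a$ factors of $\phi$-type and $b$ factors of $\phi^*$-type write $q(A):=a-b$ for its net charge. First I would apply the causal Wick expansion (Theorem \ref{thm:cWick}): there $\theta$ acts only on the external factors $\overline{P}_1\cdots\overline{P}_n$, so that
\begin{equation*}
	\theta\,T_n\big(P_1(x_1)\otimes\cdots\otimes P_n(x_n)\big)
	= \sum_{\underline{P}_l\subseteq P_l}\Big(\sum_{l=1}^{n} q(\overline{P}_l)\Big)\,
	\omega_0\big(T_n(\underline{P}_1(x_1),\dots,\underline{P}_n(x_n))\big)\,
	\overline{P}_1(x_1)\cdots\overline{P}_n(x_n)\;.
\end{equation*}
Because $q(P_l)=q(\underline{P}_l)+q(\overline{P}_l)=a_l-b_l$, the identity (\ref{eq:chconsv}) is equivalent to the vanishing of every vacuum expectation value $t_n(\underline{P}_1,\dots,\underline{P}_n)$ for which $\sum_{l} q(\underline{P}_l)\neq 0$.

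Next I would verify this vanishing for the unrenormalized time-ordered products, that is on $\mathbb{M}^n\backslash\Delta_n$, where $t_n$ is computed from the $\star_F$-product and hence is a sum of products of Feynman propagators $\Delta_m^F$ completely contracting the fields of $\underline{P}_1,\dots,\underline{P}_n$. By the basic commutators (\ref{eq:comms}) each contraction pairs one $\phi$-type with one $\phi^*$-type factor, so a complete contraction can exist only when the numbers of $\phi$-type and $\phi^*$-type factors agree, i.e.\ when $\sum_l q(\underline{P}_l)=0$; hence every charge-violating VEV vanishes off the thin diagonal. Equivalently one may argue directly that $\theta$ obeys the Leibniz rule for $\star_F$ (by the same charge-counting as in Remark \ref{rem:thetaleibn}, which is insensitive to the choice of contraction kernel), whence $\theta(P_1\star_F\cdots\star_F P_n)=\big(\sum_l(a_l-b_l)\big)\,P_1\star_F\cdots\star_F P_n$; Proposition \ref{prop:Q_0} provides the on-shell counterpart of this statement as a commutator with $Q_0$.

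It then remains to choose the extensions of the $t_n$ to $\Delta_n$ so as to preserve charge conservation. For a charge-violating VEV the unextended distribution $t_n^0$ is identically zero on $\mathbb{M}^n\backslash\Delta_n$, so the zero distribution is an admissible extension: it has scaling degree $-\infty$, hence trivially respects the scaling-degree bound, and it is the unique extension when $\mathrm{sd}(t_n^0)<k$ and a legitimate choice among the solutions when $\mathrm{sd}(t_n^0)\geq k$. Setting all charge-violating extensions to zero yields (\ref{eq:chconsv}). The point I expect to be the main obstacle is compatibility: one must check that this choice clashes neither with the other renormalization conditions (field independence, Poincaré covariance, the $*$-structure, the off-shell field equation, charge conjugation and the $\hbar$-dependence) nor with the inductive hypothesis. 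This should follow because each of these operations respects the charge grading of $\mathscr{F}$ — the star product, the functional derivatives and $\beta_{\Lambda,a}$ preserve $q$, while $^*$ and $\beta_C$ merely reverse its sign — so the renormalization freedom splits into charge sectors that can be fixed independently, the nonzero-charge sectors being set to zero and the residual freedom in the charge-zero sector being used to satisfy the remaining conditions.
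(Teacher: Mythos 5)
Your proposal follows the same strategy as the paper's proof, only with the two halves swapped: the paper first proves a lemma (Lemma \ref{lemm:pups}) that charge-violating VEVs can be renormalized to zero and then lifts this to the full $T$-product by the causal Wick expansion, whereas you Wick-expand first and reduce the claim to the vanishing of charge-violating VEVs; the zero-extension step and the term-by-term bookkeeping $\underline{a}_j+\overline{a}_j=a_j$, $\underline{b}_j+\overline{b}_j=b_j$ are identical in substance. Your compatibility discussion (the charge grading is respected by the star product, functional derivatives and $\beta_{\Lambda,a}$, and reversed by $^*$ and $\beta_C$, so the sectors can be fixed independently) is actually more explicit than the paper's one-line assertion that the zero extension ``is compatible with all other normalization conditions,'' and your observation that $\omega_0\circ\theta=0$ turns the unrenormalized derivation property into the VEV statement is exactly the paper's computation.

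There is, however, one step where your justification as written would fail: you verify the vanishing ``on $\mathbb{M}^n\backslash\Delta_n$, where $t_n$ is computed from the $\star_F$-product.'' The Feynman-product formula (\ref{eq:unrenT}) is valid only on $\check{\mathbb{M}}^n$, where \emph{all} points are pairwise distinct; on the partial diagonals contained in $\mathbb{M}^n\backslash\Delta_n$ the product $T_n$ is instead determined by causal factorization (\ref{eq:causality}) into \emph{renormalized} lower-order products, whose VEVs are not mere products of Feynman propagators — they contain the extension terms chosen at earlier induction steps. Since a distribution on $\mathbb{M}^n\backslash\Delta_n$ is not determined by its restriction to $\check{\mathbb{M}}^n$ (it may have contributions supported on partial diagonals), your contraction-counting argument does not by itself give the vanishing on all of $\mathbb{M}^n\backslash\Delta_n$. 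This is precisely where the inductive hypothesis does real work in the paper's proof, and why Lemma \ref{lemm:pups} carries it as an assumption: off the thin diagonal one factorizes causally into lower-order $T$-products, which conserve charge by induction, and joins them with $\star$, for which $\theta$ is a derivation (Remark \ref{rem:thetaleibn}). You have both ingredients in hand — you cite the Leibniz rule and you mention the inductive hypothesis, but only in the later compatibility check — so the repair is to assemble them at this step rather than relying on the explicit $\star_F$ formula; with that correction the argument closes and coincides with the paper's.
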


\begin{remark}
	The restriction to $\mathscr{P}^{(1)}$ is not necessary to satisfy charge number conservation, but due to our definition of the charge number operator in (\ref{eq:theta}) we show this particular case. The proof for arbitrary $P_i$ follows exactly the same path. 
\end{remark} 

\begin{proof}
	We start by proving the following statement: 
	\begin{lemma}
		\label{lemm:pups}
		Let $P_1,\dots,P_n \in \mathscr{P}^{(1)}$ \normalfont{(}\textit{see definition} \normalfont{\ref{def:P1}}) \textit{be eigenvectors of $\theta$ and assume that the $T$ products to order $n-1$ satisfy the property} \normalfont{(\ref{eq:chconsv}).}\textit{ We then may renormalize the $n$-th order $t$-product in a way that the following implication holds true: If $t\big(P_1,\dots,P_n\big) \neq 0$, then}
		\begin{equation*}
			\sum_{j=1}^{n} (a_j-b_j) = 0 \; .
		\end{equation*}
	\end{lemma}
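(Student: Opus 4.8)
The plan is to prove the contrapositive at the level of the unrenormalized VEV and then push it through the extension step: I would show that, off the thin diagonal, $t(P_1,\dots,P_n)$ vanishes whenever the total charge $\sum_{j=1}^n (a_j - b_j)$ is nonzero, and then argue that this zero distribution admits only the zero extension to the thin diagonal. The mechanism is a charge-counting argument. Recall that $\omega_0$ extracts the field-free coefficient $f_{0,0}$, so a VEV is nonzero only if every basic field can be contracted away. In the complex star product (\ref{eq:complstar}) and in the basic commutators (\ref{eq:comms}) each contraction pairs one $\phi$-type field with one $\phi^*$-type field. Hence a complete contraction exists only if the total number of $\phi$ and $\partial^\mu\phi$ equals the total number of $\phi^*$ and $\partial^\nu\phi^*$, that is only if $\sum_j a_j = \sum_j b_j$.

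To make this precise off the thin diagonal I would use causality. On $\check{\mathbb{M}}^n$ the product is given explicitly by (\ref{eq:unrenT}) as $P_1(x_1)\star_{F}\cdots\star_{F}P_n(x_n)$, whose VEV is the sum over all complete $\phi$--$\phi^*$ contractions and therefore vanishes unless $\sum_j a_j=\sum_j b_j$. More generally, on every element of the cover of $\mathbb{M}^n\backslash\Delta_n$ the causal factorization axiom (\ref{eq:causality}) writes $T_n$ as a $\star$-product $T_k(\cdots)\star T_{n-k}(\cdots)$ of lower-order time-ordered products of subfamilies of the $P_j$. By the inductive hypothesis (\ref{eq:chconsv}) each factor is a $\theta$-eigenvector of definite charge, equal to the sum of the charges $a_i-b_i$ of its arguments; since the $\star$-product again only contracts $\phi$ against $\phi^*$, the VEV of such a product is zero unless the two charges cancel, i.e. unless $\sum_j (a_j-b_j)=0$. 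Thus, assuming $\sum_j(a_j-b_j)\neq 0$, the unrenormalized $t(P_1,\dots,P_n)$ vanishes on all of $\mathbb{M}^n\backslash\Delta_n$.

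Finally I would invoke the extension theorem. By proposition \ref{prop:t_transl} the VEV is a translation-invariant numerical distribution in the relative coordinates, hence an element of $\mathscr{D}'(\mathbb{R}^{d(n-1)})$ with the thin diagonal corresponding to the origin. When $\sum_j(a_j-b_j)\neq 0$ the previous step shows it is the zero distribution on $\mathbb{R}^{d(n-1)}\backslash\{0\}$, so its scaling degree is $-\infty<d(n-1)$; the extension theorem (case (i)) then forces the unique scaling-degree-preserving extension to be $t=0$. Therefore no admissible renormalization can produce a nonvanishing $t$ in a charge-nonzero sector, and a nonzero extension remains available only when the total charge vanishes, which is exactly the asserted implication. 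The main obstacle is the bookkeeping in the middle step: one must verify that $\omega_0$ of a $\star$-product of two charge-definite factors really vanishes unless their charges cancel, and that the inductive hypothesis (\ref{eq:chconsv}) applies to every factor $T_k(\cdots)$ occurring in the causal factorization on each piece of the cover. Once these are in place, the extension argument is routine.
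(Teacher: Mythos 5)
Your first two steps essentially reproduce the paper's argument in different packaging. The paper shows that the \emph{unrenormalized} $T_n$ satisfies charge number conservation (\ref{eq:chconsv}) via causal factorization plus the Leibniz rule of $\theta$ for $\star_F$ (remark \ref{rem:thetaleibn}) and the inductive hypothesis, and then applies $\omega_0\circ\theta=0$ once to conclude $t_{\text{unr}}(P_1,\dots,P_n)\cdot\sum_j(a_j-b_j)=0$; your contraction-counting for the VEV of $T_k(\cdots)\star T_{n-k}(\cdots)$, with the inductive hypothesis supplying charge-definiteness of the two factors, is the same mechanism applied piecewise on the cover, and it is sound.

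The genuine gap is in your final step. The uniqueness clause of the extension theorem concerns extensions satisfying $\text{sd}(t)=\text{sd}(t^0)$, but that is \emph{not} what the renormalization conditions demand: condition (xi) only requires $\text{sd}\,t(P_1,\dots,P_n)\leq\sum_j\dim P_j$, and the renormalization freedom (\ref{eq_fredh}) consists of adding local terms $\sum_{|a|\leq\omega}C_a\,\partial^a\delta$ with $\omega=\sum_j\dim P_j-d(n-1)$. Whenever $\omega\geq 0$ (the typical situation here, e.g.\ for several factors of $L$), a \emph{nonzero} extension of the zero distribution is therefore still admissible as far as the scaling degree and the remaining conditions are concerned. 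So your claim that ``no admissible renormalization can produce a nonvanishing $t$ in a charge-nonzero sector'' is false in this framework --- and this is precisely why the lemma is phrased ``we \emph{may} renormalize'': the zero extension must be \emph{chosen}, not derived. The fix is one line and is exactly what the paper does: for all vanishing unrenormalized $t$-products with $\sum_j(a_j-b_j)\neq 0$, choose the extension by zero, and observe that this choice is compatible with all other renormalization conditions (symmetry, Poincar\'e covariance, $*$-structure, charge conjugation all map a given charge sector into itself or its conjugate and preserve the zero distribution). With that substitution your proof coincides with the paper's.
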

	
	\begin{proof}
		We first show that the statement holds true for unrenormalized $t$-products $t_{\text{unr}}$. In the inductive construction, the expression for $T(x_1,\dots,x_n)$ factorizes causally on points laying outside the thin diagonal. As noted in remark \ref{rem:thetaleibn}, $\theta$ satisfies the Leibniz rule for the $\star$-product and hence also for the Feynman star product $\star_F$. By using this and the inductive assumption on charge number conservation, we see that (\ref{eq:chconsv}) holds true for unrenormalized $T$-products. Now since the charge number operator $\theta$ acts as a derivative operator, we have $\omega_0 \circ \theta =0$. Using this in (\ref{eq:chconsv}) we calculate 
		\begin{align*}
			0 = \omega_0 \circ \theta \,\big( T_{\text{unr}}(P_1,\dots,P_n) \big) 
			= t_{\text{unr}}\big( P_1,\dots,P_n \big) \cdot \sum_{j=1}^{n} (a_j-b_j)  \; . 
		\end{align*}
		This shows the lemma for all unrenormalized $t$-products. Now when renormalizing these expressions, the implication can only get lost if for some unrenormalized $\hat{t}_{\text{unr}} = 0$ with $\sum_{j=1}^{n} (a_j-b_j) \neq 0$, the corresponding renormalized $\hat{t}\neq 0$. So we just extend all vanishing unrenormalized $t$-products for which $\sum_{j=1}^{n} (a_j-b_j) \neq 0$ by zero. This is compatible with all other normalization conditions and completes the proof. 
	\end{proof}
	To complete the proof of proposition \ref{prop:Tchc} we use the causal Wick expansion (\ref{eq:cwick})
	\begin{equation}
	\label{eq:abc}
	T\left( P_1(x_1)\otimes\cdots\otimes P_n(x_n) \right)
	= \sum_{\underline{P}_l\subseteq P_l}t\left( \underline{P}_1(x_1),\dots,\underline{P}_n(x_n)
	\right)
	\overline{P}_1(x_1)\cdots\overline{P}_n(x_n) \; . 
	\end{equation}
	Let $\underline{a}_j$ be the total number of factors $\phi$ and $\partial_{\mu}\phi$ contained in $\underline{P}_j$. Define $\overline{a}_j$ analogously for $\overline{P}_j$ as well as $\underline{b}_j$ and $\overline{b}_j$ for the number of factors of $\phi^*$ and $\partial_{\mu}\phi^*$. It then holds that 
	\begin{equation*}
		\underline{a}_j + \overline{a}_j = a_j \;, \qquad 
		\underline{b}_j + \overline{b}_j = b_j \; .
	\end{equation*}
	From lemma \ref{lemm:pups} we know that all the non vanishing terms on the r.h.s of (\ref{eq:abc}) satisfy 	$\sum_{j=1}^{n} (\underline{a}_j-\underline{b}_j) = 0$, so we may just add them in the following calculation 
	\begin{align*}
		\theta \Big( t\left( \underline{P}_1,\dots,\underline{P}_n
		\right)
		\overline{P}_1\cdots\overline{P}_n
		\Big)
		&= \Big( t\left( \underline{P}_1,\dots,\underline{P}_n
		\right)
		\overline{P}_1\cdots\overline{P}_n
		\Big)\cdot\sum_{j=1}^{n} (\overline{a}_j-\overline{b}_j) \notag \\
		&=\Big( t\left( \underline{P}_1,\dots,\underline{P}_n
		\right)
		\overline{P}_1\cdots\overline{P}_n
		\Big)\cdot\sum_{j=1}^{n} ({a}_j-{b}_j) \; .
	\end{align*}
	Hence we have charge number conservation for all individual terms on the r.h.s. of (\ref{eq:abc}) satisfied, which shows overall charge number conservation for the $T$-product. 
\end{proof}

\subsection{The integrated anomalous term vanishes}
\label{sec:3integr}

We show the following proposition, which will allow us to make further statements about the structure of the anomalous term. 

\begin{proposition}
	The integral over the last argument of the on-shell anomalous term vanishes, that is 
	\begin{equation}
		\label{eq:int0}
		\int dy \, \Delta^n\left(P_1(x_1) \otimes \cdots \otimes  P_n(x_n);Q(y)\right)_0 = 0 \; .
	\end{equation}
\end{proposition}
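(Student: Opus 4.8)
The plan is to integrate the explicit formula (\ref{eq:start}) for $\tfrac{\hbar^n}{i^n}\Delta^n$ over $y$, pass to the solution space, and check that its five contributions vanish or cancel in pairs. By linearity it suffices to take all $P_l$ among the basis monomials $\mathscr{B}_{(\phi^*\phi)^2}$, each of which is an eigenvector of the charge number operator $\theta$ with eigenvalue $a_l-b_l$.

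I would first restrict on-shell. The last two lines of (\ref{eq:start}) carry explicit factors $(\dalemb+m^2)\phi^*(y)$ and $(\dalemb+m^2)\phi(y)$, which vanish on the solution space by the free field equations (\ref{eq:feqcomp}); hence these terms drop out. Upon integrating $\int dy$, the third line is the $y$-divergence of an expression proportional to $\delta(y-x_l)$, so it is compactly supported in $y$ and its integral is a pure boundary term, which vanishes. Integrating out the $\delta(y-x_l)$ in the second line leaves $\hbar\sum_{l=1}^n T_n(P_1(x_1)\otimes\cdots\otimes(\theta P_l)(x_l)\otimes\cdots\otimes P_n(x_n))_0$.

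The crux is the first line, $-\int dy\,\partial_\mu^y T_{n+1}(P_1(x_1)\otimes\cdots\otimes P_n(x_n)\otimes j^\mu(y))_0$. Here I would use the Lorentz-invariant Gauss theorem (the lemma preceding Proposition \ref{prop:Q_0}) to split the divergence: integrating over a large region, the spatial components contribute only a boundary term at spatial infinity that vanishes by the support properties of the $T$-product, while the temporal component reduces to the difference of the equal-time integrals $\int d\vec{y}\,T_{n+1}(\cdots\otimes j^0(y))_0$ evaluated at a time later than all $x_l$ and at a time earlier than all $x_l$. By causal factorization (\ref{eq:causality}), at the late slice $j^0(y)$ factors to the left of $T_n(P_1(x_1)\otimes\cdots\otimes P_n(x_n))$ and at the early slice to the right; each spatial integral of $j^0$ is the charge $Q_0$ of Proposition \ref{prop:Q_0}, which is independent of the time slice. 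The two contributions thus assemble into the $\star$-commutator $-[Q_0,T_n(P_1(x_1)\otimes\cdots\otimes P_n(x_n))_0]$.

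It remains to evaluate this commutator. Since $[Q_0,\cdot]$ is a derivation that reproduces $\hbar\theta$ on the basic fields (Proposition \ref{prop:Q_0}(ii)) and each $P_l$ is a $\theta$-eigenvector, charge number conservation (Proposition \ref{prop:Tchc}, equation (\ref{eq:chconsv})) gives $[Q_0,T_n(\cdots)_0]=\hbar\sum_{l=1}^n T_n(P_1(x_1)\otimes\cdots\otimes(\theta P_l)(x_l)\otimes\cdots\otimes P_n(x_n))_0$. Hence the first line equals the negative of the surviving second line and the two cancel, yielding (\ref{eq:int0}). I expect the main obstacle to be making the asymptotic argument for the first line rigorous: justifying that the spatial divergence of the distribution-valued, on-shell $T$-product integrates to zero, and that the temporal boundary terms genuinely combine into $[Q_0,T_n(\cdots)_0]$ through causal factorization at asymptotic times, where the distributional nature of the objects and the interplay of the on-shell restriction with the star product must be treated with care.
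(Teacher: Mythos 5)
Your overall strategy is the paper's own: drop the on-shell terms via the free field equations, dispose of the $\delta$-terms (your observation that the third line of (\ref{eq:start}) is the $y$-divergence of a compactly supported distribution and hence integrates to zero is fine, and slightly slicker than the paper's route of inserting a test function and integrating by parts), reduce the first line to $-[Q_0,T_n(\cdots)_0]$, and cancel it against the second line by charge number conservation (\ref{eq:chconsv}). The gap sits exactly where you flagged it, and it is twofold. First, the spatial boundary term does \emph{not} vanish ``by the support properties of the $T$-product'': property (\ref{eq:tsupp}) concerns the field support in the sense of definition \ref{def:supp} (the support of the functional derivative), not the support of $T_{n+1}(\cdots\otimes j^{\mu}(y))_0$ as a distribution in $y$. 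At large spacelike $y$ this object does not vanish at all — by causal factorization it equals $j^{\mu}(y)_0\star T_n(\cdots)_0$, which has no decay — so Gauss's theorem over a growing region leaves an uncontrolled spatial boundary contribution. Second, the two equal-time integrals whose difference you want to take do not exist separately: $\int d\vec{y}\,j^0(y)_0\star T_n(\cdots)_0$ diverges, and proposition \ref{prop:Q_0}(i) is deliberately formulated only for the commutator $[Q_0,F_0]$, whose integrand has compact support in $\vec{y}$ thanks to spacelike commutativity (proposition \ref{prop:spacecomm}). ``A difference of two divergent slice integrals assembling into a commutator'' is not a licensed operation; relatedly, the $c$-independence of proposition \ref{prop:Q_0}(i) is also only available once the commutator has been formed. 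Note moreover that without some regularization the very splitting of $\int dy\,\Delta^n(\cdots)_0$ into the separate terms of (\ref{eq:start}) is formal, since the first line alone is not integrable over $y$.

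The paper repairs precisely these points. It uses locality of the anomaly, equation (\ref{eq:locanom}), to insert a test function $g$ equal to $1$ on a neighbourhood of a double cone $\mathscr{O}\supset\{x_1,\dots,x_n\}$ — this is what legitimizes the term-by-term treatment in the first place. Integration by parts produces the smeared current $j^{\mu}(\partial_{\mu}g)$; splitting $\partial_{\mu}g=a_{\mu}-b_{\mu}$ causally, applying causal factorization, and adding $T(\cdots)_0\star j^{\mu}(a_{\mu}-a_{\mu})_0$ yields the commutator $[j^{\mu}(a_{\mu})_0,T(\cdots)_0]$ plus $T(\cdots)_0\star j^{\mu}(\partial_{\mu}g)_0$, the latter vanishing by on-shell current conservation. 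Only \emph{inside} the commutator is the compactly supported smearing then deformed — harmlessly, by spacelike commutativity — to the time-slice form $\hat{a}_{\mu}=-\delta_{\mu 0}h(x^0)$, so that proposition \ref{prop:Q_0} applies and gives $-\hbar(\theta T(\cdots))_0$; charge number conservation then kills the right-hand side of (\ref{eq:strawberries}) exactly as you intended. If you reorganize your asymptotic argument in this order — commutator first, time slices second, all smearings compact — your proof closes and coincides with the paper's.
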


\begin{proof}
	Let $(x_1,\dots, x_n) \in \mathbb{M}^n$ be given and let $\mathscr{O} = (x + V_+) \cap (y + V_-)$ for some $x,y \in \mathbb{M}$ be an open double cone that contains all $x_1,\dots, x_n$. Let $g \in \mathscr{D}(\mathbb{M})$ be a test function s.t. $g(x)=1 \; \forall \; x \in U$, where $U$ is a neighbourhood of $\overline{\mathscr{O}}\subset\mathbb{M}$.
	From equation (\ref{eq:locanom})  we know that $\Delta$ is local, that is $\Delta^n\left(P_1(x_1) \otimes\dots\otimes  P_n(x_n);Q(y)\right)$ as a distribution in any of the $x_i$ is supported only at the point $x_i = y$. Using this and expression (\ref{eq:start}) for the anomaly map, we may insert $g$ as above into the integral:
	\begin{align}
		\label{eq:sonnenschein}
		\frac{\hbar^n}{i^n}
		\int dy \; \Delta^n&\big(P_1(x_1) \otimes\dots\otimes  P_n(x_n);Q(y)\big)_0 \notag\\
		= &\frac{\hbar^n}{i^n}
		\int dy \, g(y)\, \Delta^n\big(P_1(x_1) \otimes\dots\otimes  P_n(x_n);Q(y)\big)_0  \notag\\
		= &-\int dy \, g(y)\,\partial_{\mu}^y \,T_{n+1}\big( P_1(x_1)\otimes\cdots\otimes P_n(x_n)\otimes j^{\mu}(y) \big)_0 \notag\\
		&+\hbar\sum_{l=1}^{n} 
		T_{n}\big( P_1(x_1)\otimes\cdots\otimes(\theta P_l)(x_l)\otimes\dots\otimes P_n(x_n) \big)_0 \notag\\
		&+ \hbar \sum_{l=1}^{n} \big(\partial^{\mu}g(x_l)\big)\cdot
		T_{n}\big( P_1(x_1)\otimes\cdots\otimes(\theta_{\mu} P_l)(x_l)\otimes\dots\otimes P_n(x_n) \big)_0
	\end{align}
	In the second last line, the $\delta$-distribution sets $g(x_l) =1$. In the last line, we use integration by parts to put the derivative onto  $g$, so this term vanishes since $g$ is constant on all $x_l$. The last two lines in equation (\ref{eq:start}) vanish since we are restricting to on-shell fields. \\
	
	Now consider $\partial^{\mu}g: \mathbb{M} \rightarrow \mathbb{M}$. Since this map is constant everywhere in $\mathscr{O}$, we have $(\text{supp}\, \partial^{\mu}g) \;  \cap\; \mathscr{O} = \emptyset$. Hence we may decompose it into $\partial^{\mu}g = a^{\mu} - b^{\mu}$ such that $\text{supp} \, a^{\mu} \cap (\mathcal{O} + \overline{V}_-) = \emptyset$ and  $\text{supp} \, b^{\mu} \cap (\mathcal{O} + \overline{V}_+) = \emptyset$. Then by causal factorization (\ref{eq:causality}) and using the AWI (\ref{eq:AWI}) the $T$-product in the third line of (\ref{eq:sonnenschein}) becomes
	\begin{align}
		\label{eq:flaco}
		T&\big( P_1(x_1)\otimes\cdots\otimes P_n(x_n)\otimes j^{\mu}(\partial_{\mu}g) \big)_0 \notag\\
		&= j^{\mu}(a_{\mu})_0 \star T\big(P_1(x_1) \otimes\dots\otimes  P_n(x_n)\big)_0 
		- T\big(P_1(x_1) \otimes\dots\otimes  P_n(x_n)\big)_0 \star j^{\mu}(b_{\mu})_ 0 \notag\\
		&= [j^{\mu}(a_{\mu})_0, T\big(P_1(x_1) \otimes\dots\otimes  P_n(x_n)\big)_0]
		+ T\big(P_1(x_1) \otimes\dots\otimes  P_n(x_n)\big)_0 \star j^{\mu}(\partial_{\mu}g)_ 0 \; ,
	\end{align}
	where for the second identity we have just added $T(\dots)_0\star j^{\mu}(a_{\mu}-a_{\mu})_0$. The second term in the last line vanishes since the on-shell free current is conserved. So the $T$-product on the l.h.s equals the commutator in the last line.\\
	
	\begin{figure}[h!]
		\centering
		\includegraphics[width=1\textwidth]{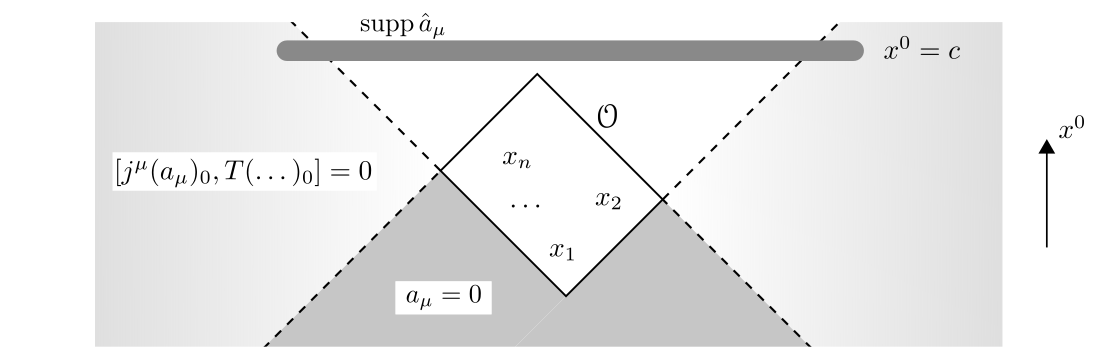}
		\caption[Testbild]{The double cone $\mathscr{O}$ contains the points $x_1,\dots,x_n$ surrounded by the regions where $a_{\mu}$ vanishes and where it can be varied without affecting the commutator. The new choice $\hat{a}_{\mu}$ which is supported only outside $\mathscr{O}$.\label{fig:dcone}}
	\end{figure}
	From (\ref{eq:tsupp}) we know that $\text{supp} \,  T\big(P_1(x_1) \otimes \cdots \otimes  P_n(x_n)\big)_0 \subset \mathscr{O}$. So due to spacelike commutativity (proposition \ref{prop:spacecomm}) we may vary the values of $a^{\mu}(z)$ at points $z$ that are spacelike separated from $\mathscr{O}$ without affecting the commutator $[j^{\mu}(a_{\mu})_0, T(\dots)_0]$ in (\ref{eq:flaco}). So -- by choosing a different $\hat{g}$ still satisfying the same property -- we may take $\hat{a}_{\mu}$ to have only a $x^0$ component and to be supported at an approximation of a time slice $x^0 = c$ of spacetime outside $\mathscr{O}$ (see figure \ref{fig:dcone}). That is
	\begin{equation*}
		\hat{a}_{\mu}(x) =- \delta_{\mu 0} h(x^0) \quad \text{where} \int d x^0 \, h(x^0) = 1, \quad h \in \mathscr{D} ([c - \varepsilon, c + \varepsilon])\, , \quad c \in \mathbb{R} \; 
	\end{equation*}
	for some $\varepsilon > 0$, where $c$ is large enough to be outside $\mathcal{O}$ and $\delta_{\mu 0}$ is the Kronecker delta. Due to proposition \ref{prop:Q_0} we know that the integral 
	\begin{equation*}
		[Q_0,T(\dots)_0] := \int d\vec{x}\, [j^0(x^0,\vec{x})_0,T(\dots)_0]
	\end{equation*}
	exists and does not depend on $x^0$ and that $[Q_0,P(x)_0]= \hbar\,(\theta P)(x)_0$ for $P \in \mathscr{P}^{(1)}$. Since we have $P_1,\dots,P_n \in \mathscr{P}_{(\phi^*\phi)^2}\subset\mathscr{P}^{(1)}$ we know that the fields contained in $T\big(P_1(x_1) \otimes\dots\otimes  P_n(x_n)\big)_0$ are also elements of $\mathscr{P}^{(1)}$. 
	Using the chosen $\hat{a}^{\mu}$ we compute the commutator in (\ref{eq:flaco}) as 
	\begin{align*}
		[j^{\mu}(\hat{a}_{\mu})_0, T(\dots)_0]_{\star}
		&= -\int d x^0 \, h(x^0) \int d \vec{x} \, [j^0(x^0,\vec{x})_0, T(\dots)_0]_{\star} \notag\\
		&= -[Q_0, T(\dots)_0]_{\star} = -\hbar\,\big(\theta\, T(\dots)\big)_0 \; .
	\end{align*}
	Putting this back into (\ref{eq:sonnenschein}) yields 
	\begin{align}
		\label{eq:strawberries}
		\frac{\hbar^n}{i^n}
		\int dy \, \Delta^n\left(P_1(x_1) \otimes\dots\otimes  P_n(x_n);Q(y)\right)_0 
		= -\hbar\, \theta\, T_{n}\big( P_1(x_1)\otimes\cdots\otimes P_n(x_n) \big)_0 \notag\\
		+\,\hbar\sum_{l=1}^{n} 
		T_{n}\big( P_1(x_1)\otimes\cdots\otimes(\theta P_l)(x_l)\otimes\dots\otimes P_n(x_n) \big)_0 \; .
	\end{align}
	The r.h.s. vanishes due to charge number conservation (\ref{eq:chconsv}). 
\end{proof}

\begin{remark}
	Equation \ref{eq:strawberries} shows that charge number conservation is a necessary condition for the MWI to be satisfied, that is for $\Delta^n=0$ to hold true. 
\end{remark}

\subsection{Structure of the anomalous term}

\label{sec:4struc}

Here we express the anomalous term via its causal Wick expansion and apply the anomalous MWI from theorem \ref{thm:aMWI} to derive further properties of the involved VEVs.

\begin{proposition}
	To prove theorem \normalfont{\ref{thm:bachelor}}\textit{ it suffices to show the statement for all va-cuum expectation values $d(P_1,\dots,P_n):= \omega_0 \circ \Delta (P_1,\dots,P_n;Q)$ with arguments $P_1,\dots,P_n$ that are at least quadratic in the basic fields.}
\end{proposition}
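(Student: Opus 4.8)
The plan is to exploit that the anomalous term $\Delta^n\big(P_1(x_1)\otimes\cdots\otimes P_n(x_n);Q(y)\big)$ is, by Theorem \ref{thm:aMWI}, a local interacting field which depends on $\phi,\phi^*$ only through its arguments (field independence, property (iv)). This field independence is precisely the hypothesis of Theorem \ref{thm:cWick}, so the anomaly admits a causal Wick expansion. Writing this expansion out reduces the vanishing of the $\mathscr{F}_{\text{loc}}$-valued distribution $\Delta^n$ to the vanishing of the numerical coefficients occurring in it, that is to the VEVs $d(\,\cdot\,)=\omega_0\circ\Delta(\,\cdot\,;Q)$ evaluated on sub-monomials of the arguments. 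The remaining task will then be to identify which of these sub-monomial VEVs need to be checked by hand.

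First I would set up the expansion. Since $\Delta^n$ is field independent, Theorem \ref{thm:cWick} applies verbatim and yields
\begin{equation*}
	\Delta^n\big(P_1(x_1)\otimes\cdots\otimes P_n(x_n);Q(y)\big)
	= \sum_{\underline{P}_l\subseteq P_l,\ \underline{Q}\subseteq Q}
	d\big(\underline{P}_1,\dots,\underline{P}_n;\underline{Q}\big)\,
	\overline{P}_1(x_1)\cdots\overline{P}_n(x_n)\,\overline{Q}(y)\;,
\end{equation*}
with $d=\omega_0\circ\Delta$ and external Wick monomials $\overline{P}_l,\overline{Q}$. Because these external monomials are linearly independent, the whole field $\Delta^n$ vanishes if and only if every coefficient distribution $d(\underline{P}_1,\dots,\underline{P}_n;\underline{Q})$ vanishes. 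The argument $Q=i\phi$ is linear, so its only sub-monomials are $Q$ itself and a constant; by the off-shell property \textnormal{(\ref{eq:anomoffs})} one has $\Delta(\dots;1)=0$, which annihilates every term with $\underline{Q}$ constant. Hence $\underline{Q}=Q$ throughout, and it remains to control the coefficients $d(\underline{P}_1,\dots,\underline{P}_n)=\omega_0\circ\Delta(\underline{P}_1,\dots,\underline{P}_n;Q)$ in the paper's notation.

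Next I would discard the sub-monomials already handled by the induction. If some $\underline{P}_j$ is a single basic field $\partial^a\phi_i$, then Proposition \ref{prop:argums} — whose hypothesis that $\Delta^k=0$ for $k<n$ is exactly the running inductive assumption — forces $\Delta(\underline{P}_1,\dots,\underline{P}_n;Q)=0$, so the corresponding $d$ vanishes automatically. If some $\underline{P}_j$ is a constant, the associated vertex carries no field and the anomaly reduces to one of order $n-1$, which vanishes by the inductive hypothesis. What survives are exactly the coefficients $d(P_1,\dots,P_n)$ all of whose arguments are at least quadratic in the basic fields, so establishing that these vanish is enough to conclude $\Delta^n=0$ and therefore Theorem \ref{thm:bachelor}.

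The main obstacle is the first step: justifying rigorously that the local field $\Delta^n$ inherits a causal Wick expansion of the same shape as \textnormal{(\ref{eq:cwick})} for the $T$-products, and then tracking carefully which sub-monomials of the $P_j$ and of $Q$ actually survive. The linear and the constant cases must both be excluded cleanly — the former through Proposition \ref{prop:argums}, the latter through \textnormal{(\ref{eq:anomoffs})} together with the order reduction of the anomaly — so that the reduction lands precisely on the at-least-quadratic arguments without any boundary case being overlooked.
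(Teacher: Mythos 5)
Your proposal follows essentially the same route as the paper: field independence of $\Delta^n$ (theorem \ref{thm:aMWI}, property (iv)) justifies a causal Wick expansion as in theorem \ref{thm:cWick}, the constant submonomial of $Q$ is eliminated via the off-shell property (\ref{eq:anomoffs}), and linear submonomials $\underline{P}_j$ are eliminated via proposition \ref{prop:argums} under the running inductive hypothesis, leaving exactly the at-least-quadratic VEVs. Your explicit treatment of constant submonomials $\underline{P}_j$ (order reduction of the anomaly plus induction) is a small point the paper's proof passes over silently, and it is handled correctly.
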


\begin{proof}
	From theorem \ref{thm:aMWI} we know that $\Delta^n$ satisfies the axiom field independence, hence we may express it in terms of the causal Wick expansion (see theorem \ref{thm:cWick})
	\begin{align}
		\label{eq:anWick}
		\Delta^n&\left(P_1(x_1) \otimes\dots\otimes  P_n(x_n);Q(y)\right)\notag\\
		&= \sum_{\underline{P}_l\subseteq P_l} d\big( \underline{P}_1,\dots,\underline{P}_n \big)(x_1-y,\dots,x_n-y)\,
		\overline{P}_1(x_1)\cdots\overline{P}_n(x_n) \; ,
	\end{align}
	where we have used that the only submonomial of $Q$ and $Q^*$ is $c\in\mathbb{C}$ and its contribution vanishes, see equation (\ref{eq:anomoffs}). Due to proposition \ref{prop:argums} all VEVs in the sum that have some $\underline{P}_j$ as argument containing only a basic field also vanish. This shows the proposition. 
\end{proof}

\begin{proposition}
	We can write $d(P_1,\dots,P_n)$ as 
	\begin{align}
		\label{eq:tatue}
		&d\big(P_1,\dots,P_n\big)(x_1-y,\dots,x_n-y) = \partial_{\mu}^y
		u^{\mu}(P_1,\dots,P_n)(x_1-y,\dots,x_n-y)\;,  \notag\\
		\text{\normalfont{where}}\quad& u^{\mu} = \sum_{|a|\leq \omega(P_1,\dots,P_n)-1} C_a^{\mu} \big(P_1,\dots,P_n\big)\,\partial^a \delta(x_1-y,\dots,x_n-y)\;, \notag\\
		\text{\normalfont{with}}\quad &a\in(\mathbb{N}^d)^n \,, \quad \omega(P_1,\dots,P_n):=\sum_{i=1}^{n}\text{\normalfont{dim}}\,P_j+4-4n \; ,\notag\\
		\text{\normalfont{for some}}\quad & C_a^{\mu} (P_1,\dots,P_n) \in \mathbb{C}\; .
	\end{align}
\end{proposition}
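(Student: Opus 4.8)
The plan is to combine three results already established: the locality and scaling-degree structure of the anomaly map from Theorem~\ref{thm:aMWI}, the vanishing of the integrated anomaly~(\ref{eq:int0}), and the Poincar\'e lemma~(\ref{eq:poinc}).

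First I would pass to the vacuum expectation value. Since $\Delta^n$ satisfies field independence and the locality property~(\ref{eq:locanom}), applying $\omega_0$ and using that it extracts the field-independent part $f_{0,0}$ of $P_a^n(\otimes_{j=1}^{n} P_j;Q)(y)$ gives
\begin{equation*}
	d(P_1,\dots,P_n)(x_1-y,\dots,x_n-y) = \sum_{a\in(\mathbb{N}^d)^n} c_a(P_1,\dots,P_n)\,\partial^a\delta(x_1-y,\dots,x_n-y)
\end{equation*}
with constant coefficients $c_a := \omega_0\big(P_a^n(\otimes_{j=1}^{n} P_j;Q)\big)\in\mathbb{C}$; translation covariance of $\Delta^n$ guarantees that $d$ is numerical and depends only on the relative coordinates.

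Next I would pin down the range of the sum using the scaling-degree restriction~(\ref{eq:a_restr}). Only the field-independent summand of $P_a^n$ survives under $\omega_0$, and a constant carries mass dimension $0$, so~(\ref{eq:a_restr}) collapses to $|a|\le\sum_j\dim(P_j)+\dim(Q)+\tfrac{d+2}{2}-dn$. With $d=4$, $\dim(Q)=\dim(\phi)=1$ and $\tfrac{d+2}{2}=3$, the right-hand side equals $\omega(P_1,\dots,P_n)=\sum_j\dim P_j+4-4n$, so only multi-indices with $|a|\le\omega$ contribute to $d$.

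Finally I would invoke the integral identity together with the Poincar\'e lemma. Because the vacuum expectation value only reads off the field-independent constant term, it is insensitive to the on-shell restriction, and applying $\omega_0$ to~(\ref{eq:int0}) yields $\int dy\, d(P_1,\dots,P_n)=0$. This is exactly the hypothesis of the Poincar\'e lemma~(\ref{eq:poinc}) applied to $f=d$ with the constant coefficients $c_a$. The lemma produces the divergence form $d=\partial_\mu^y u^\mu$; since the input coefficients are constants and $d$ is translation invariant the construction returns constant $C_a^\mu\in\mathbb{C}$, and because the extracted derivative $\partial_\mu^y$ raises the order of each $\partial^a\delta$ by one, the $\partial^a\delta$ appearing in $u^\mu$ carry $|a|\le\omega-1$. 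This is precisely~(\ref{eq:tatue}). The step I expect to be the main obstacle is the bookkeeping in this last paragraph: one must verify that the Poincar\'e lemma, stated for $\mathscr{P}$-valued coefficients, indeed returns constant coefficients when fed a numerical distribution, and that extracting the single derivative $\partial_\mu^y$ lowers the delta-derivative order by exactly one, so that the sharper bound $|a|\le\omega-1$ is reached rather than merely $|a|\le\omega$. Both points hinge on the explicit construction behind the Poincar\'e lemma in \cite[Lemma~4.5.1]{duetsch19}; identifying $\omega$ from~(\ref{eq:a_restr}) is then routine.
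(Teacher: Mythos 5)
Your proposal is correct and takes essentially the same route as the paper's own proof: locality~(\ref{eq:locanom}) and the scaling-degree bound~(\ref{eq:a_restr}) from Theorem~\ref{thm:aMWI} give the expansion of $d$ with constant coefficients and $|a|\leq\omega$, the vanishing of the integrated anomaly~(\ref{eq:int0}) carries over to $d$ since the VEV contains no fields, and the Poincar\'e lemma~(\ref{eq:poinc}) then yields $d=\partial_{\mu}^{y}u^{\mu}$ with the delta-derivative order in $u^{\mu}$ reduced by one. Your closing bookkeeping (constant coefficients surviving the Poincar\'e lemma, the sharper bound $|a|\leq\omega-1$) merely makes explicit what the paper asserts in a single line.
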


\begin{proof}
	By equation (\ref{eq:locanom}) in theorem \ref{thm:aMWI}, the anomaly maps $\Delta^n$ are local fields. This carries over to their VEVs, which we may write as 
	\begin{equation}
		\label{eq:d_exp}
		d\big(P_1,\dots,P_n\big)(x_1-y,\dots,x_n-y) 
		=  \sum_{a} \tilde{C}_a \big(P_1,\dots,P_n\big)\,\partial^a \delta(x_1-y,\dots,x_n-y) \; ,
	\end{equation}
	for some $\tilde{C}_a (P_1,\dots,P_n) \in \mathbb{C}$. Furthermore by using equation (\ref{eq:a_restr}) in the same theorem with four space time dimensions and $\text{dim}\,Q = 1$, $\dim \tilde{C}_a = 0$ we conclude that the sum over $a$ is restricted to 
	\begin{equation}
		\label{eq:anom_restr}
		|a| \leq \sum_{i=1}^{n}\text{\normalfont{dim}}\,P_j+4-4n =: \omega(P_1,\dots,P_n) \; .
	\end{equation}
	We have shown equation (\ref{eq:int0}) for the on-shell anomaly, hence this holds true also for $d$ (since it contains no fields)
	\begin{equation*}
		\int dy \; d\big(P_1,\dots,P_n\big)(x_1-y,\dots,x_n-y) = 0\; .
	\end{equation*}
	By equation (\ref{eq:poinc}) we may therefore write it as the total divergence $d(\dots) = \partial_{\mu}^y u^{\mu}(\dots,y)$ for some $u^{\mu}$ with a Lorentz index $\mu$ that is again local. Since now $u^{\mu}$ must contain one derivative less than $d$, we arrive at the expression in (\ref{eq:tatue}) for $u^{\mu}$. 
\end{proof}

\begin{lemma}
	 \label{lemm:poinc}
	 The VEVs  $d\big(P_1,\dots,P_n\big)(x_1-y,\dots,x_n-y)$ are Lorentz covariant and satisfy the $*$-structure condition, hence the $u^{\mu}$ can be chosen to do so too. 
\end{lemma}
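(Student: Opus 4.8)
The plan is to push the Lorentz covariance and $*$-structure of the full anomaly map $\Delta^n$ — already recorded in Theorem \ref{thm:aMWI} as equations (\ref{eq:anom_lorz}) and (\ref{eq:anomstar}) — down to their vacuum expectation values $d$, and then to argue that the freedom left in the Poincaré-lemma decomposition $d = \partial_\mu^y u^\mu$ lets us carry these two properties over to $u^\mu$.

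First I would treat Lorentz covariance. Expanding the identity (\ref{eq:anom_lorz}) to $n$-th order and stripping off the test functions yields a covariance relation for $\Delta^n\big(P_1(x_1)\otimes\cdots\otimes P_n(x_n);Q(y)\big)$; note that $Q = i\phi$ is a Lorentz scalar, so only the $P_j$ and the spacetime arguments are transported. Applying the vacuum state $\omega_0$ to both sides and using $\omega_0 \circ \beta_{\Lambda,a} = \omega_0$ (Proposition \ref{prop:t_transl}) removes the field-valued part and leaves exactly the statement that the numerical distribution $d(P_1,\dots,P_n)$ transforms as the Lorentz tensor dictated by the index content of $P_1,\dots,P_n$.

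Second, for the $*$-structure I would expand (\ref{eq:anomstar}) to order $n$, obtaining the relation $\Delta^n\big(P_1(x_1)\otimes\cdots;Q(y)\big)^* = \Delta^n\big(P_1^*(x_1)\otimes\cdots;Q^*(y)\big)$, apply $\omega_0$, and invoke the reality of the vacuum state $\omega_0(F^*)=\overline{\omega_0(F)}$ from the definition of a state. This converts the $*$ on the left into complex conjugation of the numerical distribution and produces the required conjugation relation for $d$.

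The last and only delicate step is transferring these properties to $u^\mu$. For fixed $d$ the admissible $u^\mu$ form an affine space, two choices differing by a term with vanishing $y$-divergence, and both conjugation and the Lorentz action preserve the equation $\partial_\mu^y u^\mu = d$ precisely because $d$ now respects the symmetries. The $*$-structure is a $\mathbb{Z}_2$ operation, so I would simply symmetrize — replace $u^\mu$ by the average of $u^\mu$ and its $*$-conjugate — which keeps $\partial_\mu^y u^\mu = d$ and makes the result self-conjugate. The genuine obstacle is Lorentz covariance: the Lorentz group is non-compact, so one cannot average over it. Here I would instead invoke the equivariance of the explicit construction behind the Poincaré lemma \cite[Lemma 4.5.1]{duetsch19}: since that construction is assembled from Lorentz-equivariant operations and $u^\mu$ lives in the finite-dimensional space of tensors $C_a^\mu\,\partial^a\delta$ of bounded degree (the bound $|a|\le\omega(P_1,\dots,P_n)-1$ from (\ref{eq:tatue})), feeding it the covariant $d$ returns a covariant $u^\mu$; equivalently one decomposes the coefficient tensors into irreducible Lorentz components and solves the equivariant relation componentwise. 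This finite-dimensional, representation-theoretic step is where the actual work sits.
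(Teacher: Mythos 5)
Your proposal is correct and takes essentially the same route as the paper, whose proof consists of exactly the citations you use: Lorentz covariance of $d$ from (\ref{eq:anom_lorz}) (or directly from (\ref{eq:start})) and the $*$-structure from (\ref{eq:anomstar}), pushed down to the VEVs via $\omega_0$. Your final step — $*$-symmetrizing $u^{\mu}$ and invoking complete reducibility of the finite-dimensional Lorentz representations on the coefficient tensors $C_a^{\mu}$ to choose a covariant solution of $\partial_{\mu}^y u^{\mu} = d$ — is sound and merely makes explicit a transfer the paper treats as immediate (and uses downstream when writing $u^{\mu}$ in terms of invariant tensors).
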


\begin{proof}
	Lorentz covariance can be seen from equation (\ref{eq:start}) or (\ref{eq:anom_lorz}), the statement about the $*$-structure follows from equation (\ref{eq:anomstar}). 
\end{proof}

\subsection{Admissible finite renormalizations}

\label{sec:5renom}

\begin{proposition}
	\label{lemm:admren}
	Consider the terms $t( P_1,\dots,P_n,j^{\mu})$ appearing in the causal Wick expansion of the first term on the r.h.s of equation \normalfont{(\ref{eq:start})}\textit{ and  a  $u^{\mu}$ as in equation }\normalfont{(\ref{eq:tatue})}\textit{. Then the renormalized expressions
	\begin{equation}
		\label{eq:renm}
		t\big( P_1,\dots,P_n,j^{\mu} \big) \rightarrow
		t\big( P_1,\dots,P_n,j^{\mu} \big)
		+ \frac{\hbar^n}{i^n}\, u^{\mu}\big( P_1,\dots,P_n\big)
	\end{equation}
	satisfy all basic axioms and renormalization conditions except for the axiom} \normalfont{(iii)}\textit{ on symmetry in definition} \normalfont{\ref{def:Tn}}\textit{. The anomaly in equation} \normalfont{(\ref{eq:start})} \textit{vanishes by performing this renormalization, hence} \normalfont{(\ref{eq:compMWI})} \textit{holds true.}
\end{proposition}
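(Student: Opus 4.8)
The plan is to establish the two claims in turn: that the shift (\ref{eq:renm}) removes the anomaly, and that it respects every condition of Definitions \ref{def:Tn} and \ref{def:rencon} apart from symmetry. For the first, I would begin from the causal Wick expansion (\ref{eq:anWick}) of $\Delta^n$, which reduces the vanishing of the anomaly to that of the numerical VEVs $d(\underline{P}_1,\dots,\underline{P}_n)$ running over the (at least quadratic) submonomials $\underline{P}_l\subseteq P_l$. By (\ref{eq:tatue}) each of these is a total divergence $\partial_\mu^y u^\mu(\underline{P}_1,\dots,\underline{P}_n)$. The point is then to trace how (\ref{eq:renm}) feeds back into the right-hand side of (\ref{eq:start}): only the first term, $-\partial_\mu^y T_{n+1}(P_1(x_1)\otimes\cdots\otimes P_n(x_n)\otimes j^\mu(y))$, depends on the renormalized VEVs $t(\underline{P}_1,\dots,\underline{P}_n,j^\mu)$ (those pieces of its Wick expansion with the whole current contracted, $\overline{j}^\mu=1$), so it changes by $-\partial_\mu^y\tfrac{\hbar^n}{i^n}u^\mu=-\tfrac{\hbar^n}{i^n}d$ on each submonomial.

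The remaining terms of (\ref{eq:start}) are inert under this change: the $\delta(y-x_l)$ contributions contain only the lower-order $T_n$, which are already fixed at the previous inductive step, and the two field-equation terms carry $\phi(y)$ or $\phi^*(y)$ in the last slot -- hence a different, untouched VEV -- and vanish on-shell. Summing the shifted first term over submonomials reproduces exactly $-\tfrac{\hbar^n}{i^n}\Delta^n$ by (\ref{eq:anWick}), so the renormalized anomaly is $\Delta^n-\Delta^n=0$ and (\ref{eq:compMWI}) holds.

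For admissibility I would run through the conditions for the added term $\tfrac{\hbar^n}{i^n}u^\mu$, which by (\ref{eq:tatue}) is a finite sum of derivatives of $\delta(x_1-y,\dots,x_n-y)$ with constant coefficients. Causality and locality are immediate, since this term is supported on the total diagonal and so lies entirely within the extension region, leaving the off-diagonal factorization untouched; field independence follows from the numerical coefficients. Lorentz covariance and the $*$-structure are supplied by Lemma \ref{lemm:poinc}, which lets us choose $u^\mu$ to satisfy them, and translation covariance is automatic in the relative-coordinate form. The $\hbar$-dependence is consistent because $\Delta^n=\mathcal{O}(\hbar)$ by property (iii) of Theorem \ref{thm:aMWI}; charge-conjugation invariance is inherited from the corresponding property of the anomaly; and the off-shell field equation is unaffected because it constrains VEVs with a $\phi$-argument while we alter only the $j^\mu$-slot.

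The quantitative crux -- the step I expect to be the main obstacle -- is the scaling-degree condition (xi). Using $\text{sd}\,\partial^a\delta(x_1-y,\dots,x_n-y)=4n+|a|$ together with the restriction $|a|\leq\omega(P_1,\dots,P_n)-1$ from (\ref{eq:tatue}) gives $\text{sd}\,u^\mu\leq 4n+\omega-1=\sum_{j}\dim P_j+3=\sum_j\dim P_j+\dim j^\mu$, which is precisely the bound demanded of $t(P_1,\dots,P_n,j^\mu)$. This margin closes only because the total-divergence form carries one derivative fewer than $d$, which is exactly why it was extracted in (\ref{eq:tatue}). The single genuine failure is symmetry: $u^\mu$ singles out the current slot, so full permutation symmetry of $T_{n+1}$ among its $n+1$ arguments need not survive, and this is why axiom (iii) is excluded here and its restoration -- together with the few obstructing cases -- is left to the following section.
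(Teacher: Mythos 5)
Your proposal is correct and follows essentially the same route as the paper's proof: recognize the shift (\ref{eq:renm}) as a finite renormalization of the form dictated by (\ref{eq:tatue}), verify the axioms and renormalization conditions one by one (Lorentz covariance and the $*$-structure via lemma \ref{lemm:poinc}, the scaling degree via the restriction $|a|\leq\omega-1$, which your explicit arithmetic $\text{sd}\,u^{\mu}\leq 4n+\omega-1=\sum_j\dim P_j+\dim j^{\mu}$ makes concrete where the paper cites the literature), and observe that reinserting the renormalized $t$ into (\ref{eq:start}) cancels the anomaly $d=\partial^y_{\mu}u^{\mu}$ in each order of the causal Wick expansion, with only the permutation symmetry of axiom (iii) possibly lost. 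The lone stray remark is that the field-equation terms of (\ref{eq:start}) do not vanish in the off-shell identity itself, but your operative point---that they involve VEVs $t(\dots,\phi)$, $t(\dots,\phi^*)$ untouched by the renormalization of $t(\dots,j^{\mu})$---is what the argument actually needs.
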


\begin{proof}
	Equation (\ref{eq:renm}) is a finite renormalization in the sense of definition \ref{def:finren} due to the particular form of $u^{\mu}$ in equation (\ref{eq:tatue}). That the basic axioms and renormalization conditions remain satisfied may be checked one by one (see \cite[chap. 5.2.2]{duetsch19}). We point out that Poincaré covariance and $*$-structure follow from lemma \ref{lemm:poinc} and scaling degree from the restriction on $a$ in equation (\ref{eq:tatue}). By putting the renormalized $t$ into equation (\ref{eq:start}) we see that -- after applying the derivative on the r.h.s -- the terms $d= \partial^y_{\mu} u^{\mu}$ cancel out the anomalies in each order of the causal Wick expansion, hence the MWI now holds true. 
\end{proof}

\subsection{Case distinction}
\label{sec:6case}

Proposition \ref{lemm:admren} leaves one possibility open: If $t( P_1(x_1),\dots,P_n(x_n),j^{\mu}(y))$ has further symmetries than the ones discussed so far, it is not yet clear whether we can find a choice of $u^{\mu}$ that has the same symmetries. The situation now is as follows. For a particular choice of renormalization of all the $T$-products, the anomaly maps $\Delta^n$ are given for all $n$ by equation (\ref{eq:sonnenschein}). By Wick expanding $\Delta^n$ as in (\ref{eq:anWick}) the maps $d(\dots)$ are uniquely fixed, and due to the uniqueness of (\ref{eq:locanom}), the expansion coefficients $\tilde{C}_a$ of $d(\dots)$ in (\ref{eq:d_exp}) are also determined -- but we don't know their particular values. Now $u^{\mu}$ is not unique. All we know is that some $u^{\mu}$ of the form in equation (\ref{eq:tatue}) exists satisfying $\partial_{\mu}^y u^{\mu} =d$. So there is some arbitrariness in the choice of the ${C}^{\mu}_a$ in (\ref{eq:tatue}) namely we can always add a term $\hat{u}^{\mu}$ satisfying $\partial_{\mu}^y \hat{u}^{\mu} = 0$. So the task is this: For given $P_1,...,P_n$ we need to find a $u^{\mu}$ (that is find coefficients ${C}^{\mu}_a$) with the same symmetries as $t(P_1,\dots,P_n,j^{\mu})$ such that $\partial_{\mu}^y u^{\mu} =d(P_1,...,P_n)$, where the form of $d(P_1,...,P_n)$ is given by (\ref{eq:d_exp}) but we don't know the particular values of the coefficients $\tilde{C}_a$. \\

The additional symmetries of $t(P_1,\dots,P_n,j^{\mu})$ we have not discussed yet are due to some of its arguments being equal. If $P_k = P_l$ for some $k,l\leq n$, the $t$-product will be symmetric in the arguments $x_k \leftrightarrow x_l$. Since $d(\dots)$ is symmetric in its arguments, this symmetry will carry over to the anomaly. Now in $d = \partial_{\mu}^y u^{\mu}$ there is only a derivative w.r.t $y$, so for any choice $u^{\mu}$ will satisfy symmetries of this kind. The same argument applies if $P_k = P_l$ involve any Lorentz indices (like $P^{\mu}_k = \partial^{\mu}\phi$). A different kind of symmetries occurs if there are factors of $j^{\nu_l}(x_l)$ among the $P_1,\dots,P_n$. Then $t(\dots)$ will be symmetric with respect to $(y,\mu)\leftrightarrow (x_l,\nu_l)$ for all $l$. In this case it is not clear whether a $u^{\mu}$ having this symmetries can always be found. These remaining cases are addressed by the following proposition. 

\begin{proposition}
	\label{prop:last}
	Let $n\in \mathbb{N}$ and $m\leq n$ be arbitrary. Consider polynomials $P_1,\dots,P_n$ where $P_l(x_l) = j^{\nu_l}(x_l)$ for $l\leq m$. Then we may find a $u^{\mu}$ as in \normalfont{(\ref{eq:tatue})} \textit{that is totally symmetric in $(y,\mu)\leftrightarrow (x_l,\nu_l)$ for $l\leq m$.} 
\end{proposition}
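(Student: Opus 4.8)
The plan is to pass to the momentum-space picture, where everything becomes finite linear algebra. Writing $\hat d$ and $\hat u^{\mu\nu_1\cdots\nu_m}$ for the Fourier transforms in the relative coordinates $x_i-y$ (with $P_l=j^{\nu_l}$ for $l\le m$, both tensors carry the extra indices $\nu_1,\dots,\nu_m$ by Lorentz covariance, Lemma~\ref{lemm:poinc}), each $\partial^a\delta$ turns into a monomial, so $\hat d$ and $\hat u^\mu$ are tensor-valued \emph{polynomials} in the momenta $k_1,\dots,k_n$. The defining relation $d=\partial^y_\mu u^\mu$ becomes the algebraic identity $P_\mu\hat u^\mu = i\,\hat d$ with $P:=\sum_i k_i$ the total momentum conjugate to the $y$-slot, while the integral identity~(\ref{eq:int0}), i.e.\ $\int dy\,d=0$, becomes $\hat d|_{P=0}=0$. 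The symmetry we must produce is invariance of $\hat u^\mu$ under the group $S_{m+1}$ permuting the $m+1$ current momentum--index pairs $(P,\mu),(k_1,\nu_1),\dots,(k_m,\nu_m)$, whereas Theorem~\ref{thm:aMWI} only guarantees $\hat d$ to be symmetric among the $m$ pairs $(k_l,\nu_l)$.

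First I would dispose of all but finitely many configurations by a degree count. Every basis element of $\mathscr{B}_{(\phi^*\phi)^2}$ has mass dimension at most $4$ and each current has dimension $3$, so the restriction~(\ref{eq:anom_restr}) gives $\omega(P_1,\dots,P_n)\le 4-m$. Moreover $\int dy\,d=0$ excludes a pure $\delta$-term, ruling out $\omega=0$, so a nonvanishing anomaly forces $\omega\ge 1$ and hence $m\le 3$. Charge conjugation narrows this further: the leading Wick contribution to $\hat u^\mu$ is the finite renormalization of $t(P_1,\dots,P_n,j^\mu)$, which contains $m+1$ currents, and by Furry's theorem (Corollary~\ref{corr:furry}) such vacuum expectation values vanish when $m+1$ is odd. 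Thus only $m\in\{1,3\}$ require genuine work, while $m=0$ is already covered by Proposition~\ref{lemm:admren}, and $m$ even or $m\ge 5$ produce no anomaly to symmetrise.

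For the surviving cases I would construct the symmetric potential explicitly. Starting from any $\hat u_0^\mu$ furnished by the Poincar\'e lemma~(\ref{eq:poinc}), I would use the gauge freedom emphasised in the text, namely that one may add any $\hat h^\mu$ with $P_\mu\hat h^\mu=0$ without changing $d=\partial_\mu^y u^\mu$. Lorentz covariance together with the degree bound $|a|\le\omega-1\le 3-m$ forces $\hat u^\mu$ into a short list of tensor structures assembled from the momenta and the metric, each carrying an undetermined constant coefficient, so imposing $S_{m+1}$-symmetry and the divergence condition simultaneously reduces to a finite, uniform linear system in those coefficients. The claim is that the $S_m$-symmetry of $\hat d$, the transversality inherited from conservation of the free currents, and the freedom to add transverse $\hat h^\mu$ together make this system solvable in each of the cases $m=1$ and $m=3$.

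The step I expect to be the main obstacle is precisely this solvability. Naive averaging of $\hat u_0^\mu$ over the $m+1$ currents does not work, because contracting $P$ with a permuted copy of $\hat u_0^\mu$ yields a ``wrong-slot'' contraction instead of $\hat d$, so the symmetrised potential fails the divergence condition by a nonzero defect. The heart of the argument is to show that this defect is itself longitudinal in $P$ and therefore can be removed by an admissible transverse correction $\hat h^\mu$; equivalently, that the Lorentz-covariant ansatz in the cases $m=1,3$ carries enough free coefficients to satisfy $P_\mu\hat u^\mu=i\hat d$ and full current-permutation symmetry at once. Finally I would check that the resulting $u^\mu$ still respects the scaling-degree form~(\ref{eq:tatue}) and the $*$-structure, so that it is an admissible finite renormalization in the sense of Proposition~\ref{lemm:admren} and restores the symmetry axiom~(iii) of Definition~\ref{def:Tn}, completing the proof of Theorem~\ref{thm:bachelor}.
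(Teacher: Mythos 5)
Your plan reproduces the paper's strategy in all but notation: the bound $\omega(P_1,\dots,P_n)\le 4-m$ with $\int dy\,d=0$ excluding $\omega\le 0$, Furry's theorem for an even number of currents, and then a Lorentz-covariant ansatz for $u^\mu$ with undetermined coefficients, exploited together with the freedom to add divergence-free terms — this is exactly the case distinction of table \ref{tab:0} and cases I--III, with your momentum-space dictionary ($P_\mu\hat u^\mu = i\hat d$, $\hat d|_{P=0}=0$) an equivalent recasting of (\ref{eq:tatue}) and the Poincaré lemma. Two smaller remarks: configurations such as $(L,\dots,L,\phi^2\phi^*;j^\nu)$ have arguments that are not $\beta_C$-eigenvectors, so Furry's theorem in the form of corollary \ref{corr:furry} does not apply to them; the paper disposes of lines 5 and 6 of table \ref{tab:0} by charge number conservation (the contraposition of lemma \ref{lemm:pups}), an ingredient absent from your reduction. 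Your cases $m=3$ and $m=1$ with a $\phi^*\phi$ insertion do reduce to trivial linear algebra exactly as in cases II and III.

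The genuine gap is the one you yourself flag: solvability in the remaining case ($m=1$, all other arguments $L$, $\omega=3$, two derivatives allowed), which you state as ``the claim'' and support only by the conjecture that the symmetrization defect is longitudinal in $P$. This conjecture is \emph{false} on the strength of the inputs you list ($S_m$-symmetry of $\hat d$, $\hat d|_{P=0}=0$, Lorentz covariance, scaling). Concretely, in the basis (\ref{eq:basis2}) the group-(3) structures produce anomalies
\begin{equation*}
	\partial_\mu^y\Big(C_1\, g^{\mu\nu}{\dalemb}_{2} + C_2\, g^{\mu\nu}{\dalemb}_{y} + C_3\, \partial_{2}^{\mu}\partial_{2}^{\nu} + C_4\, \partial_{y}^{\mu}\partial_{y}^{\nu}\Big)\,\delta(\dots) \; ,
\end{equation*}
and one checks that the divergences of all $(y,\mu)\leftrightarrow(x_2,\nu)$-symmetric potentials, together with the transverse term $\partial_y^\mu\partial_y^\nu - g^{\mu\nu}{\dalemb}_y$, only span the subspace $C_1 + C_3 = C_2 + C_4$; for generic coefficients the defect is \emph{not} removable by a transverse $\hat h^\mu$. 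The missing relation $C_1 = C_2 - C_3 + C_4$ is precisely what the paper extracts by applying $\partial_\nu^{x_2}$ to both sides of (\ref{eq:case1}) and using that the left-hand side is then symmetric under $x_2\leftrightarrow y$ — an input coming from the symmetry of $t(L,\dots,L,j^\nu,j^\mu)$ in its two current pairs combined with the explicit $\theta_\mu$ contact term, not from the properties of $d$ alone. Your appeal to ``transversality inherited from conservation of the free currents'' gestures at this, but as stated it is circular (exact conservation of the interacting current is the MWI being proven) and ignores the contact terms; a correct implementation must contract the anomaly equation itself in the $x_2$-slot, as the paper does. Until that constraint is derived and the resulting linear system solved explicitly, the heart of the proposition remains unproven.
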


	
\begin{proof}
	
We will prove the statement by a case distinction, that is by finding all possible combinations of $P_1,\dots,P_n$ containing some $j^{\nu_1},\dots j^{\nu_m}$ with $m \leq n$ that lead to a non-zero anomaly $d^{\nu_1\dots\nu_m}$, and then work out explicitly a suitable $u^{\nu_1\dots\nu_m\mu}$ with the required symmetries.   \\

Consider equation (\ref{eq:d_exp}) for the anomaly $d$. For a choice of $P_1,\dots,P_n$, only terms with $|a| \leq \omega(P_1,\dots,P_n)$ contribute to $d$. So the higher the sum of the mass dimensions of $P_1,\dots,P_n$, the more contributions to $d$ we get. We have the following mass dimensions
\begin{equation*}
	\text{dim}\, \phi = \text{dim}\, \phi^* = 1 \; ,\quad
	\text{dim}\,j^{\mu} = 3\; ,\quad \; \text{dim}\, L_{\text{int}} = 4 \; .
\end{equation*}
The polynomial with the highest mass dimension in $\mathscr{P}_0$ is the interaction $L = (\phi^* \phi)^2$. We consider choices of $P_1,...,P_n$ where at least one $P_i$ equals $j^{\mu}$ and start with the choice of $P_1,\dots,P_n$ with the highest mass dimension -- that is all $P_l$ equal $L$ except for one $j^{\mu}$. Then we reduce the mass dimension by dropping factors of $\phi$ and $\phi^*$ from some of the $P_1,\dots,P_n$, until we get no more contributions to $d$, that is until $\omega(P_1,\dots,P_n)\leq 0$. To rule out particular cases of $P_1,\dots,P_n$ we will use Furry's theorem \ref{thm:gfurry} and charge number conservation (the contraposition of lemma \ref{lemm:pups}). This leads to all the possibilities listed in table \ref{tab:0}. 
	
\begin{table}[h!]
		\centering
		\renewcommand{\arraystretch}{1.3}	
		\begin{tabular}{c|l|c|cc}
	\#&\multicolumn{1}{c|}{$P_1,\dots,P_n$} 
		& $\omega(P_1,\dots,P_n)$ & $d$ vanishes due to&\\\cline{1-4}
			
	1&$\underbrace{L,\dots,L}_{n-1},j^{\nu}$ 
		& 3			&  					& \textbf{case I}\\
	2&$\underbrace{L,\dots,L}_{n-2},j^{\nu_1},j^{\nu_2}$ 
		&  			& Furry 			&\\
	3&$\underbrace{L,\dots,L}_{n-3},j^{\nu_1},j^{\nu_2},j^{\nu_3}$ 
		& 1			&  					& \textbf{case II} \\
	4&$\underbrace{L,\dots,L}_{n-4},j^{\nu_1},\dots j^{\nu_4}$ 
		& 0 		& 					&\\
	5&$\underbrace{L,\dots,L}_{n-2},\phi^* \phi^2,j^{\nu}$ 
		&  			& charge number 	&\\
	6&$\underbrace{L,\dots,L}_{n-2},{\phi^*}^2\phi,j^{\nu}$ 
		&  			& charge number 	&\\
	7&$\underbrace{L,\dots,L}_{n-2},\phi^* \phi,j^{\nu}$ 
		& 1 		&  					& \textbf{case III} \\
	8&$\underbrace{L,\dots,L}_{n-3},\phi^2 \phi^*,\phi \phi^*,j^{\nu}$ 
		& 0			& 					&\\
	&\multicolumn{1}{c|}{\vdots}
		& $\leq$ 0 	& 					& 

		\end{tabular}
		\caption{Several choices of $P_1,\dots,P_n$ are considered that may lead to non zero contributions to the anomaly $d$. Only the three cases I-III need to be further discussed.}
	\label{tab:0}
\end{table}

\begin{lemma}
	The contribution in line \normalfont{2}\textit{ of table }\normalfont{\ref{tab:0}} \textit{to the anomaly $d$ vanishes due to Furry's theorem. }
\end{lemma}

\begin{proof}
	Taking the VEV on both sides of \ref{eq:start} yields
	\begin{align}
	\label{eq:MWI_VEV}
	\frac{\hbar^n}{i^n}
	d&\left(P_1\dots P_n\right) 
	= 
	- \partial_{\mu}^y \, t\big( P_1\dots P_n, j^{\mu}(y) \big) \notag\\
	&+\hbar\sum_{l=1}^{n}\delta(y-x_l)\, 
	t\big( P_1 ,\dots,(\theta P_l)(x_l),\dots,P_n \big)
	\notag\\
	& -\hbar\,\partial_y^{\mu} \Big(
	\sum_{l=1}^{n}\delta(y-x_l)\,
	t\big( P_1,\dots,(\theta_{\mu} P_l)(x_l),\dots, P_n(x_n) \big)\Big)  \; ,
	\end{align}
	where we have omitted most of the arguments. We calculate 
	\begin{equation}
		\label{eq:thetas}
		\theta L = 0 \;,\quad \theta j^{\nu} = 0 \;,\quad \theta_{\mu} L = 0 \;,\quad
		\theta_{\mu} j^{\nu} = -2 i\, \delta_{\mu}^{\nu}\, \phi^* \phi \; .
	\end{equation}
	Now putting in $P_1,\dots,P_n$ as in line 2 of table \ref{tab:0} and using these relations leaves on the r.h.s. only VEVs with an odd number of currents $j^{\nu_l}$ and all other arguments either $L$ or $\phi^* \phi$. Both $\beta_C(L) = L$ and $\beta_C(\phi^* \phi) = \phi^* \phi$ are even under charge number conjugation, hence we may apply Furry's theorem in its general form of equation (\ref{eq:gfurry}) to show that all contributions to $d$ vanish. 
\end{proof}

\begin{lemma}
	The contributions in lines \normalfont{5}\textit{ and }\normalfont{6}\textit{ of table }\normalfont{\ref{tab:0}}\textit{ to the anomaly $d$ vanish due to charge number conservation.  }
\end{lemma}

\begin{proof}
	The $P_1,\dots P_n$ in these lines of the table include one term like $P_j= \phi^2 \phi^*$ for which $a_j - b_j \neq 0$ and all other terms $P_i$ have $a_i-b_i = 0$. So for each of the $P_1,\dots P_n$ we have $\sum_{j=1}^{n} (a_j-b_j) \neq 0$. Due to equations (\ref{eq:thetas}) and the additional relations 
	\begin{equation*}
		\theta_{\mu} \phi^2 \phi^* = \theta_{\mu} \phi (\phi^*)^2 = 0 \; ,
	\end{equation*}
	this statements holds true for all $t$-products appearing on the r.h.s of equation (\ref{eq:MWI_VEV}). By charge number conservation in (the contraposition of) lemma \ref{lemm:pups} these $t$-products vanish, hence there is no contribution to $d$. 
\end{proof}

We are left with the three cases I-III in table \ref{tab:0} and now want to explicitly find suitable renormalizations $u^{\mu}$ for each of them.\footnote{We use $u^{\mu}$ or simply $u$ to refer to a general renormalization, although the specific $u$ appearing in the following will in general have more than one Lorentz index.} We begin with the latter two and then turn to the most involved first case. \\

\textbf{Case II.} Writing down equation (\ref{eq:MWI_VEV}) for case II with $d = \partial_{\mu}^y u^{\mu}$ and labeling the arguments as follows yields (we absorb the factors of $\hbar$ and $i$ into the constants and omit the arguments of the $C_a$)
\begin{align}
	\label{eq:case2}
	&- \partial_{\mu}^y \, t\big(\overbrace{L,\dots,
	L}^{m:=n-3},j^{\nu_1},j^{\nu_2},j^{\nu_3},j^{\mu} \big) (x_{11}-y,\dots,x_{1m}-y,x_{21}-y,\dots,x_{23}-y) \notag\\
	&+2i\hbar\,\partial_y^{\mu} \Bigg( \sum_{l=1}^{3}\delta(y-x_{2l})\,\delta_{\mu}^{\nu_l}
	t\big(L,\dots,L,j^{\nu_1},\phi^* \phi,j^{\nu_3}\big) (x_{11}-x_{23},\dots)\Bigg)  \; , \notag\\
	&\qquad= \partial_{\mu}^y  C_{\text{II}}^{\mu\nu_1\nu_2\nu_3} \, 
	\prod_{l,j=1}\delta(x_{lj}-y) \; .
\end{align}
Here $\omega =1$, hence $|a|= 0$ and there are no derivatives contained in $u^{\mu}$. We know by lemma \ref{lemm:poinc} that $C_{\text{II}}^{\mu\nu_1\nu_2\nu_3}$ could be any constant Lorentz invariant tensor of rank $4$. We write down its most general form as
\begin{equation*}
	C_{\text{II}}^{\mu\nu_1\nu_2\nu_3} = C_{\text{II}}^1\cdot g^{\mu\nu_1}g^{\nu_2\nu_3} + C_{\text{II}}^2\cdot g^{\mu\nu_2}g^{\nu_1\nu_3} + C_{\text{II}}^3\cdot g^{\mu\nu_3}g^{\nu_1\nu_2} \; , \qquad C_{\text{II}}^k \in \mathbb{C} \; ,
\end{equation*}
where we have used that any constant Lorentz invariant tensor (or tensor density) is composed of the metric $g^{\mu\nu}$ and the totally antisymmetric Levi-Civita symbol $\epsilon^{\alpha\beta\gamma\delta}$ \linebreak \cite[chap. 2]{may18}. In equation (\ref{eq:case2}) we see that the l.h.s is symmetric under permutations of $({\nu}_1,x_{21}) \leftrightarrow({\nu}_2,x_{22})\leftrightarrow ({\nu}_3,x_{23})$, hence the right hand side must be so too. The product of $\delta$-distributions is symmetric in all its arguments and the derivative w.r.t to $y$ is not affected by this permutation, so $C_{\text{II}}^{\mu\nu_1\nu_2\nu_3}$ must be symmetric under permutation of all its Lorentz indices. This lets us rule out the Levi-Civita symbol and conclude that $C_{\text{II}}^1 = C_{\text{II}}^2 = C_{\text{II}}^3 := C_{\text{II}}$. Hence any possibly occurring $u_{\text{II}}$ is of the form 
\begin{equation*}
	u^{\mu\nu_1\nu_2\nu_3}_{\text{II}} = C_{\text{II}} \Big(g^{\mu\nu_1}g^{\nu_2\nu_3} + g^{\mu\nu_2}g^{\nu_1\nu_3} +  g^{\mu\nu_3}g^{\nu_1\nu_2}\Big)\cdot\prod_{l,j=1}\delta(x_{lj}-y) \; .
\end{equation*}
Now this expression is invariant under the required permutation symmetry in the arguments $({\nu}_1,x_{21}) \leftrightarrow({\nu}_2,x_{22})\leftrightarrow ({\nu}_3,x_{23})\leftrightarrow(\mu,y)$ of the $t$-product, hence it is an admissible renormalization. \\

\textbf{Case III.} In this case, equation (\ref{eq:MWI_VEV}) yields the following expression for the anomaly, where $P_{n-1}(z)=\phi^* \phi(z)$: 
\begin{align}
	\label{eq:case3}
	&- \partial_{\mu}^y \, t\big(\overbrace{L,\dots,
	L}^{m:=n-2}\phi^*\phi,j^{\nu},j^{\mu} \big) (x_{11}-y,\dots,x_{1m}-y,z-y,x_2-y)\notag\\
	&+2i\hbar\,\partial_y^{\nu} \Big(\delta(y-x_2)\,
	t\big(L,\dots,L,\phi^* \phi,\phi^* \phi\big) (x_{11}-x_2,\dots,x_{1m}-x_2, z-x_2)\Big) \notag\\
	&= \partial_{\mu}^y\,  C_{\text{III}}^{\mu\nu} \, \delta(x_{11}-y,\dots,x_{1m}-y,z-y,x_2-y)
\end{align}
Again $|a|=0$ and $u^{\mu}$ contains no derivatives. In this case, $C_{\text{III}}^{\mu\nu}$ is given by the most general Lorentz invariant tensor of rank $2$, which is simply the metric. We get 
\begin{equation*}
	u^{\mu\nu}_{\text{III}} = C_{\text{III}}\, g^{\mu\nu}\cdot \delta(x_{11}-y,\dots,x_{1m}-y,z-y,x_2-y) \; ,\qquad C_{\text{III}} \in \mathbb{C} \; .
\end{equation*}
This is symmetric in $({\nu},x_2)\leftrightarrow(\mu,y)$ and yields an admissible renormalization.  \\

\textbf{Case I.} From equation (\ref{eq:MWI_VEV}) we get

\begin{align}	
	\label{eq:case1}
	&- \partial_{\mu}^y \, t\big(\overbrace{L,\dots,
	L}^{m:=n-1},j^{\nu},j^{\mu} \big) (x_{11}-y,\dots,x_{1m}-y,x_2-y)  \notag\\
	&+2i\hbar\,\partial_y^{\nu} \Big(\delta(y-x_2)\,
	t\big(L,\dots,L,\phi^* \phi\big) (x_{11}-x_2,\dots,x_{1m}-x_2)\Big)  \; , \notag\\
	&\qquad= \partial_{\mu}^y \sum_{|a|\leq 2} C_{\text{I},a}^{\mu\nu} \,\partial^a \delta(x_{11}-y,\dots,x_{1m}-y,x_2-y) \; .
\end{align}
Since the VEV $d$ of the anomaly depends only on relative coordinates, one of the variables $x_{11},\dots,x_{1m},x_2,y$ is dependent. Using the chain rule we can express all derivatives w.r.t $y$ as 
\begin{equation*}
	\partial^{\mu}_y = - \Big(\partial^{\mu}_{2} + \sum_{i} \partial^{\mu}_{i} \Big) \; , \qquad \text{where}\; i \in \{11,...,1m\} \; , \quad \partial_{k} \equiv \partial_{x_k} \; ,
\end{equation*}
and eliminate them from equation (\ref{eq:case1}). The expression for $u_{\text{I}}$ must be a Lorentz tensor of rank $2$ containing at most two derivatives. There is no way to write down such a tensor containing only one derivative. The contributions with two derivatives can be either contracted with each other or have both free indices. Terms involving $\epsilon^{\alpha\beta\gamma\delta}$ will be ruled out later due to their antisymmetry. So we write down the most general form of $u_{\text{I}}$ as 
\begin{align}
	\label{eq:hio}
	u_{\text{I}}^{\mu\nu} 
	= \Big( g^{\mu\nu}&\sum_{i,j} a_{ij}\, \partial_i^{\alpha}\partial_{j\alpha} +  
	\sum_{i,j} b_{ij}\, \partial_i^{\mu}\partial_j^{\nu}
	+ g^{\mu\nu}c_0 \Big)\,
	\delta(x_{11}-y,\dots,x_{1m}-y,x_2-y) \; , \notag \\
	\text{with} \qquad 
	&a_{ij}, b_{ij}, c_0 \in \mathbb{C} \; ,\qquad 
	i,j \in \{11,...,1m,x_2\} \; .
\end{align}
These contributions can be further restricted by symmetries they have to satisfy. The l.h.s. of equation (\ref{eq:case1}) is symmetric under the permutation of any of the $x_{11},\dots,x_{1m}$ and the derivative $\partial_{\mu}^{y}$ does not affect this, so $u_{\text{I}}^{\mu\nu}$ does have this symmetry too. We will now write down all possible contributions with two derivatives to $u_{\text{I}}^{\mu\nu}$ satisfying this symmetry by distinguishing the cases of the two derivatives acting (i) both on the same $x_{1i}$ (ii) on $x_{1i}$ and $x_{1j}$ for $i\neq j$ (iii) on some $x_{1i}$, and on $x_2$, (iv) both on $x_{2}$ as follows,
\begin{align}
	\label{eq:basis1}
	&\text{(i)} && g^{\mu\nu}\sum_{i} {\dalemb}_{i}
	&&\sum_{k} \partial_k^{\mu}\partial_k^{\nu}
	\notag\\
	&\text{(ii)} && g^{\mu\nu}\sum_{i\neq j} \partial_i^{\alpha}\partial_{j\alpha}
	&&\sum_{k\neq l} \partial_k^{\mu}\partial_l^{\nu}
	\notag\\
	&\text{(iii)} && g^{\mu\nu}\partial_{2\alpha}\sum_{i} \partial_{i}^{\alpha}
	&& \partial_{2}^{\mu}\sum_{k}\partial_k^{\nu}
	&& \partial_{2}^{\nu}\sum_{k}\partial_k^{\mu}
	\notag\\
	&\text{(iv)}  && g^{\mu\nu}\partial_{2}^{\alpha} \partial_{2\alpha}
	&&\partial_{2}^{\mu} \partial_{2}^{\nu} \; ,
\end{align}
where now $i,j,k,l \in \{11,...,1m\}$. The first and second columns correspond to the first and the second term in equation (\ref{eq:hio}), for case (iii) there are two contributions from the second term. This covers all possible cases, and since all involved variables are independent, the $9$ obtained expressions are easily seen to be linearly independent. So we may say that these $9$ objects -- each one multiplied by the $\delta$-distribution in (\ref{eq:hio}) -- form a basis for the vector space of all possible $u_{\text{I}}^{\mu\nu}$ with two derivatives that are symmetric in the $x_{11},\dots,x_{1m}$. We now give a different set of $9$ terms arranged in three groups (1)-(3) that are better suited for the following computations.
\begin{align}
	\label{eq:basis2}
	&\text{(1)}
	&& g^{\mu\nu}\sum_{i} {\dalemb}_{i}
	&&\sum_{i} \partial_i^{\mu}\partial_i^{\nu}
\notag\\
&\text{(2)}
	&& \partial_{2}^{\mu} \partial_{y}^{\nu}
	&& \partial_{y}^{\mu} \partial_{2}^{\nu}
	&& g^{\mu\nu} \partial_{y}^{\alpha} \partial_{2\alpha}
\notag\\[0.9em]
	&\text{(3)}
	&& g^{\mu\nu} {\dalemb}_{2}
	&& g^{\mu\nu} {\dalemb}_{y}
	&& \partial_{2}^{\mu} \partial_{2}^{\nu}
	&& \partial_{y}^{\mu} \partial_{y}^{\nu} \; ,
\end{align}
where again $i \in \{11,...,1m\}$. By using
\begin{equation*}
	\sum_i \partial_i = -\partial_2-\partial_y  \qquad 
	\text{and} \qquad \sum_{i\neq j} \partial_i \partial_j = \Big(\sum_i \partial_i\Big)^2 - \sum_i \partial_i \partial_i \;,
\end{equation*}
(where we omit all Lorentz indices) one may express all elements of the old basis (\ref{eq:basis1}) as linear combinations of the new terms (\ref{eq:basis2}), hence they also form a basis of the same space: Every possible $u_{\text{I}}^{\mu\nu}$ with the mentioned symmetry is a linear combination of these terms. Each of the groups (1)-(3) in equation (\ref{eq:basis2}) transforms separately under the symmetry in $(y,\mu)\leftrightarrow(x_2,\nu)$ required from $u^{\mu}$, so they can be discussed independently. The terms in group (1) are invariant under this transformation, so every contribution from these terms to the anomaly will be an allowed renormalization for the $t$-product. The same holds true for all terms in group (2). So we only need to discuss the remaining four terms in group (3). 
\\

Now consider (\ref{eq:case1}) and apply $\partial_{\nu}^2$ to both sides of the equation. The l.h.s. then becomes symmetric under $x_2 \leftrightarrow y$, so the same must hold true for the r.h.s., that is $\partial_{\nu}^2\partial_{\mu}^y u_{\text{I}}^{\mu\nu}$ is symmetric in $x_2 \leftrightarrow y$.  We will use this as a condition to restrict further on the possible contributions to $u_{\text{I}}^{\mu\nu}$ from group (3), by requiring that the antisymmetric part of every linear combination of the terms in (3) must vanish after applying the derivatives $\partial_{\nu}^2\partial_{\mu}^y$. Applying them yields the four terms 
\begin{equation*}
	\lambda\cdot {\dalemb}_2\; , \qquad 
	\lambda\cdot {\dalemb}_y\; , \qquad
	\lambda\cdot {\dalemb}_2\; , \qquad 
	\lambda\cdot {\dalemb}_y\; , \qquad \text{where}\;
	\lambda := \partial_{y}^{\alpha}\partial_{2\alpha} \; .
\end{equation*}
The condition on the possible linear combinations of these terms for $C_1,\dots,C_4 \in \mathbb{C}$ reads 
\begin{align*}
	0 &\overset{!}{=} \lambda \, \Big( C_1 \cdot({\dalemb}_2 -{\dalemb}_y)
	+ C_2 \cdot({\dalemb}_y -{\dalemb}_2) + C_3 \cdot({\dalemb}_2 -{\dalemb}_y) + C_4 \cdot({\dalemb}_y -{\dalemb}_2) \Big) 
	\notag\\
	&= \lambda \,\Big( {\dalemb}_2 \cdot (C_1-C_2+C_3-C_4) + {\dalemb}_y \cdot (-C_1+C_2-C_3+C_4) \Big) \qquad \forall\, x_2,y
	\notag\\
	&\Rightarrow C_1 = C_2 - C_3 + C_4 \; ,
\end{align*}
where we have antisymmetrized in $x_2 \leftrightarrow y$. So after replacing $C_1$ by the other $C_i$ we get that the most general remaining anomaly will be a linear combination of the form 
\begin{align*}
	d^{\nu} (\dots) = \partial_{\mu}^y \Big( 
	C_2\cdot (g^{\mu\nu} {\dalemb}_{y} + g^{\mu\nu} {\dalemb}_{2}) &+ 
	C_3\cdot (\partial_{2}^{\mu} \partial_{2}^{\nu} - 
	g^{\mu\nu} {\dalemb}_{2})  \notag\\
	&+ C_4\cdot (\partial_{y}^{\mu} \partial_{y}^{\nu} +g^{\mu\nu} {\dalemb}_{2})\,
	\Big) \cdot \delta(\dots)\; .
\end{align*}
The term in $C_2(\dots)$ is already symmetric in $(y,\mu)\leftrightarrow(x_2,\nu)$. After applying the derivative $\partial^{y}_{\mu}$ outside the large bracket, the term $C_4(\dots)$ yields the same contribution to the anomaly as $C_2(\dots)$. So we can renormalize away anomalies coming from $C_4(\dots)$ by using $C_2(\dots)$, which is admissible. The contribution from $C_3(\dots)$ is not symmetric in $(y,\mu)\leftrightarrow(x_2,\nu)$. To symmetrize it, we would need to add a term proportional to 
\begin{equation*}
	\partial_{y}^{\mu} \partial_{y}^{\nu} - 
	g^{\mu\nu} {\dalemb}_{y} \; ,
\end{equation*}
which vanishes after applying the derivative $\partial^{y}_{\mu}$. Hence we can renormalize away terms of the form $C_3(\dots)$ by their symmetrized version. So all possible anomalies can be removed by admissible renormalizations respecting the symmetries of $t$. This completes the proof of proposition \ref{prop:last} and hence of theorem \ref{thm:bachelor}. 
\end{proof}

\newpage

	\section{Examples of particular Ward Identities}
	\label{sec:examples}
	
We have shown that the complex scalar MWI (\ref{eq:compMWI}) can be satisfied in the quantum theory. In this last section we give some particular examples of Ward Identities following from (\ref{eq:compMWI}), by choosing particular polynomials $P_1,...,P_n \in \mathscr{P}_{(\phi^*\phi)^2}$ and computing the corresponding unrenormalized $t$-products on $\mathscr{D}(\check{\mathbb{M}}^n)$ by equation (\ref{eq:unrenT}). The statement of this thesis is then that these expressions can be renormalized satisfying all conditions in a way that the identity remains true. \\

We will make use of Feynman diagrams to represent the results as follows:  
\begin{align}
	\omega_0 \big(T(\phi^*(x_1),\phi(x_2))\big) \quad
	&= \hbar\cdot \Delta_F(x_1-x_2) \quad\;=:\quad \hbar \; \cdot \;
	\parbox[h][0.002\linewidth][c]{0.2\linewidth}{
		\includegraphics[width=\linewidth]{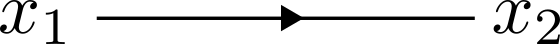}
	} \notag\\
	\omega_0 \big(T(\partial^{\mu}\phi^*(x_1),\phi(x_2))\big) 
	&= \hbar\cdot\partial_{x_1}^{\mu}\Delta_F(x_1-x_2) =:\quad \hbar \; \cdot \;
	\parbox[h][0.002\linewidth][c]{0.2\linewidth}{
		\includegraphics[width=\linewidth]{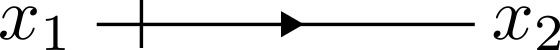}
	}
\end{align}

Exchanging the arguments of $\phi$ and $\phi^*$ in a contraction leaves the analytic expression invariant, since the Feynman propagator $\Delta_F(x)$ is symmetric under $x \mapsto -x$. So we will keep arrows on lines where they contain any information about which of the two adjacent vertices belonged to $\phi$ and $\phi^*$ and drop them everywhere else, that is where the diagrams represents a sum over contributions with arrows in different directions. \\


\textbf{Example 1:} Let
\begin{equation}
	\label{eq:bajo_komm}
	P_1 = \phi^2 \qquad \text{and} \qquad P_2 = (\phi^*)^2 \; .
\end{equation}
Only the derivative $\theta$ contributes to the MWI, since there are no derivatives present in $P_1,P_n$, so applying $\theta_{\mu}$ always yields zero. The MWI reads 
\begin{align}
	\partial_{\mu}^y t \big(\phi^2,(\phi^*)^2,j^{\mu}\big) &(x_1-y,x_2-y) \notag\\
	&= 2 \hbar \Big( \delta(y-x_1) - \delta(y-x_2) \Big) \cdot 
	t  \big(\phi^2,(\phi^*)^2\big) (x_1-x_2) \; .
\end{align}
Computing the $t$-products amounts to finding all possibilities of completely contracting the terms in equations (\ref{eq:bajo_komm}) with each other and finding the combinatorial factors -- that is the number of contraction schemes that lead to the same diagram. The result can be represented diagrammatically as \\

\begin{align}
	i \partial_{\mu}^y\; 
	\Bigg(\quad
	\parbox[h][0.10\linewidth][c]{0.2\linewidth}{
		\includegraphics[width=\linewidth]{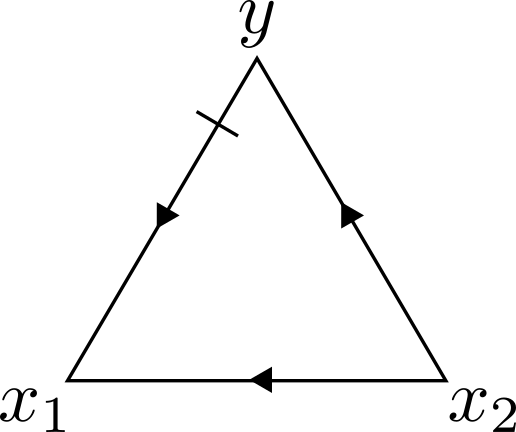}
	}
	\quad &-\quad
	\parbox[h][0.10\linewidth][c]{0.2\linewidth}{
		\includegraphics[width=\linewidth]{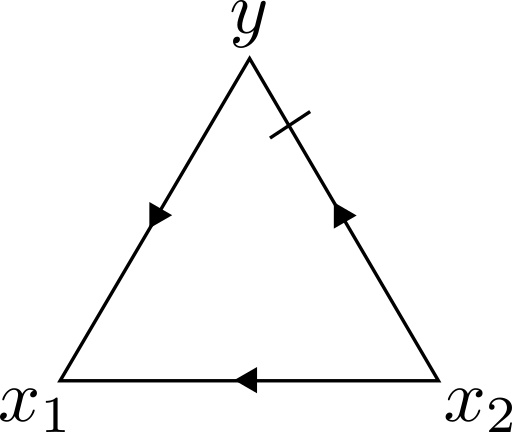}
	}
	\quad\Bigg) \notag\\
	\notag \\ \notag\\
	= \delta(y-x_1) \;\cdot\;
	\parbox[h][0.10\linewidth][c]{0.2\linewidth}{
		\includegraphics[width=\linewidth]{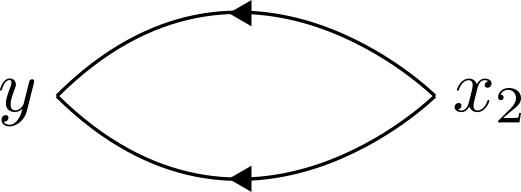}
	} 
	\quad&-\quad 
	\delta(y-x_2) \;\cdot\;
	\parbox[h][0.10\linewidth][c]{0.2\linewidth}{
		\includegraphics[width=\linewidth]{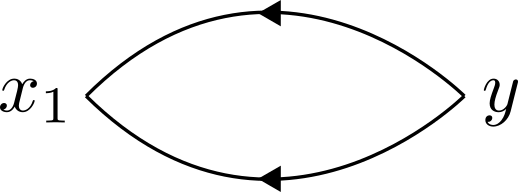}
	} \quad .
	\notag
\end{align}


\textbf{Example 2:} Let
\begin{equation}
	P_1 = \phi^* \phi^2 \qquad \text{and} \qquad P_2 = \phi\, (\phi^*)^2 \; .
\end{equation}
Again, no derivatives are present among the $P_i$. The corresponding MWI is
\begin{align}
	\partial_{\mu}^y t \big(\phi^* \phi^2,\phi\, (\phi^*)^2,j^{\mu}\big) &(x_1-y,x_2-y) \notag\\
	&=  \hbar\, \Big( \delta(y-x_1) - \delta(y-x_2) \Big) \cdot 
t  \big(\phi^* \phi^2,\phi\, (\phi^*)^2\big) (x_1-x_2) \; ,
\end{align}
which gives the following diagrams (in the first line we drop the arrows):

\begin{align}
	\notag\\
	i \partial_{\mu}^y \; 
	\Bigg(\quad
	\parbox[h][0.10\linewidth][c]{0.2\linewidth}{
		\includegraphics[width=\linewidth]{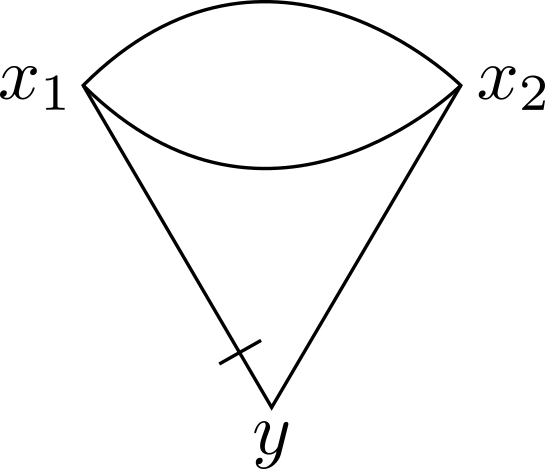}
	}
	\quad &-\quad
	\parbox[h][0.10\linewidth][c]{0.2\linewidth}{
		\includegraphics[width=\linewidth]{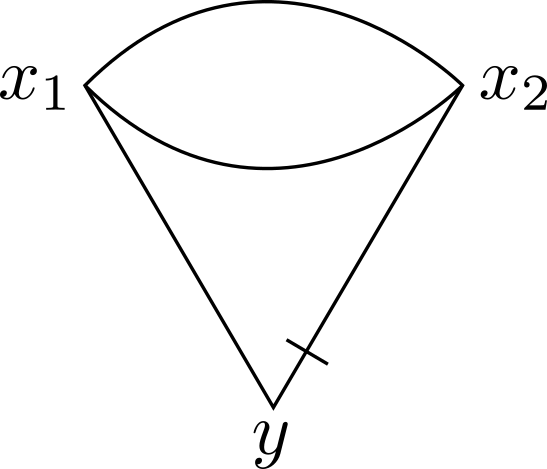}
	}
	\quad\Bigg) \notag\\
	\notag \\
	= \delta(y-x_1) \;\cdot\;
	\parbox[h][0.10\linewidth][c]{0.2\linewidth}{
		\includegraphics[width=\linewidth]{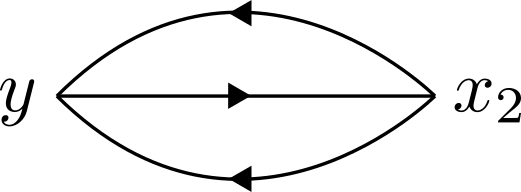}
	} 
	\quad&-\quad 
	\delta(y-x_2) \;\cdot\;
	\parbox[h][0.10\linewidth][c]{0.2\linewidth}{
		\includegraphics[width=\linewidth]{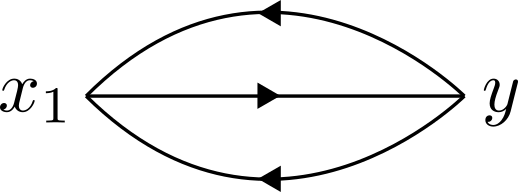}
	} \quad . 
	\notag
\end{align}


\textbf{Example 3:} Let
\begin{equation}
P_1 = \phi^* \phi \qquad \text{and} \qquad P_2 = j^{\nu} \; .
\end{equation}
In this case, there are derivatives in $P_2$ that via $\theta_{\mu}$ contribute to the total divergence. Since both $P_1$ and $P_2$ have total charge number zero, this time the contribution from $\theta$ vanishes. We get 
\begin{align}
	\partial_{\mu}^y t \big(\phi^* \phi, j^{\nu},j^{\mu}\big) &(x_1-y,x_2-y) \notag\\
	&= 2\,i \hbar\, \partial_y^{\nu} \delta(y-x_2) \cdot 
	t  \big(\phi^* \phi,\phi^* \phi\big) (x_1-x_2)
\end{align}
which can diagramatically be expressed as\\

\begin{align}
	\partial_{\mu}^y \; 
	\Bigg(\quad&
	\parbox[h][0.10\linewidth][c]{0.2\linewidth}{
		\includegraphics[width=\linewidth]{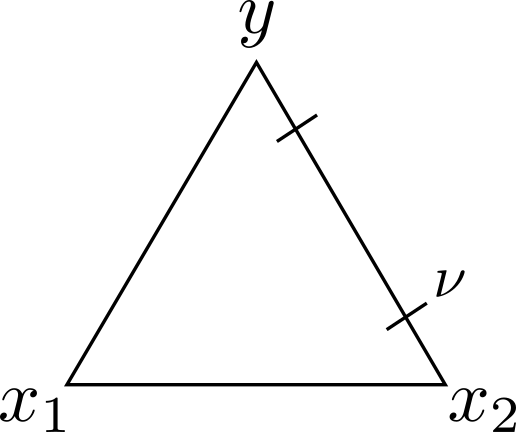}
	}
	\quad + \quad
	\parbox[h][0.10\linewidth][c]{0.2\linewidth}{
		\includegraphics[width=\linewidth]{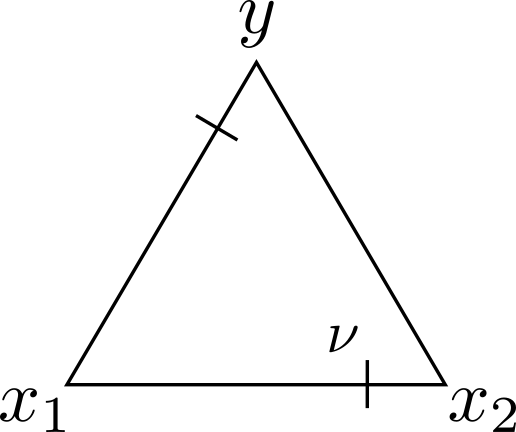}
	}
	\quad - \quad
	\parbox[h][0.10\linewidth][c]{0.2\linewidth}{
		\includegraphics[width=\linewidth]{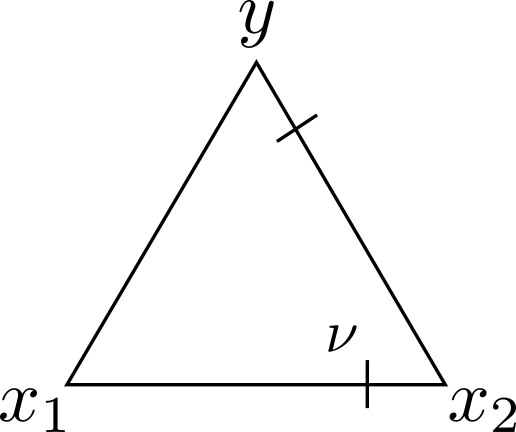}
	}
	\notag\\\notag\\\notag\\
	\quad - \quad&
	\parbox[h][0.10\linewidth][c]{0.2\linewidth}{
		\includegraphics[width=\linewidth]{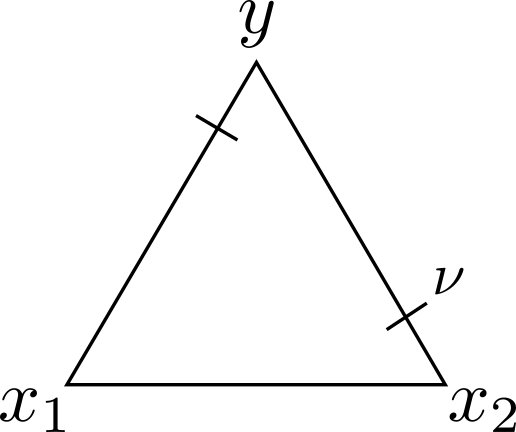}
	}
	\quad\Bigg)
	\qquad = \qquad 
	i \,\partial_y^{\nu} \; \delta(y-x_2) \;\cdot\;
	\parbox[h][0.10\linewidth][c]{0.2\linewidth}{
		\includegraphics[width=\linewidth]{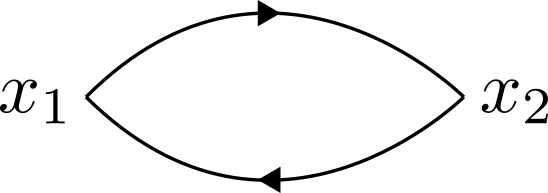}
	} \quad \; .
	\notag
\end{align}

\vspace{2cm}

\textbf{Example 4:} Let
\begin{equation}
	P_1 = (\phi^*\phi)^2 \;, \qquad P_2 = \phi^*(\phi)^2 \;, \qquad \text{and}\qquad  P_3 =   (\phi^*)^2\phi \; .
\end{equation}
This time we have three field polynomials, but none of them contains derivatives. We get 
\begin{align}
	\partial_{\mu}^y\; t &\big((\phi^*\phi)^2, \phi^*(\phi)^2 ,(\phi^*)^2\phi, j^{\mu}\big) (x_1-y,x_2-y,x_3-y) \notag\\
	&=  \hbar \; \Big( \delta(y-x_2) - \delta(y-x_3) \Big) \cdot 
	t  \big((\phi^*\phi)^2,\phi^*(\phi)^2,(\phi^*)^2\phi\big) (x_1-x_3,x_2-x_3) \; ,
\end{align}
which when calculated can be expressed as

\newpage 

\begin{align}
	\notag\\
	i \partial_{\mu}^y \; 
	\Bigg(
	\; - \;
	2 \quad &
	\parbox[b][0.050\linewidth][c]{0.2\linewidth}{
		\includegraphics[width=\linewidth]{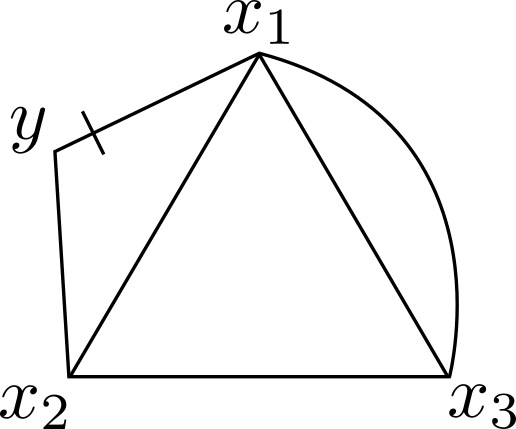}
	}
	\quad + \;
	2 \quad 
	\parbox[b][0.050\linewidth][c]{0.2\linewidth}{
		\includegraphics[width=\linewidth]{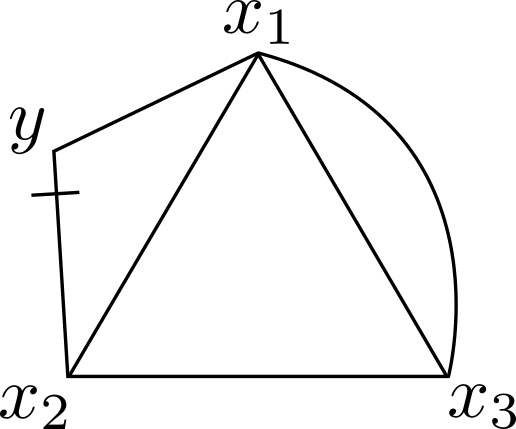}
	}
	\quad - \quad
	\parbox[t][0.002\linewidth][c]{0.2\linewidth}{
		\includegraphics[width=\linewidth]{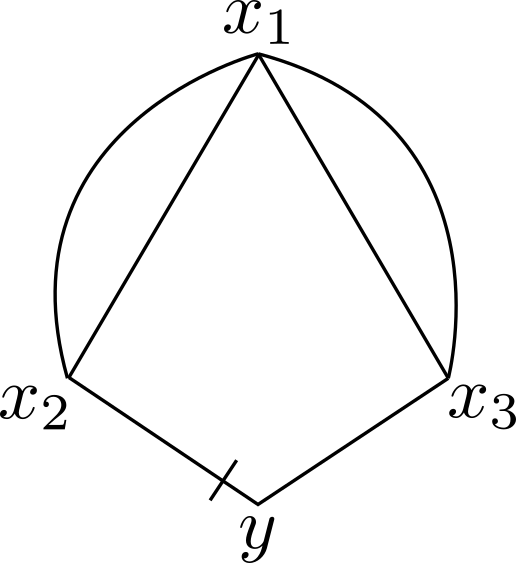}
	}
	\notag\\\notag\\\notag\\\notag\\\notag\\
	\quad + \quad&
	\parbox[t][0.002\linewidth][c]{0.2\linewidth}{
		\includegraphics[width=\linewidth]{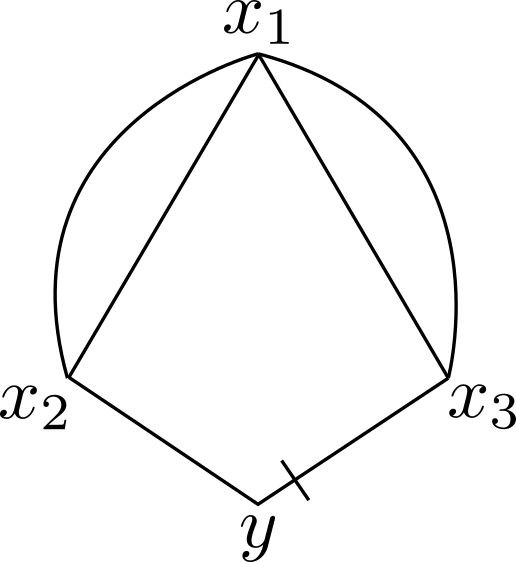}
	}
	\quad - \quad
	2
	\parbox[b][0.050\linewidth][c]{0.2\linewidth}{
		\includegraphics[width=\linewidth]{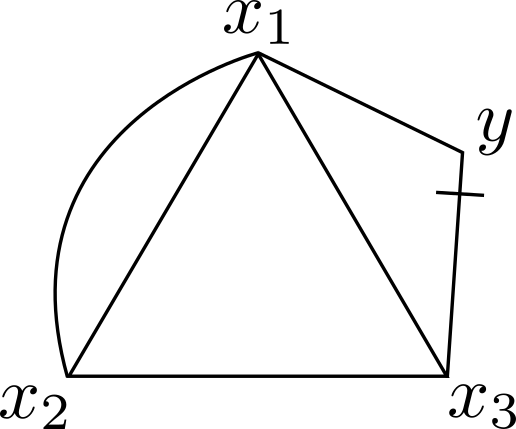}
	}
	\quad + \quad
	2
	\parbox[b][0.050\linewidth][c]{0.2\linewidth}{
		\includegraphics[width=\linewidth]{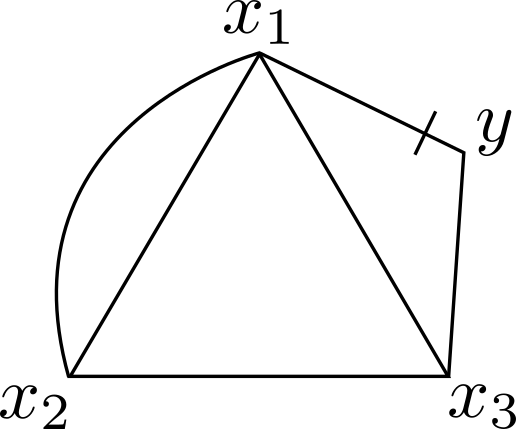}
	}
	\quad\Bigg)
	\notag\\\notag\\\notag\\\notag\\\notag\\
	 = \;\qquad 
	& \delta(y-x_2) \;\cdot\; 5 \;
	\parbox[b][0.035\linewidth][c]{0.2\linewidth}{
		\includegraphics[width=\linewidth]{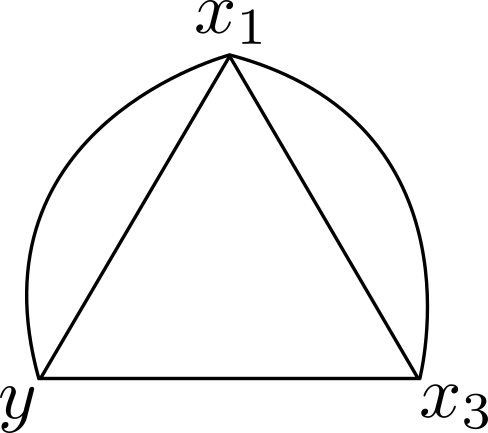}
	} 
	- \quad
	\delta(y-x_3) \;\cdot\; 5 \, 
	\parbox[b][0.035\linewidth][c]{0.2\linewidth}{
		\includegraphics[width=\linewidth]{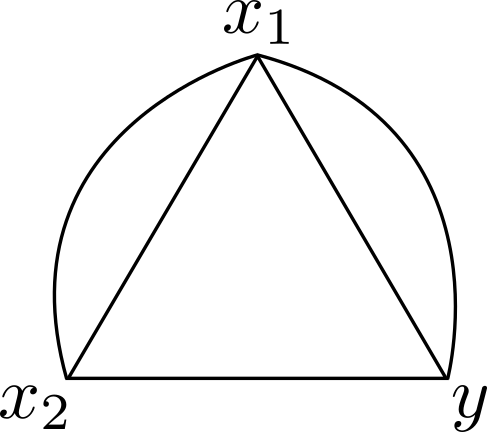}
	} \quad . 
	\notag
\end{align}

\vspace{2cm}

\begin{remark}
	The relations calculated in this section can be checked for the massless case at non-coinciding points by using $\dalemb \Delta_F(x) = -i \delta(x)$ for $x \neq 0$ and $m=0$. 
\end{remark}

	\addchap{Conclusion}
	\clearpairofpagestyles
\ohead{Conclusion}
\KOMAoptions{
	headsepline = true}
\cfoot{\pagemark}

The goal of this thesis was to prove that the MWI for the complex scalar field with quartic interaction can be satisfied to all orders of the perturbative expansion. This was achieved for a certain class of arguments of the involved $T$-products, namely the interaction $L$, the current $j^{\mu}$ and all their submonomials, which are enough to express all the $T$-products of the form $T(L,\dots,L,j\dots,j)$ via their causal Wick expansion. These, in turn, are all the $T$-products needed to establish the preservation of symmetries in the presence of interactions of degrees up to the quartic one that do not contain derivatives. Unexpectedly, a new term which has the form of a total divergence appears in the scalar MWI. This term arises when we allow fields as arguments that contain derivatives of the basic fields. This is the case for the current $j^{\mu}$ which is of physical interest, hence the additional term has to be taken into account. Our original contribution is to carry this term through the steps of the proof when adapting the QED case in chapter \ref{chap:proof} and to make the necessary modifications. In the last step of the proof, a case distinction has to be performed which deviates essentially from the QED case. \\

The framework of deformation quantization combined with causal perturbation theory gives a clear account of the relations between the classical and the quantum theory. It furthermore makes precise the notion of renormalization, which can be performed as the extension of distributions without having to deal with divergent quantities. By formulating symmetries through the MWI it is also possible to clarify the relation between quantum and classical symmetries. Introducing this formalism allowed us to show that the relevant symmetries described by the MWI for the classical complex scalar field carry over into the quantum world, giving a proper proof comprising well defined objects. Nonetheless our approach is a perturbative one, involving formal power series without a notion of convergence. This allows to deal with the scalar field with quartic interactions, which is subject to non-linear field equations. \\

A way to further investigate on the complex scalar MWI could be to consider higher order interactions than the quartic one, which would lead to more possible cases in the last step of the proof of our main theorem. It is not clear whether at some stage anomalies occur that can no longer be removed by an admissible renormalization. Another line of inquiry would be to consider quantum field theories on gravitational backgrounds, that is on a curved spacetime. Causal perturbation theory can be made to work on a certain class of curved spacetimes (see e.g \cite{fred00}), so our framework would carry over. However, the translation invariance of VEVs would get lost, on which our proof of the MWI heavily relied on. So to compensate for this, one would have to look for major modifications in essential parts of the proof. \\

Finally, the result of this thesis can be used to prove gauge invariance of scalar QED, which amounts to additionally considering the photon field $A^{\mu}$ and a new kind of interaction. However, the conserved current of scalar QED is the same as for the complex scalar case, so fields involving first derivatives need to be considered as arguments of the $T$-products involved in the relevant MWI. This gauge invariance of scalar QED will be proven in an upcoming paper together with Michael Dütsch and Karl-Henning Rehren.  \\

\newpage

	\addchap{Appendix}
	\appendix

\clearpairofpagestyles
\ohead{Appendix}
\KOMAoptions{
	headsepline = true}
\cfoot{\pagemark}

\section*{A.1 Minkowski space}

We use the standard definitions for Minkowski space and the light cones. 

\begin{manualdefinition}{A.1.1}
Let $\mathbb{M}= \mathbb{M}_d$ where $d>2$ be the $d$-dimensional Minkowski space with metric 
\begin{equation*}
	g = \text{diag}(+,-\dots,-) \; .
\end{equation*}
The forward and backward light cones are defined as 
\begin{equation*}
	V_+ := \{x \in \mathbb{M} \; | \; x^2 > 0, x^0 > 0\} \;, \quad 
	V_- := \{x \in \mathbb{M} \; | \; x^2 > 0, x^0 < 0\} \;. 
\end{equation*}
Let furthermore $\mathscr{L}_+^{\uparrow}$ be the proper, ortochronous Lorentz group and $\mathscr{P}_+^{\uparrow}$ the corresponding Poincaré group. 
\end{manualdefinition}

\section*{A.2 Vector spaces of formal power series}

\renewcommand{\theequation}{A.2.1}

\begin{manualdefinition}{A.2.1}
	\label{def:formal_po}
	Let $\mathscr{V}$ be a vector space and $\lambda \in  \mathbb{R} / \{0\}$. The vector space of \textit{formal power series} in $\lambda$ with coefficients in $\mathscr{V}$ is the set 
	\begin{equation}
		\mathscr{V}[\![\hbar]\!] := 
		\big\{ V \equiv \sum_{n=0}^{\infty} \left.  V_n \lambda^n \equiv (V_n)_{n\in\mathbb{N}}  \; \right\vert \; V_n \in \mathscr{V}
		\big\}
	\end{equation}
	with the addition and scalar multiplication 
	\begin{equation*}
		(V+cW)_n := V_n + cW_n \; ,\quad c \in\mathbb{C} \; .
	\end{equation*}
	If $\mathscr{V}$ is a unital $*$-algebra, then $\mathscr{V}[\![\hbar]\!]$ is also by using the obvious definitions for the multiplication and the $*$-operation. 
\end{manualdefinition}
So the infinite sum used when writing elements of spaces of formal power series is merely a notational convention. The elements of such spaces should -- from a mathematical point of view -- be considered to be sequences, not series.

\section*{A.3 Wave front sets and products of distributions}

The idea of wave front sets is to use the characterization of smoothness of functions by the fast decrease of their Fourier transform to describe the directions in which the singularities of a distribution are localized. It is introduced by the following definitions, which we take from \cite{brou14}. 

\begin{manualdefinition}{A.3.1}
	A \textit{conical neighbourhood} of a point $k \in \mathbb{R}^n / \{0\}$ is a set $V \subset \mathbb{R}^n$ such that $V$ contains the ball $B_{\epsilon}(k):= \{q\in\mathbb{R}^n \; | \; |q-k| < \epsilon\}$ for some $\epsilon>0$ and, for all $p\in V$ and all $\alpha >0$, $\alpha\cdot p \in V$. 
\end{manualdefinition}

\begin{manualdefinition}{A.3.2}
	A smooth function $g \in \mathcal{C}^{\infty}(\mathbb{R}^n)$ is said to be \textit{fast decreasing} on a conical neighbourhood $V$ if, for any $N \in \mathbb{N}$, there is a constant $C_N$ such that $|g(q)| \leq C_N (1+|q|)^{-N}$ for all $q\in V$. 
\end{manualdefinition}

\begin{manualdefinition}{A.3.3}
	For a distribution $u \in \mathscr{D}'(\mathbb{R}^n)$, a point $(x,k)\in \mathbb{R}^n \times (\mathbb{R}^n/\{0\})$ is called a \textit{regular directed point} of $u$ if and only if there exists 
	\begin{itemize}
		\item a function $f \in \mathscr{D}(\mathbb{R}^n)$ with $f(x)=1$ and 
		\item a closed conical neighbourhoof $V \in \mathbb{R}^n$ of $k$, such that $\widehat{fu}$ is fast decreasing on $V$. 
	\end{itemize}
\end{manualdefinition}

\begin{manualdefinition}{A.3.4}
	\label{def:wvs}
	The \textit{wave front set} of a distribution $u \in \mathscr{D}'(\mathbb{R}^n)$ is the set, denoted by $\text{WF}(u)$, of points $(x,k) \in \mathbb{R}^n \times (\mathbb{R}^n/\{0\})$ which are not regular directed for $u$. 
\end{manualdefinition}

This set can be used to give conditions on when the pointwise product of distributions can be meaningfully defined. 

\begin{manualtheorem}{A.3.5}
	\label{thm:hoerm}
	Let $u$ and $v$ be distributions in $\mathscr{D}'(\mathbb{R}^n)$. Assume that there is no point $(x,k)$ in $\text{WF}(u)$ such that $(x,-k)$ belongs to $\text{WF}(v)$. Then the product $uv$ can be defined as the pullback of their tensor product along the diagonal map $D: \mathbb{R}^n \rightarrow \mathbb{R}^n \times \mathbb{R}^n$ \normalfont{:}
	\begin{equation}
		u \cdot v := D^*(u\otimes v) \; .
	\end{equation}
\end{manualtheorem}

	\newpage

	\printbibliography[
	heading=bibintoc,
	title={Bibliography}
	]

\end{document}